\def\longrightharpoonup{\relbar\joinrel\rightharpoonup}
\def\longleftharpoondown{\leftharpoondown\joinrel\relbar}
\def\longrightleftharpoons{\mathop{\vcenter{\hbox{\ooalign{\raise1pt\hbox{$\longrightharpoonup\joinrel$}\crcr\lower1pt\hbox{$\longleftharpoondown\joinrel$}}}}}}
\def\rxn{\mathop{\rightarrow}\limits}  
\def\revrxn{\mathop{\rightleftharpoons}\limits}
\renewcommand{\vec}[1]{\mathbf{#1}}
\newcommand{\vx}{\vec{x}}
\newcommand{\vc}{\vec{c}}
\newcommand{\vd}{\vec{d}}
\newcommand{\vr}{\vec{r}}
\newcommand{\vo}{\vec{o}}
\newcommand{\vM}{\vec{M}}
\newcommand{\vA}{\vec{A}}
\newcommand{\vp}{\vec{p}}
\newcommand{\vi}{\vec{i}}
\newcommand{\vy}{\vec{y}}
\newcommand{\vz}{\vec{z}}
\newcommand{\vu}{\vec{u}}
\newcommand{\vb}{\vec{b}}
\newcommand{\R}{\mathbb{R}}
\newcommand{\Q}{\mathbb{Q}}
\newcommand{\Z}{\mathbb{Z}}
\newcommand{\N}{\mathbb{N}}
\newcommand{\pars}[1]{\left(#1\right)}
\newcommand{\Rp}{\R_{\geq 0}}
\newcommand{\vtau}{{\boldsymbol\tau}}
\newcommand{\ky}{k_{\mathrm{y}}}
\newcommand{\kn}{k_{\mathrm{n}}}
\newcommand{\Vy}{V_\mathrm{y}}
\newcommand{\Vn}{V_\mathrm{n}}
\def\longrightharpoonup{\relbar\joinrel\rightharpoonup}
\def\longleftharpoondown{\leftharpoondown\joinrel\relbar}
\def\longrightleftharpoons{\mathop{\vcenter{\hbox{\ooalign{\raise1pt\hbox{$\longrightharpoonup\joinrel$}\crcr\lower1pt\hbox{$\longleftharpoondown\joinrel$}}}}}}
\def\rxn{\mathop{\rightarrow}\limits}  
\def\revrxn{\mathop{\rightleftharpoons}\limits}
\newcommand{\calC}{\mathcal{C}}
\newcommand{\calD}{\mathcal{D}}
\newcommand{\segto}{\rightsquigarrow}  
\newcommand{\slto}{\to^1}      
\renewcommand{\epsilon}{\varepsilon}
\newcommand{\domR}{\R_{\ge {0}}} 
\newcommand{\yesVotes}{\Upsilon_{\mathrm{yes}}}
\newcommand{\noVotes}{\Upsilon_{\mathrm{no}}}
\newcommand{\vyy}{\ensuremath{V_\mathrm{yy}}}
\newcommand{\vyn}{\ensuremath{V_\mathrm{yn}}}
\newcommand{\vny}{\ensuremath{V_\mathrm{ny}}}
\newcommand{\vnn}{\ensuremath{V_\mathrm{nn}}}
\newcommand{\vm}{\ensuremath{V_\mathrm{m}}}
\newcommand{\Sy}{\ensuremath{Y}}
\definecolor{codegreen}{rgb}{0,0.6,0}
\definecolor{codegray}{rgb}{0.5,0.5,0.5}
\definecolor{codepurple}{rgb}{0.58,0,0.82}
\definecolor{backcolour}{rgb}{0.95,0.95,0.92}
\ttfamily\color{blue!90!black},
\ttfamily\color{red!80!black},
\crefname{corollary}{corollary}{corollaries}
\title{Robust predicate and function computation in continuous chemical reaction networks}
\author{Kim {Calabrese}}{University of California--Davis, Davis, CA, USA \and \url{TO DO}}{Kim Calabrese <ebcalabrese@ucdavis.edu>}
{ORCID TO DO}
{NSF awards 2211793, 1900931.}
\author{David {Doty}}
{University of California--Davis, Davis, CA, USA \and \url{https://web.cs.ucdavis.edu/~doty/}}
{doty@ucdavis.edu}
{https://orcid.org/0000-0002-3922-172X}
{NSF awards 2211793, 1900931 and DoE award DE-SC0024467.}
\author{Mina {Latifi}}{University of California--Davis, Davis, CA, USA \and \url{https://www.linkedin.com/in/mina-latifi/}}{milatifi@ucdavis.edu}
{https://orcid.org/0009-0002-0116-0519}
{NSF awards 2211793, 1900931 and DoE award DE-SC0024467.}
\authorrunning{K. Calabrese, D. Doty, and M. Latifi} 
\keywords{chemical reaction networks, analog computation, mean-field limit}
\date{}
\begin{document}

\maketitle
\begin{abstract}
We initiate the study of ``rate-constant-independent'' computation of Boolean predicates (decision problems) 
and numerical functions
in the continuous model of chemical reaction networks (CRNs),
which model the amount of a chemical species as a nonnegative, real-valued \emph{concentration}, representing an average count per unit volume.
Real-valued numerical functions have previously been studied~\cite{chen2023rate},
finding that exactly the continuous, piecewise rational linear (meaning linear with rational slopes) functions $f: \R_{> 0}^k \to \R_{> 0}$ can be computed \emph{stably} (a.k.a., \emph{rate-independently}),
meaning roughly that the CRN gets the answer correct no matter the rate at which reactions occur.
For example the reactions $X_1 \to Y$ and $X_2+Y \to \emptyset$, 
starting with inputs $X_1 \ge X_2$, 
converge to output $Y$ having concentration equal to the initial difference of inputs $X_1 - X_2$, no matter the relative rate at which each reaction proceeds.

We first show that, contrary to the case of real-valued functions,
continuous CRNs are severely limited in the Boolean predicates they can stably decide, reporting a yes/no answer based only on which inputs are 0 or positive, but not on the exact positive value of any input.

This limitation motivates a slightly relaxed notion of rate-independent computation in CRNs that we call \emph{robust computation}.
The standard mass-action rate model is used,
in which each reaction (e.g., 
$A+B \mathop{\rightarrow}\limits^k C$)
is assigned a \emph{rate} ($A \cdot B \cdot k$ in this example) equal to the product of its reactant concentrations and its \emph{rate constant} $k$.
We say the computation is correct in this model if it converges to the correct output for \emph{any} positive choice of rate constants.
This adversary is weaker than the adversary defining stable computation, the latter being able to run reactions at rates that are not those of mass-action for any choice of rate constants 
(e.g., the stable adversary may deactivate a reaction temporarily, even if all reactants are positive).

We show that CRNs can robustly decide every predicate that is a finite Boolean combination of \emph{threshold predicates}, where a threshold predicate is defined by taking a rational weighted sum of the inputs $\vec{x} \in \R^k_{\ge 0}$ and comparing to a constant, answering the question ``Is $\sum_{i=1}^k w_i \cdot \vec{x}(i) > h$?'', for rational weights $w_i$ and real threshold $h$.
Turning to function computation, we show that CRNs can robustly compute 
any piecewise affine function with rational coefficients,
where threshold predicates determine which affine piece to evaluate for a given input $\vec{x}$.
\end{abstract}

\clearpage
\pagenumbering{arabic}

\section{Introduction}
\label{sec:intro}

A \emph{chemical reaction network} (CRN) is a model of reactions among abstract chemical species, such as the reaction $X+Y \rxn Z$, which indicates that if an $X$ and a $Y$ molecule collide, they can stick together to form a dimer called $Z$.
Since the 19th century~\cite{guldberg1864studies,waage1986studies}, this model has been used to describe and predict the behavior of naturally occurring chemicals.
It was not until the 21st century, however, that the model was repurposed as a \emph{programming} language~\cite{soloveichik2008computation} for describing the desired behavior of synthetically engineering chemicals such as DNA strand displacement systems, both in theory~\cite{soloveichik2010dna} and practice~\cite{chen2013programmable, srinivas2017enzyme}.

CRNs are related to a distributed computing model known as \emph{population protocols}~\cite{AngluinADFP2006},
describing anonymous finite-state agents (molecules) that interact (react) asynchronously in pairs (bimolecular reactions), changing state (chemical species) in response.
Formally, a population protocol is a CRN in which each reaction has exactly two reactants (inputs) and two products (outputs), with unit \emph{rate constants}
(constant multiplier on a reaction rate; see below for definitions.)
Although population protocols are a special case of CRNs, nevertheless many computations achievable with more general CRNs can be simulated exactly by population protocols,
and conversely most impossibility results on population protocols 
also apply to the more general CRN model.

\emph{Discrete} (a.k.a., \emph{stochastic}) CRNs
model the amount of each species as a nonnegative integer representing its exact molecular count~\cite{gillespie1977exact}.
When the number of molecules is very large (a typical DNA nanotechnology experiment may involve over a trillion molecules in a 50 $\mu$L test tube),
CRNs are well-approximated by what is sometimes called the \emph{mean-field limit}, 
the \emph{continuous mass-action} model,
and very commonly
``\emph{the deterministic model}'' 
to contrast with the stochastic nature of the discrete model.
Rather, in the continuous model,
each species has a 
nonnegative, real-valued \emph{concentration} indicating its average count per unit volume.\footnote{
    This continuous mean-field limit has been explored within population protocols as well~\cite{bournez2009on,aupy2011on,bournez2012computing}.
}
In this setting a reaction such as $A+B \rxn^k 2C$ indicates that some real-valued amount of reactants $A$ and $B$ are converted into product species $C$, 
e.g., the reaction could turn $1/3$ units of $A$, 
and the same amount of $B$,
into $2/3$ units of $C$. 
The positive real number $k$ is called a \emph{rate constant};
each reaction has a \emph{rate} equal to the product of the reactant concentrations and $k$: rate $k \cdot A \cdot B$ in this example.
The rates define a system of polynomial ordinary differential equations (ODEs) whose unique solution trajectory\footnote{
    Polynomials are locally Lipschitz, so by the Picard-Lindel\"{o}f Theorem, the ODEs have a \emph{unique} solution; 
    hence the term ``the deterministic model''.
}
indicates how concentrations change over time,
with reactions contributing a positive term to $dA/dt$
\todo{ML: change this and all other instants of $dA/dt$ to $A'$.}
whenever $A$ appears as a product and a negative term when $A$ is a reactant;
see \Cref{sec:prelim:robust} for a formal definition.

\subsection{Computation with CRNs: Related work}
The discrete, integer-valued CRN model has been far more extensively studied for its computational abilities than the continuous model,
so we briefly review what is known for the discrete model.
The two most studied types of computation are Boolean-valued predicates $\phi:\N^k \to \{0,1\}$~\cite{angluin2006stably, angluin2007computational} 
and numeric functions $f:\N^k \to \N$~\cite{CheDotSolNaCo,DotHajLDCRNNaCo}.
We designate special \emph{input species} $X_1,\dots,X_k$ whose initial counts represent the inputs to $\phi$ or $f$.
These may be the only species present initially (``leaderless'' CRNs~\cite{AngluinADFP2006,angluin2007computational,DotHajLDCRNNaCo}), or there may be additional species present initially with counts independent of the inputs (``leader-driven''~\cite{angluin2008fast,CheDotSolNaCo}).
For a function $f:\N^k \to \N$, the count of a special ``output'' species $Y$ represents the output of $f$.
For a predicate $\phi:\N^k \to \{0,1\}$, some species are designated as ``yes'' voters and some as ``no'' voters;
the CRN's output is undefined if voters for both outputs (or neither) are present,
otherwise the output is yes or no depending on which type of voter is present.

Much study has been devoted to \emph{stable} (a.k.a., \emph{rate-independent}) computation,
which means intuitively that the CRN generates the correct output no matter in what order reactions proceed.
In the discrete stochastic model, rates influence the probability of choosing among several competing reactions to be the next reaction to occur,
so rate-independent computation means essentially that,
despite the inherent stochasticity of the model,
nevertheless the correct output is generated with probability 1.\footnote{
    There are caveats to this, unnecessary to understand this paper; see~\cite{p1ccrnJournal} for a thorough discussion.
}
Experimentally, 
rate constants are often difficult to control precisely,
sometimes with a 10-fold difference in rate constants intended to be equal~\cite{srinivas2017enzyme},
whereas reaction stoichiometry (i.e., how many of each reactant is consumed, and how many of each product is produced) is naturally digital and simpler to engineer exactly.
Furthermore there may be violations of the assumptions justifying typical rate laws
(for example the solution may not be well-mixed)
leading to actual reaction rates deviating from predictions.
Whether leaderless or not, exactly the \emph{semilinear} predicates~\cite{angluin2006stably} and functions~\cite{CheDotSolNaCo,DotHajLDCRNNaCo} can be stably computed by discrete CRNs.
See~\cite{angluin2006stably, angluin2007computational,CheDotSolNaCo,DotHajLDCRNNaCo} for formal definitions of these concepts, which are not required to understand this paper.

Studying the (naturally deterministic) continuous model, 
we focus on the ``rate-independent'' characterization of stable computation rather than on probability.
Intuitively we want to capture the idea that the CRN generates the correct output ``no matter the reaction rates''.
In contrast, if rate constants
can be controlled, then continuous CRNs are Turing universal~\cite{fages2017strong}, a consequence of the surprising computational power of polynomial ODEs~\cite{bournez2017polynomial}.

Prior work has completely characterized the real-valued functions $f:\Rp^k \to \Rp$ stably computable by continuous CRNs~\cite{chen2023rate}.
This definition of stable computation is based on a very general notion of reachability, defined formally in \Cref{sec:prelim:stable}.
A function $f$ is stably computable by a continuous CRN if and only if it is piecewise rational linear (meaning a finite union of functions that are linear with rational coefficients, e.g., $\frac{4}{3}x_1 - 5x_2$),
and \emph{positive-continuous},
meaning intuitively that discontinuities can occur only when some input $x_i$ goes from 0 to positive 
(\Cref{defn:positive-continious});
for example the function $f(x_1,x_2) = x_1+x_2$ if $x_2 > 0$, and $f(x_1,x_2) = 3 x_1$ if $x_2=0$,
is positive-continuous but not continuous.

\subsection{Our results}
\label{sec:intro-our-results}
An open question from~\cite{chen2023rate} concerns stable computation of decision problems, a.k.a., Boolean predicates $\phi: \Rp^k \to \{0,1\}$.
In contrast to functions, we show that the stably decidable predicates are severely limited: 
exactly the \emph{detection} predicates can be stably decided (\Cref{thm:stable-detection-predicates}),
those that depend only on whether certain inputs are 0 or positive (\Cref{def:detection-predicate}), 
but not on their exact positive value.
This limitation prompts us to relax the notion of stable computation as suggested by another open question in~\cite{chen2023rate}.
We call the concept \emph{robust computation},
which uses mass-action rate laws
(see \Cref{sec:prelim:robust}),
limited compared to the adversarial rates used in stable computation,
but requires the CRN to converge to the correct output for \emph{every} choice of positive rate constants
(\Cref{def:robustly-decide,def:robustly-compute}).

Our first main result 
(\Cref{thm:multi-threshold-robustly-decidable})
shows that every finite Boolean combination of \emph{threshold predicates} (\Cref{def:threshold-predicate})
is robustly decidable, 
where a \emph{threshold predicate} $\phi(\vx)=1$ if and only if $\sum_{i=1}^k w_i \cdot \vx(i) > h$
for rational constants $w_i \in \Q$ and real constant $h \in \R$.
Intuitively, there are a finite number of hyperplanes,
which cut $\Rp^k$ into a finite number of regions,
and $\phi(\vx)$ depends only on which region $\vx$ is in.

Our second main result 
(\Cref{thm:piecewise-affine-are-robustly-computable})
shows that every \emph{threshold-piecewise rational floor-affine function}
(\Cref{def:piecewise-affine})
can be stably computed,
which are finite unions of ``floor-affine'' components $f_i:\Rp^k \to \Rp$,
where 
a threshold predicate $\phi(\vx)$ indicates whether $f_i$ is the correct affine function to use for input $\vx$ (see \Cref{sec:robust_function_computation}). 
\emph{Floor}-affine means negative outputs are replaced with 0
(\Cref{def:rational-affine}, 
necessary since concentrations are nonnegative),
i.e., $f_i(\vx) = \max(0, g(\vx))$ for some affine $g: \Rp^k \to \R$.
Compared to stable function computation~\cite{chen2023rate},
such functions can be discontinuous even within the strictly positive orthant;
e.g., 
$f(x_1,x_2) = x_1+x_2+2$ if $x_1 < x_2$ and $\max(0, x_1/3 - x_2)$
if $x_1 \ge x_2$.

Threshold predicates and affine functions allow a constant ``offset'', unlike linear functions studied in~\cite{chen2023rate}.
This is because unlike~\cite{chen2023rate},
we allow a notion similar to ``leaders'' in discrete distributed computing models,
called \emph{initial context}:
some non-input species may be present initially, but their initial concentrations do not depend on the input values.
Such initial context can be used to implement these constant offsets.\footnote{
    As noted in \cite[Section 6.2]{chen2023rate},
    allowing initial context in stable computation leads to replacing ``linear'' with ``affine'' in the characterization.
}
We believe our constructions extend naturally to the leaderless setting,
and would limit threshold predicates to comparing to $h=0$ instead of arbitrary constant $h \in \R$,
and limit functions to piecewise \emph{linear} 
(rather than affine),
but we have not explored this in detail.

The high-level goals of stable and robust computation are the same,
which is to formalize a notion of computation in an adversarial environment where reaction rates may deviate from standard models.
However,
robust computation,
being based on ODEs in mass-action kinetics,
requires vastly different (and typically more sophisticated) techniques to reason about.
One goal of this paper is to begin establishing general techniques to prove correctness of such systems.
For example,
\Cref{lem:linear-odes,lem:linear-odes-zero} in 
\Cref{sec:robust_predicate_computation}
are general lemmas that we use repeatedly to reason ``modularly'' when the output of an ``upstream'' CRN $U$ influences a ``downstream'' CRN $D$ in one direction only,
i.e.,
$D$ does not influence $U$.




\section{Preliminaries}
\label{sec:prelim}

Let $\N$ denote the nonnegative integers, $\Q$ the rationals, and $\R$ the reals.
For any set $A \subseteq \R$, 
$A_{\ge 0}=A \cap [0,\infty)$
and
$A_{> 0}=A \cap (0,\infty)$.
Given a finite set $F$ and a set $S$,
let $S^F$ denote the set of functions $\vc: F \to S$.
In the case of $S = \R$ (resp., $\N$), we view $\vc$ equivalently as a real-valued (resp., integer-valued) vector indexed by elements of $F$.
Given $a \in F$, we write $\vc(a)$, to denote the real number indexed by $a$.
The notation $\Rp^F$ is defined similarly for nonnegative real vectors.
We view such vectors equivalent as multisets,
e.g., the multiset $\vc = \{A,A,B,C,C,C\}$ where 
$\vc(A)=2, \vc(B)=1, \vc(C)=3, \vc(x)=0$ for all $x \in F \setminus \{A,B,C\}$.
For a function of time $A:\Rp \to \R$ and $c \in \R \cup \{\infty\}$, we write $A \to c$ to denote $\lim_{t\to\infty} A(t) = c$; 
most frequently, $A$ will be a chemical species concentration, or some function thereof.

\begin{definition}
\label{def:rational-affine}
$f:\R^k \to \R$ is \emph{rational affine} if for some rational $w_1,\dots,w_k \in \Q$ and real $h \in \R$,
for all $\vx \in \R^k$,
$f(\vx) = h+\sum_{i=1}^k w_i \cdot \vx(i)$.\footnote{
    It would seem more natural to require $h$ also to be rational, 
    or conversely to allow each $w_i$ to be real, but this distinction is relevant with CRNs:
    $h$ will come from the initial concentration of some species, or its negation, whereas the $w_i$'s will come from ratios of integer stoichiometric reaction coefficients.
    Similar reasoning applies to \Cref{def:threshold-predicate}.
}
If $h=0$ we say $f$ is \emph{rational linear}.
$f$ is \emph{rational floor-affine} if $f(\vx) = \max(0,g(\vx))$ for some rational affine $g$.\footnote{
    The significance of rational floor-affine functions is that CRNs can only output nonnegative concentrations;
    reactions such as $X_1 \rxn Y$ and $X_2 + Y \to \emptyset$ technically compute, not the affine function $g(x_1,x_2) = x_1-x_2$,
    but the floor-affine function $f(x_1,x_2) = \max(0, g(x_1,x_2))$. 
}
\end{definition}

The next two definitions capture the class of predicates and functions we show are robustly computable by continuous CRNs in our main results, \Cref{thm:multi-threshold-robustly-decidable,thm:piecewise-affine-are-robustly-computable}.

\begin{definition}
\label{def:threshold-predicate}
    $\phi : \R^k \to \{0,1\}$ is a \emph{threshold predicate}
    with rational \emph{weights} $w_1,\ldots,w_k \in \Q$ and real \emph{threshold} $h \in \R$ if, for all $\vx \in \Rp^k$,
    $\phi(\vx) = 1 \iff \sum_{i=1}^k w_i\cdot \vx(i) > h$.
    We say $\phi$ is a \emph{multi-threshold predicate} if it is a finite Boolean combination of threshold predicates.
\end{definition}

\begin{definition}
\label{def:piecewise-affine}
    $f:\Rp^k \to \Rp$ is \emph{threshold-piecewise rational floor-affine}
    if there is a finite set of rational floor-affine functions 
    $f_1,\dots,f_l: \Rp^k \to \Rp$,
    known as the \emph{affine components} of $f$,
    and multi-threshold predicates 
    $\phi_1,\dots,\phi_l:\Rp^k \to \{0,1\}$ such that
    \begin{enumerate}
    \item 
        The sets 
        $\phi_1^{-1}(1),\dots,\phi_l^{-1}(1)$ are a partition of $\Rp^k$,
        i.e., for each $\vx \in \Rp^k$,
        $\phi_i(\vx)=1$
        for exactly one $1 \leq i \leq l$.
        
    \item 
        For each $i$ and $\vx \in \Rp^k$,
        if $\phi_i(\vx)=1$ then $f(\vx) = f_i(\vx)$,
        i.e., $\phi_i(\vx)$ indicates whether $f_i$ is the correct affine component defining $f(\vx)$.
    \end{enumerate}
\end{definition}

\subsection{Chemical reaction networks}

Throughout this paper, let $\Lambda$ be a finite set of chemical \emph{species}.
Given $S\in \Lambda$ and \emph{state} $\vc \in \Rp^\Lambda$,
$\vc(S)$ is the \emph{concentration of $S$ in $\vc$}.
For any $\vc\in \Rp^\Lambda$, let $[\vc] = \{S \in \Lambda \ |\ \vc(S) > 0 \}$, the set of species \emph{present} in $\vc$
(a.k.a., the \emph{support} of $\vc$).
We write $\vc \leq \vc'$ to denote that $\vc(S) \leq \vc'(S)$ for all $S \in \Lambda$.
Given $\vc,\vc' \in \Rp^\Lambda$, we define the vector component-wise operations of addition $\vc+\vc'$, subtraction $\vc-\vc'$, and scalar multiplication $x \vc$ for $x \in \R$. 

A \emph{reaction} over $\Lambda$ is a triple $\alpha = (\vr,\vp,k) \in \N^\Lambda \times \N^\Lambda \times \R_{> 0}$, 
such that $\vr \neq \vp$,
specifying the stoichiometry of the \emph{reactants} $\vr$ and \emph{products} $\vp$,
and the \emph{rate constant} $k$.
For instance, given $\Lambda=\{A,B,C,D\}$, the reaction $A+2B \rxn^{6.7} A+3C$ is the triple $({(1,2,0,0)},{(1,0,3,0)}, 6.7).$ 

A \emph{chemical reaction network (CRN)} is a pair $\calC=(\Lambda,R)$, where $\Lambda$ is a finite set of chemical \emph{species},
and $R$ is a finite set of reactions over $\Lambda$.
A \emph{state} of a CRN $\calC=(\Lambda,R)$ is a vector $\vc \in \Rp^\Lambda$.
Given a state $\vc$ and reaction $\alpha = (\vr,\vp,k)$, we say that $\alpha$ is \emph{applicable} in $\vc$ if $[\vr] \subseteq [\vc]$ (i.e., $\vc$ contains positive concentration of all of the reactants).
If no reaction is applicable in state $\vc$, we say $\vc$ is \emph{static}. 

The next two definitions are ``syntactic'' preparation for stating how a CRN can compute a predicate or function;
the ``semantic'' definitions of stable 
(\Cref{def:stably-decide,def:stably-compute}) 
and robust 
(\Cref{def:robustly-decide,def:robustly-compute}) 
computation will use these definitions to state under what conditions a CRN ``correctly'' computes.
The first definition is for Boolean predicates.

\begin{definition}
\label{def:crd}
A \emph{chemical reaction decider} (CRD) is a tuple $\mathcal{D} = (\Lambda,R,\Sigma,\yesVotes, \noVotes,\vi)$ where $(\Lambda,R)$ is a CRN, $\Sigma \subseteq \Lambda$ is the set of \emph{input species}, $\yesVotes \subseteq \Lambda$ is the set of \emph{yes voters}, $\noVotes \subseteq \Lambda \backslash \yesVotes$ is the set of \emph{no voters}, and 
$\vi \in \Rp^{\Lambda \setminus \Sigma}$ 
is the \emph{initial context}.
\end{definition}

Intuitively initial context $\vi$ refers to fixed initial concentrations for non-input species, independent of the input value.
A CRD's initial state for predicate input $\vx \in \Rp^{k}$ is then $\vx + \vi$,
where we assume some fixed ordering $X_1,\dots,X_k$ of input species to interpret a vector $\vx \in \Rp^k$ as a state $\vx \in \Rp^\Sigma$.
The next definition is used for computing numeric functions, identifying a special species $Y$ whose concentration represents output:

\begin{definition}
\label{def:crc}
A \emph{chemical reaction computer} (\emph{CRC}) is a tuple $\calC = (\Lambda,R,\Sigma, Y,\vi)$ where $(\Lambda,R)$ is a CRN, 
$\Sigma\subseteq \Lambda$ is the set of \emph{input species}, 
and $Y$ is the \emph{output species},
and
$\vi \in \Rp^{\Lambda \setminus \Sigma}$ 
is the \emph{initial context}.
\end{definition}

\subsection{Robust (rate-\emph{constant}-independent) computation}
\label{sec:prelim:robust}

See \Cref{sec:prelim:stable} for a formal definition of stable computation.
Stable computation requires a CRN to work against a very powerful adversary who can essentially set the rate of each reaction at each time arbitrarily.
This means in particular that the CRN works under a variety of rate laws besides mass-action (defined below).
Here we consider a weaker adversary, one that cannot control the rate law---that will be mass-action---but that \emph{can} set the parameters of the rate law, known as \emph{rate constants}.
Crucially, these are constant with respect to \emph{time}:
the adversary can choose arbitrary positive values for these rate constants, but sets them to those values at time $t=0$,
and the rate constants stay at those values for all future $t > 0$.\footnote{
    If we allowed the adversary to change the rate constants over time, then it could mimic the stable computation adversary by adjusting rate constants so as to target particular absolute rates at each time.
}
A CRN robustly computes a function or predicate if it computes the correct output against this adversary: i.e., if the mass-action rate law converges to the correct output, 
no matter which positive rate constants are chosen.


A CRN $\calC = (\Lambda, R)$
under the \emph{mass-action rate law} is governed by a system of polynomial ordinary differential equations (ODEs) with a variable $S(t)$ representing the concentration of  species $S$ at time $t$. 
The \emph{rate} 
$\rho_t(\alpha)$ of a reaction $\alpha = (\vr,\vp,k)$ at time $t$ is 
\todo{DD: oooooops... This said $S(t)^{\vp(S)}$ in the DISC submission. We'll see if anyone notices}
$\rho_t(\alpha)=k \cdot \prod_{S \in \Lambda} S(t)^{\vr(S)}$,
i.e., the rate constant times each reactant concentration at time $t$.
For example, the rate of $A+2B \rxn^{4.5} C$ is 
$4.5 \cdot A(t) \cdot B(t)^2$.
Each reaction $\alpha=(\vr,\vp,k)$ contributes a term $\rho_t(\alpha) \cdot (\vp(S) - \vr(S))$ to the ODE of each species $S$ that is net produced or consumed;
the term is $\alpha$'s rate $\rho_t(\alpha)$ times the net stoichiometry of $S$ in $\alpha$
(positive if $S$ is net produced by $\alpha$, e.g., $S \rxn 3S$ net produces 2 $S$'s since $\vp(S)-\vr(S)=2$
and negative if $S$ is net consumed, e.g., $2S \rxn S$ net consumes 1 $S$,
i.e., $\vp(S)-\vr(S)=-1$). 
For example, the CRN with reactions
     $A + 2B \rxn^{k_1} 3C$ 
    and $2C \rxn^{k_2} C$
corresponds to the ODEs
\begin{align*}
        A'(t) &= -k_1A(t)B(t)^2\\
        B'(t) &= - 2k_1A(t)B(t)^2\\
        C'(t) &= 3k_1A(t)B(t)^2 -k_2C(t)^2
\end{align*}
Given a CRN $\calC = (\Lambda,R)$ let $\vA : \domR^\Lambda \to \domR^R$ map each state $\vd$ of $\calC$ to the vector $\vA(\vd)$ of instantaneous reaction rates in state $\vd$, as given by the mass-action ODEs. With the example reaction above, the state of concentrations $\vd = (1,2,0)$ would be mapped to the flux vector $\vA(\vd) = (-4k_1,-8k_1,12k_1)$.

\todo{DD: We should emphasize that what is in the paragraph below is not a new definition (in particular equation \eqref{eq:mass_action_IVP} is defining the same thing as the mass-action rate law paragraph above), but simply rephrasing the above to give terms to some objects like $\vtau$. We should put thought into eliminating redundancy between these two paragraphs.}
Define the 
$|\Lambda| \times |R|$ \emph{stoichiometry matrix} $\vM$ such that,
for species $S \in \Lambda$ and reaction $\alpha = (\vr,\vp) \in R$,
 $\vM(S,\alpha) = \vp(S) - \vr(S)$
is the net amount of 
$S$ produced by $\alpha$ (negative if $S$ is consumed).
For example, if we have the reactions $X \to Y$ and $X + A \to 2X + 3Y$, and if the three rows correspond to $A$, $X$, and $Y$, in that order, then
$
    \vM =
    \left(
      \begin{array}{cc}
         0 & -1 \\
        -1 &  1 \\
         1 &  3 \\
      \end{array}
    \right).
$
Then the vector $\vM \cdot \vA(\vd)$ gives the rate at which each species concentration is changing in state $\vd$.
Given an 
initial state $\vc$, the \emph{mass-action trajectory} $\vtau : [0,t_{\mathrm{max}})\to \domR^\Lambda$ \emph{starting at $\vc$}, is the solution to the initial value problem 
\begin{equation}
    \label{eq:mass_action_IVP}
    \frac{d\vtau}{dt} = \vM \cdot \vA(\vtau(t)),~\vtau(0) = \vc
\end{equation}
where $t_{\mathrm{max}} \in \R_{\ge0} \cup \{\infty\}$.\footnote{
    Although $t_\mathrm{max} = \infty$ for ``typical'' CRNs, there are pathological CRNs such as $2X \rxn 3X$, which diverge to $\infty$ in finite time;
    for instance with $X(0)=1$ and unit rate constant, this CRN has the solution $X(t) = 1 / (1-t)$, which goes to $\infty$ as $t \to 1$, so $t_\mathrm{max} = 1$ for this CRN.
}
That is, $\vtau(t) = (S_1(t),S_2(t),\ldots,S_{|\Lambda|}(t))$ is the vector whose $i$th component is the concentration of species $S_i \in \Lambda$ at time $t$. While some CRNs induce ODEs with solutions that do not exist for all time, mass-action ODEs are locally Lipschitz, implying that a unique solution to \eqref{eq:mass_action_IVP} always exists on some interval. We now define what it means for CRN to decide a predicate in the mass action model with adversarial rate constants.

\begin{definition}[robustly decide]
\label{def:robustly-decide}
    Let $\phi : \domR^k \to \{0,1\}$ be a predicate. We say a CRD $\calD = (\Lambda,R,\Sigma,\yesVotes,\noVotes,\vi)$ 
    \emph{robustly decides}
    (a.k.a., \emph{rate-constant-independently decides})
    $\phi$ if,
    for any choice of strictly positive rate constants and every $\vx \in \domR$,
    the following holds.
    
    Let $\vtau$ be the mass-action trajectory of $\calC$, 
    starting at state $\vx + \vi$.
    If $\phi(\vx)=1$ (resp. 0), let $\Upsilon_\mathrm{C}=\yesVotes$ (resp. $\noVotes$) 
    be the correct voters, 
    and let $\Upsilon_\mathrm{I} = \noVotes$ (resp. $\yesVotes$) be the incorrect voters.
    Define $C(t) = \sum_{V \in \Upsilon_\mathrm{C}} V(t)$ be the sum of concentrations of correct voters.
    Then 
    $
        \liminf_{t \to \infty} C(t) > 0
    \text{ and }
        \lim_{t \to \infty} I(t) = 0
    \text{  for all }
        I \in \Upsilon_\mathrm{I}.
    $
\end{definition}

In other words, the CRD deciding $\phi(\vx)$ starts with input species concentrations defined by $\vx$,
and other initially present species indicated by $\vi$
(whose initial concentrations are the same for all inputs $\vx$).
The CRD proceeds by mass-action dynamics as defined above, 
and converges to a state with only correct voters present. 
Since concentrations are nonnegative,
requiring each individual incorrect voter to converge to 0 is equivalent to requiring their sum to converge to 0,
whereas we do not require any individual correct voter to stay bounded above 0, only the sum of correct voters,\footnote{
    For example two yes voters could oscillate between 0 and 1, so long as they always sum to at least $0.1$.
}

We now define what it means for a CRC to robustly compute a real-valued function.

\begin{definition}[robustly compute]
\label{def:robustly-compute}
    Let $f:\domR^k \to \Rp$. 
    We say a CRC $\calC = (\Lambda,R,\Sigma,Y,\vi)$ \emph{robustly computes}
    (a.k.a., \emph{rate-constant-independently computes})
    $f$ if,
    for any choice of strictly positive rate constants and every $\vx \in \domR$,
    the component $Y(t)$ of $\calC$'s mass action trajectory
    starting from $\vi + \vx$ satisfies 
    $
    \lim_{t \to \infty}Y(t) = f(\vx).
    $
\end{definition}

The full definition of stable computation is given in \Cref{sec:stable_predicate_computation}, based on a formal definition of reachability in continuous CRNs.
The definition intuitively says a CRN can reach from state $\vx$ to $\vc$ if one can run some reactions starting at $\vx$ and reach to $\vc$, without ever running a reaction when one of its reactants is 0.
We say that a CRC \emph{stably computes} a function $f$ (or stably decides a predicate $\phi$) if, starting from initial state $\vx$ encoding the input,
for any state $\vc$ reachable from $\vx$, there is a ``correct'' state $\vo$ reachable from $\vc$ (correct meaning the output in $\vo$ equals $f(\vx)$ or $\phi(\vx)$),
that is also \emph{stable},
meaning that every state $\vo'$ reachable from $\vo$ has the same output as $\vo$.

The next definition  connects stable and robust computation for some specially structured CRNs.
Intuitively, a CRN is \emph{feedforward} if there is an ordering of species so that every reaction producing a species consumes another species earlier in the ordering; 
formally:

\begin{definition}
\label{def:feedforward}
A CRN $\calC = (\Lambda,R)$ with 
is \emph{feedforward} if $\Lambda$ can be ordered 
$\Lambda = \{S_1,S_2, \dots,S_n\}$ so that,
if for each reaction $\alpha=(\vr,\vp,k) \in R$ and $S_j$ where $\vp(j)>\vr(j)$, there is $S_{i} \in \Lambda$ with $i<j$ such that $\vp(i)<\vr(i)$.
\end{definition}

The following was shown in~\cite[Corollary 4.11]{chen2023rate} (in different but equivalent terms).

\begin{lemma}
\label{lem:feedforward-stably-compute-implies-robustly-compute}
    Each feedforward CRC stably computing function $f$ 
    also
    robustly computes $f$.
\end{lemma}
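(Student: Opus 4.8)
The plan is to fix an arbitrary choice of positive rate constants, show that the mass-action trajectory $\vtau$ of the feedforward CRC $\calC$ converges to a \emph{static} limit state $\vc^{*} = \lim_{t\to\infty}\vtau(t)$, and then use the hypothesis of stable computation to identify $\vc^{*}(Y) = f(\vx)$; since $Y(t)\to \vc^{*}(Y)$, this is exactly what \Cref{def:robustly-compute} demands, and nothing in the argument depends on which positive rate constants were chosen. This is a reproof of what is \cite[Corollary 4.11]{chen2023rate}.

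\textbf{Step 1 (the technical core): $\vtau$ converges to a static state.} Order the species $S_1,\dots,S_n$ witnessing \Cref{def:feedforward}, and let $\vf(t) = \int_0^t \vA(\vtau(s))\,ds$ be the vector of cumulative reaction fluxes, so that $\vtau(t) = \vtau(0) + \vM\cdot\vf(t)$. I would argue by induction on $i$ that (a) $S_i(t)$ stays bounded and (b) $f_\alpha(t)$ converges to a finite limit for every reaction $\alpha$ whose highest-indexed participating species is $S_i$. Base case: $S_1$ is never net-produced (a reaction net-producing it would need a strictly earlier net-consumed species, of which there are none), so $S_1$ is nonincreasing and bounded, and the reactions that consume it have cumulative flux bounded by $S_1(0)$ divided by the minimum net stoichiometry; hence those fluxes converge. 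Inductive step: once all ``upstream'' fluxes have finite limits, the net amount of $S_i$ produced by upstream reactions converges, and by the feedforward ordering any further net production of $S_i$ is charged to the consumption of a strictly earlier, already-bounded species whose consuming fluxes are summable; this forces $S_i$ bounded and the remaining fluxes summable. With every $f_\alpha(t)$ convergent, $\vtau(t)\to\vc^{*}$. Boundedness of all concentrations makes $\vtau'$ bounded, hence each rate $\rho_t(\alpha)$ (a polynomial in the concentrations) is Lipschitz in $t$; a nonnegative, uniformly continuous function with finite integral over $[0,\infty)$ tends to $0$, so $\rho_t(\alpha)\to 0$ for every $\alpha$, which means every reaction has a reactant at concentration $0$ in $\vc^{*}$, i.e.\ $\vc^{*}$ is static.

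\textbf{Steps 2--3: identifying the output.} Every prefix $\vtau|_{[0,T]}$ is a legal path for the reachability notion underlying stable computation, because a mass-action reaction with a zero-concentration reactant has rate $0$ and is never ``running''; reparametrizing $[0,\infty)$ onto a bounded interval and passing to the limit, $\vc^{*}$ is reachable from the initial state $\vx+\vi$ in the sense defined in \Cref{sec:prelim:stable}. Since $\calC$ stably computes $f$, from the reachable state $\vc^{*}$ there is a correct stable state $\vo$ reachable from $\vc^{*}$ with $\vo(Y)=f(\vx)$. But $\vc^{*}$ is static, so the only state reachable from it is $\vc^{*}$ itself; thus $\vc^{*}=\vo$ is already correct, giving $\vc^{*}(Y)=f(\vx)$ and therefore $Y(t)\to f(\vx)$.

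The main obstacle is Step 1, specifically the global boundedness claim (a): without the feedforward hypothesis mass-action CRNs can blow up (e.g.\ $2X\rxn 3X$), so the induction must genuinely exploit that every net-production event is charged against the consumption of a strictly earlier, already-controlled species — this accounting, together with upgrading ``bounded'' to ``fluxes convergent,'' is the bulk of the work. Everything after ``$\vtau$ converges to a static state'' is soft and uses only the definitions of stable and robust computation.
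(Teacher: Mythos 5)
The paper gives no proof of this lemma at all --- it is imported verbatim as \cite[Corollary 4.11]{chen2023rate} --- so there is nothing internal to compare against; your reconstruction is, as you note yourself, a reproof of the cited result, and its overall strategy (induction along the feedforward order to show all cumulative fluxes converge, a Barbalat-type argument to show the limit state is static, then reachability plus stability to identify the output) matches the strategy of the cited source. Two points deserve tightening. First, in Step 1 your induction is indexed by ``the highest-indexed participating species of $\alpha$,'' which is not quite the right bookkeeping: a reaction can net-produce $S_i$ while net-consuming only some much earlier $S_j$, and its flux convergence comes from the consumption of $S_j$, not from anything about $S_i$. The clean invariant is: for each $i$, (a) $S_i(t)$ is bounded and (b) every reaction that \emph{net-consumes} $S_i$ has convergent cumulative flux; since by \Cref{def:feedforward} (together with $\vr\neq\vp$) every reaction net-consumes at least one species, all fluxes converge. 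Your prose in the inductive step describes exactly this accounting, so the fix is cosmetic. Second, and more substantively, Step 2's claim that ``reparametrizing $[0,\infty)$ onto a bounded interval'' makes $\vc^{*}$ segment-reachable is a genuine gloss: segment-reachability (\Cref{defn:reachable-segment}) demands \emph{finitely many} straight-line segments, and a reparametrized mass-action curve is not one. The standard repair is to use the fact that under mass-action the support of $\vtau(t)$ is constant on $(0,\infty)$ (a species, once positive, stays positive for all finite time), so for any $\epsilon>0$ the entire tail including its limit is a \emph{single} segment $\vtau(\epsilon)\slto_{\vf(\infty)-\vf(\epsilon)}\vc^{*}$ (applicability is only checked at the start point), while $\vtau(0)\segto\vtau(\epsilon)$ needs at most $|\Lambda|$ segments to switch on the eventual support. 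With that patch, Step 3 is correct and the lemma follows.
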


Essentially the same proof shows the following.

\begin{lemma}
\label{lem:feedforward-stably-decide-implies-robustly-decide}
    Each feedforward CRD stably deciding a predicate $\phi$ also robustly decides $\phi$.
\end{lemma}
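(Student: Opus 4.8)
The plan is to observe that the proof of \Cref{lem:feedforward-stably-compute-implies-robustly-compute} from \cite[Corollary 4.11]{chen2023rate} never uses the fact that the CRN has a designated output \emph{species} $Y$ as opposed to a collection of voter species, so it transfers verbatim with only bookkeeping changes. Concretely, I would first recall the structure of that argument: for a feedforward CRN, one orders the species $S_1, \dots, S_n$ so that every reaction net-producing a species net-consumes an earlier one; one then argues that along the mass-action trajectory $\vtau$ the concentration of each $S_i$ converges (inductively up the ordering: the ``upstream'' species settle, then the next one's ODE becomes an autonomous-ish scalar equation that is monotone or bounded and hence converges), and crucially that the limiting state $\vo = \lim_{t\to\infty}\vtau(t)$ is \emph{reachable} (in the continuous-CRN reachability sense of \Cref{sec:prelim:stable}) from the initial state $\vx + \vi$, and is moreover \emph{static}, hence stable. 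Because the CRD stably decides $\phi$, every stable state reachable from $\vx+\vi$ must be a correct state; in particular $\vo$ is correct.

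Next I would translate ``$\vo$ is a correct state'' into the two limit conditions demanded by \Cref{def:robustly-decide}. A correct state for predicate output $\phi(\vx)$ has at least one correct voter present and no incorrect voter present; since $\vo = \lim_t \vtau(t)$, this gives $\lim_{t\to\infty} I(t) = \vo(I) = 0$ for every incorrect voter $I \in \Upsilon_\mathrm{I}$, and $\lim_{t\to\infty} C(t) = \sum_{V \in \Upsilon_\mathrm{C}} \vo(V) > 0$, which in particular yields $\liminf_{t\to\infty} C(t) > 0$. This holds for every choice of strictly positive rate constants (the feedforward convergence argument is rate-constant-agnostic) and every input $\vx$, which is exactly what \Cref{def:robustly-decide} requires.

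The only genuine content beyond citing the prior proof is checking that the convergence-and-reachability step of \cite[Corollary 4.11]{chen2023rate} is stated for general CRNs rather than specifically for CRCs, and that its hypotheses (feedforward structure, mass-action kinetics, solution defined for all $t$) do not secretly depend on having a single tracked species. I expect this to go through: the result that a feedforward mass-action CRN's trajectory converges to a static state reachable from the start is a property of the CRN $(\Lambda, R)$ alone, with the input/output interface layered on top. The main (mild) obstacle is therefore purely expository: restating enough of the \cite{chen2023rate} machinery here so that the reader can see the argument is interface-independent, without reproducing it in full. I would phrase the proof as: ``The proof of \Cref{lem:feedforward-stably-compute-implies-robustly-compute} establishes that for any feedforward CRN and any positive rate constants, the mass-action trajectory from $\vx+\vi$ converges to a static state $\vo$ that is reachable from $\vx+\vi$; a static state is stable, so by stable decidability $\vo$ is a correct state; reading off the concentrations of the voter species in $\vo$ gives the two limit conditions of \Cref{def:robustly-decide}.''
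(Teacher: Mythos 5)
Your proposal is correct and matches the paper's own treatment: the paper gives no separate argument for this lemma, stating only that ``essentially the same proof'' as for \Cref{lem:feedforward-stably-compute-implies-robustly-compute} (i.e., the feedforward convergence-to-a-reachable-static-state argument of \cite[Corollary 4.11]{chen2023rate}) applies, which is exactly the transfer you carry out. Your additional observations---that every stable state reachable from the initial state must be correct under stable decidability, and that reading off voter concentrations in the limiting state yields the two conditions of \Cref{def:robustly-decide}---are the right bookkeeping and introduce no gap.
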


\section{Robust computation by continuous CRNs}
\begin{toappendix}
  \label{apx:robust_predicate_computation}
\end{toappendix}

\subsection{Boolean combinations of threshold predicates are robustly decidable}
\label{sec:robust_predicate_computation}







We begin by proving a technical lemma that relates known asymptotic behavior of certain species to the desired asymptotic behavior of others that depend on them.
Intuitively, we think of $p$ and $g$ as functions of concentration whose asymptotic behavior has already been analyzed. In particular, we think of species involved in $p$ and $g$ as belonging to an ``upstream'' CRC (or CRD) $\mathcal{C}_\mathrm{U}$,
whose outputs are used in reactions of a ``downstream'' CRC (or CRD) $\mathcal{C}_\mathrm{D}$ that influence concentration of species $F$.
These species influence the concentration of species $F$, which evolves as $f(t)$. However, $f$ does not affect the concentrations of the species involved in $p$ and $g$. Assuming that the ODE describing $f$ is of the form $f(t)' = g(t) - p(t)f(t)$, we can use \Cref{lem:linear-odes,lem:linear-odes-zero} to reason about asymptotic behavior of $f$.

\opt{sub}{
    Proofs of \Cref{lem:linear-odes,lem:linear-odes-zero} are given in \Cref{apx:robust_predicate_computation}.
}

\begin{lemmarep}
\label{lem:linear-odes}
    Let $p,g: \R_{\ge 0} \to \R$, be differentiable functions, 
    with $p(t) > 0$ and $g(t) \ge 0$ for all $t \in \R_{\ge 0}$. 
    Let $K = \frac{g(0)+p(0)f(0)}{p(0)}$ be a constant. 
    If $f: \R \to \R$ is differentiable and satisfies the first order linear ODE 
    $
    f'(t) = g(t)-p(t)f(t),
    $
    then for all $t \in \Rp$,
    \[
    f(t) \le \frac{g(t)}{p(t)}+K\exp\left(-\int_0^t p(s)ds\right).
    \]
\end{lemmarep}

\begin{proof}
    As $f$ satisfies the first-order linear ODE $f'(t) + p(t)y = g(t)$, by the method of integrating factor, it has a closed-form solution of 
    \[
    f(t) = \frac{1}{\mu(t)}\int_0^t \mu(s)g(s)ds +\frac{f(0)}{\mu(t)}
    \]
    where $\mu(t)$ is the so-called integrating factor, defined as $\mu(t) = \exp(\int_0^t p(s)ds)$. Note that $\mu$ satisfies the differential equation $\mu'(t) = \mu(t)p(t).$ As $p(t) \ne 0$ for all $t$, we can write $\mu(t)$ as $\mu(t) = \frac{\mu'(t)}{p(t)}$. With this we can realize $f(t)$ as 
\begin{equation*}
    f(t) = \frac{1}{\mu(t)}\int_0^t \frac{g(s)}{p(s)}\mu'(s)ds + \frac{f(0)}{\mu(t)}.
\end{equation*}
We now apply integration by parts, which yields
\begin{equation*}
    f(t) = \frac{1}{\mu(t)}\frac{\mu(t)g(t)}{p(t)}- \frac{1}{\mu(t)}\int_0^t\mu(s)\left(\frac{g(s)}{p(s)}\right)'ds+ \frac{f(0)}{\mu(t)}.
\end{equation*}

As $p(t) > 0$, this implies that $\mu(t) \ge 1$. We can use this to estimate $f(t)$ as
\begin{align*}
f(t) &= \frac{g(t)}{p(t)}- \frac{1}{\mu(t)}\int_0^t\mu(s)\left(\frac{g(s)}{p(s)}\right)'ds + \frac{f(0)}{\mu(t)}\\
    &\le \frac{g(t)}{p(t)}- \frac{1}{\mu(t)}\int_0^t\left(\frac{g(s)}{p(s)}\right)'ds+ \frac{f(0)}{\mu(t)}\\
    &= \frac{g(t)}{p(t)}- \frac{1}{\mu(t)}\frac{g(t)}{p(t)}+ \frac{p(0)f(0)+g(0)}{p(0)\mu(t)}\\
    &\le \frac{g(t)}{p(t)}+ \frac{f(0)p(0)+g(0)}{p(0)\mu(t)}
\end{align*}
which shows the result.\qedhere
\end{proof}
This lemma allows us to cleanly demonstrate that the concentration of particular voting species $V$ converges to zero. 
In this context, the function $p$ will represent the concentration of species whose presence causes $V$ to be consumed, and $g$ represents the concentration of species whose presence causes $V$ to be produced. 
A common pattern in our correctness proofs will be to rearrange the mass-action ODEs for an incorrect voting species into a first order linear ODE, apply \cref{lem:linear-odes}, 
and then argue that $g(t)$ converges to zero while $p(t)$ converges to a positive value (or converges to zero slower than $g$). 
Further, the non-negativity condition of the hypothesis is trivially satisfied as components of the mass action trajectory are always non-negative, and for some CRNs, strictly positive with suitable initial conditions. 
\Cref{lem:linear-odes} is not applicable if $p(t)$ is not strictly positive, but in this case we can still find a bound of a similar form. 

\begin{lemmarep}
\label{lem:linear-odes-zero}
    Let $p,g: \R_{\ge 0} \to \R$ be differentiable functions, 
    with $p(t) \ge 0$ and $g(t) \ge 0$ for all $t \in \R_{\ge 0}$.
    Let $K = \frac{g(0)+f(0)(p(0)+1)}{p(0)+1}$ be a constant.
    If $f: \R \to \R$ is differentiable and satisfies the first order linear ODE 
    $
    f'(t) = g(t)-  p(t)f(t),
    $
    then for all $t \in \Rp$,
    \[f(t) \le \frac{5g(t)}{p(t)+(t^2+1)^{-1}}+K\exp\left(-\int_0^t p(s)ds\right).
    \] 
\end{lemmarep}

\begin{proof}
    We again apply the formula for the solution to a first-order linear ODE to write
    \begin{equation}
        f(t)= \frac{1}{\mu(t)}\int_0^t\mu(s)g(s)ds +\frac{f(0)}{\mu(t)}
    \end{equation}
    where again $\mu(t)$ is defined as $\exp(\int_0^t p(s) ds)$. Observing that we can bound $\mu$ by the expression
    \begin{equation}
        \gamma(t) =\exp\left(\int_0^t p(s)ds+\arctan(t)\right)
    \end{equation}
    We obtain that
    \begin{equation}
        f(t) \le\frac{1}{\mu(t)}\int_0^t\gamma(s)g(s)ds + \frac{f(0)}{\mu(t)}.
    \end{equation}
    Observe that $\gamma$ satisfies the differential equation $\gamma'(t) = (p(t)+\frac{d}{dx}\arctan(t))\gamma(t)$. As $\frac{d}{dx}\arctan(t) = \frac{1}{t^2+1}$ is positive for all $t$, we can solve for $\gamma$ in terms of $\gamma'$ and apply the same trick we did in the proof of \cref{lem:linear-odes} to rewrite the integral as
    \begin{equation*}
        \int_0^t\gamma(s)g(s)ds = \int_0^t\gamma'(s) \pars{\frac{g(s)}{p(s)+\frac{1}{s^2+1}}}ds
    \end{equation*}
    We can now apply integration by parts and similar arguments to the proof of \cref{lem:linear-odes} to obtain the desired bound. The factor of 5 comes from the fact that $e^{\arctan(x)} \le e^{\pi/2}\le 5$.
\end{proof}
\begin{toappendix}
    We remark that arctangent was a rather arbitrary choice of function, any bounded, strictly positive function with a strictly positive derivative would suffice for our purposes. 
\end{toappendix}

We begin our positive results by showing the \emph{majority predicate} $\phi: \R^2 \to \{0,1\}$ defined by $\phi(a,b) = 1$ if and only if $a > b$ is robustly decidable. We remark that such a predicate is in general not a detection predicate, demonstrating that the class of robustly decidable predicates is strictly larger than the class of stably decidable ones. 
\begin{lemma}
\label{thm:majority}
The majority predicate $\phi:\domR^2 \to \{0,1\}$ is robustly decidable. 
\end{lemma}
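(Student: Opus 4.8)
The plan is to exhibit an explicit CRD and verify its mass-action ODEs directly; the feedforward shortcut of \Cref{lem:feedforward-stably-decide-implies-robustly-decide} will \emph{not} apply here, as explained at the end. Take $\Lambda=\{A,B,N\}$, input species $\Sigma=\{A,B\}$, yes voters $\yesVotes=\{A\}$, no voters $\noVotes=\{B,N\}$, initial context $\vi(N)=1$, and the two reactions
\[
A+B \rxn^{k_1} \emptyset, \qquad 2A+N \rxn^{k_2} 2A.
\]
The first cancels $A$ against $B$; the second lets $A$ slowly destroy the ``leader'' $N$, but at a rate proportional to $A^2$ rather than $A$. The mass-action ODEs are $A'=-k_1AB$, $B'=-k_1AB$, $N'=-k_2A^2N$ (in the second reaction $A$ is a spectator, so it does not appear in $A'$). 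Writing $(a,b)$ for the input, $A$ and $B$ are consumed at equal rates, so $A(t)-B(t)=a-b$ for all $t$; also $N(t)=\exp\bigl(-k_2\int_0^t A(s)^2\,ds\bigr)>0$ is non-increasing, and all three species stay in a compact box, so the trajectory exists for all $t\ge 0$.

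I would then split into three cases. If $a>b$, then $A(t)=B(t)+(a-b)\ge a-b>0$ for all $t$, so $B'\le -k_1(a-b)B$ and $N'\le -k_2(a-b)^2N$ force $B(t),N(t)\to 0$ exponentially (the pattern of \Cref{lem:linear-odes} with $g\equiv 0$ and $p$ bounded below by a positive constant), while the sole yes voter satisfies $A(t)\ge a-b>0$; the output is yes. If $a<b$, symmetrically $B(t)\ge b-a>0$ for all $t$, so $A'\le -k_1(b-a)A$ gives $A(t)\to 0$ while $B(t)+N(t)\ge B(t)\ge b-a>0$; the output is no. If $a=b$, then $A(t)=B(t)$ for all $t$, so $A'=-k_1A^2$ and hence $A(t)=a/(1+k_1at)$ (or $A\equiv 0$ when $a=0$); in particular $A(t)\to 0$, so the yes voter vanishes, and $\int_0^\infty A(s)^2\,ds = a/k_1<\infty$, so $N(t)\to \exp(-k_2a/k_1)>0$ (or $N\equiv 1$ when $a=0$), whence $B(t)+N(t)$ is bounded below by a positive constant and the output is no, matching $\phi(a,a)=0$. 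Since this holds for every positive $k_1,k_2$ and every input, the lemma follows.

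The step needing the most care is the tie $a=b$, and it is what dictates the design choice of making the $N$-destroying reaction quadratic in $A$. On the tie the input species decay only polynomially, $A(t)\sim 1/(k_1t)$, so $\int_0^\infty A(s)\,ds=\infty$ and a \emph{linear}-rate reaction $A+N\rxn A$ would drive the leader $N$ to $0$, giving the wrong (undefined/yes) answer; squaring makes $\int_0^\infty A(s)^2\,ds$ finite, which is exactly what keeps $N$ bounded below. This is also why \Cref{lem:feedforward-stably-decide-implies-robustly-decide} cannot be invoked: the same CRD does \emph{not} stably decide majority, because a stable adversary starting from a tie could first run $2A+N\rxn 2A$ to deplete $N$ while $A$ is still large, then cancel all remaining $A$ and $B$, reaching a static state with no voters present. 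Mass-action rates cannot be ``front-loaded'' in this way, which is precisely the slack that robustness provides.
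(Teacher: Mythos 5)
Your construction is correct, and it is a genuinely different route from the paper's. The paper uses dedicated voter species $Y,N$ with the invariant $Y(t)+N(t)=1$, catalytic vote-flipping reactions driven by $A$, $B$, and an auxiliary species $C$ whose closed form $(2kt+1)^{-1/2}$ decays asymptotically \emph{slower} than $A(t)\sim 1/t$ in the tie case; the analysis goes through the integrating-factor bound of \Cref{lem:linear-odes} and an application of L'H\^{o}pital's rule. You instead make the inputs themselves the voters and break the tie by an integrability criterion: the leader $N$ is destroyed at rate proportional to $A^2N$, and $\int_0^\infty A(s)^2\,ds$ is finite exactly when $A(t)\sim 1/(k_1t)$, i.e., exactly on the tie, so $N$ survives precisely when it should. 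Both proofs ultimately exploit the same mass-action phenomenon (distinguishing exponential from polynomial decay), but yours does so with explicit solutions and elementary comparison bounds rather than the linear-ODE machinery. I verified the details: the conservation law $A(t)-B(t)=a-b$, the formula $N(t)=\exp(-k_2\int_0^t A^2)$, the limit $N(t)\to e^{-k_2a/k_1}>0$ on the tie, and the boundary subcases $a=b=0$, $b=0<a$, $a=0<b$ all check out against \Cref{def:robustly-decide} (sum of correct voters has positive liminf; each incorrect voter tends to $0$). Your closing observation is also right and worth keeping: the network is syntactically feedforward (no reaction net-produces anything), yet \Cref{lem:feedforward-stably-decide-implies-robustly-decide} is unusable because the CRD does \emph{not} stably decide majority --- the segment-reachability adversary can deplete $N$ while $A$ is still large and then cancel $A$ against $B$. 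What the paper's approach buys is a CRD already in the normal form (single yes/no voter summing to $1$) that the downstream results consume directly, whereas yours would first pass through \Cref{lem:twoVoters}; what yours buys is a shorter, more self-contained argument, at the cost of a termolecular reaction $2A+N\rxn 2A$.
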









\begin{proof}
We construct the CRD $\calD = (\Lambda,R,\Sigma,\yesVotes,\noVotes,\vi)$ as follows.
Let $\Upsilon_\mathrm{yes} = \{Y\}$ and $\Upsilon_\mathrm{no} = \{ N \}$ be the yes voter and no voter respectively. Let the initial state be $\{ a A, b B, 1 Y, 1 C\}$ where $\{1Y, 1C\}$ is the initial context $\vi$. We add the following reactions:
\begin{eqnarray}
    A + N &\rxn^{k_1}& A + Y
    \label{rxn:maj-an}
    \\
    B + Y &\rxn^{k_2}& B + N
    \label{rxn:maj-by}
    \\
    A + B &\rxn^{k_3}& \emptyset
    \label{rxn:maj-ab}
    \\
    C + Y &\rxn^{k_4}& C + N
    \label{rxn:maj-cy}
    \\
    3C &\rxn^{k_5}& \emptyset
    \label{rxn:maj-3c}
\end{eqnarray}

We consider the following 3 cases separately. 

\begin{description}
\item[$a>b$:]
To argue that we converge to a correct vote, we must show that $\lim_{t \to \infty} Y(t) = 1$. This is equivalent to showing $\lim_{t \to \infty} N(t) = 0$, as $Y(t) + N(t) = 1$ for all $t$. We first observe that the only reactions that change the concentration of species $A$, $B$, and $C$ strictly decrease the concentration of each species.  
Therefore, as $t$ approaches infinity, the limits of their concentrations satisfy $\lim_{t \to \infty} A(t) = a-b$ and $\lim_{t \to \infty} B(t) = \lim_{t \to \infty} C(t) =0$, as the initial concentration of $A$ is strictly greater than the concentration of $B$. \footnote{This can also be seen by application of \cref{lem:feedforward-stably-compute-implies-robustly-compute}. The sub-CRN with reactions $A+B \to \emptyset$ and $3C \to \emptyset$ is feedforward and stably computes $f(a,b,c) = a-b$ when $a > b$, converging to the static state $\{(a-b)A,0B,0C\}$. By \cref{lem:feedforward-stably-compute-implies-robustly-compute}, this restricted CRN robustly computes $a-b$.} 

We observe the CRC induces this mass-action ODE for the concentration of $N$:
$
N'(t) = -k_1A(t)N(t)+k_2B(t)Y(t)+k_3C(t)Y(t).
$
Using the fact that $Y(t) + N(t) = 1$, we may rewrite the ODE in terms of only $N(t)$ and species whose limit is known:
$
N'(t) + N(t)(k_1A(t)+k_2B(t)+k_3C(t)) = k_2B(t) + k_3C(t).
$
We now want to apply \cref{lem:linear-odes}, to connect the asymptotic behavior of $A,B$ and $C$ to that of $N$. To see that this lemma is applicable, we observe that with the initial conditions $A(0) = a>0$, the function $A(t)$ is always positive as it monotonically decreases and approaches a positive limit. This also implies the lower bound $A(t) \ge a-b$. Further, $B(t)$ and $C(t)$ are both non-negative, so we can apply \cref{lem:linear-odes} with $f(t) = N(t)$, $p(t) = k_1A(t)+k_2B(t)+k_3C(t)$ and $g(t) = k_2B(t) + k_3C(t)$ to obtain the bound
\begin{align*}
    N(t) &\le \frac{k_2B(t) + k_3C(t)}{k_1A(t)+k_2B(t)+k_3C(t)} + K\exp\left(-\int_0^t k_1A(s)+k_2B(s)+k_3C(s)ds\right)\\
    &\le \frac{k_2B(t) + k_3C(t)}{k_1A(t)+k_2B(t)+k_3C(t)} + K\exp\left(-(a-b)k_1t\right)\\
\end{align*}
Since $B$ and $C$ converge to 0 as $t$ approaches infinity, this shows $\lim_{t \to \infty} N(t) = 0$.
\item[$a<b$:]
Symmetric to the previous case.
\item[$a=b$:]
We want to show that in this case CRN converges to the static state $\vc = \{1 N\}$. 

With the initial conditions $A(0) = B(0) = a$, it can be found that the functions $A(t)$ and $B(t)$ are equal from their mass action ODEs
$ A'(t) = B'(t) = -A(t)B(t).$   
We can then derive that the closed form for the concentrations $A(t)$ and $B(t)$ is $\frac{a}{k_1t+1}$, where $a$ is the initial concentration of $A$. 
    Similarly, we can compute that $C$ has closed form $C(t) = \frac{1}{\sqrt{2k_3t +1}}$. 
    We consider the mass action ODE for $Y'(t)$ 
    \[
      Y'(t) + \left(\frac{2a}{k_1t+1} + \frac{1}{\sqrt{2k_3t+1}}\right)Y(t) = \frac{a}{k_1t+1}
    \]
    This is a first-order linear ordinary differential equation, so we can find an explicit solution for $Y$ by way of integrating factor:
    \[
    Y(t) = 
    \frac
    {\int_0^t (as+1)^{a/k}e^{\sqrt{2k_3s+1}/k_3} {d}s+1}
    {(at+1)^{2a/k}e^{\sqrt{2k_3t+1}/k_3}}.
    \]

    Taking the limit as $t \to \infty$, observe that the numerator and the denominator both approach $\infty$ for any positive choice of $k_1$ and $k_3$. 
    Thus, 

    \begin{align*}
        \lim_{t \to \infty} Y(t) &= \lim_{t\to\infty}Y'(t)
        &\text{L'H\^{o}pital's rule}
    \\ &= 
        \lim_{t \to \infty} \frac{\frac{a}{k_1t+1}}{\frac{2a}{k_1t+1}+\frac{1}{\sqrt{2k_3t+1}}}
    \\ &=
        \lim_{t \to \infty} \frac{a}{k_1t+1}\times \frac{(k_1t+1)\sqrt{2k_3t+1}}{2a\sqrt{2k_3t+1}+k_1t+1}
    \\ &=
        \lim_{t \to \infty} \frac{O\pars{t^{3/2}}}{O\pars{t^2}} 
    = 0.
    \end{align*}
    Since $Y(t)$ approaches $0$ as $t \to \infty$, $N(t)$ approaches 1 as desired.
    \qedhere
\end{description}
\end{proof}

The main difficulty of computing majority comes from correctly deciding the case of $a = b$. If the inputs $a$ and $b$ were guaranteed to be unequal, then the first three reactions in our construction \eqref{rxn:maj-an}, \eqref{rxn:maj-by}, and \eqref{rxn:maj-ab} would be sufficient to decide the predicate.
These three reactions are no longer sufficient when $a = b$, 
since one can imagine that an adversary could set rate constants such that reaction \eqref{rxn:maj-ab} consumes most of $A$ and $B$, hence making the rate of \eqref{rxn:maj-by} far too slow to change all of the yes voters $Y$ into no voters.\footnote{Indeed, one can find that the closed form for $Y(t)$ for these reactions is given by $Y(t) = 1-(k_1t+1)^{-a/k_1}$. If an adversary sets $k_1 = a$, then $\lim_{t \to \infty} Y(t) = 1$ which is incorrect here.} This is where the addition of the auxiliary species $C$ is necessary. As $C$ converges to zero \textit{asymptotically} slower than $A$, its presence can speed up the production of $N$.
This use of different asymptotic decays is a powerful tool of mass-action, not available for stable computation,
which explains CRDs can robustly decide majority but cannot stably decide it.

See \Cref{fig:majority_lessA,fig:majority_equalAB} for simulation data, demonstrating that the CRD of \Cref{thm:majority} converges with equal rate constants, and even with ``adversarial'' rate constants where reactions that ``fight'' correct convergence have larger rate constants than reactions that encourage correct convergence.

Next, we show that threshold predicates are robustly decidable by CRDs. 

\begin{lemmarep}
\label{thm:threshold-robustly-decidable}
Every threshold predicate is robustly decidable by a continuous CRC.
\end{lemmarep}
\opt{sub}{
\begin{proofsketch}
    A full proof is given in \Cref{apx:robust_predicate_computation}.
    It works by reducing the problem of deciding $\phi$ to that of deciding majority as in \Cref{thm:majority}.
    For example, to decide whether $2x_1 - x_2/3 + \frac{5}{4} x_3 > 4$,
    we start with $4 B$,
    and we convert positive terms to $A$ and negative terms to $B$ with the correct rational multipliers:
    reactions 
    $X_1 \rxn 2A$
    and
    $4 X_3 \rxn 5A$
    for the positive terms and
    $3X_2 \rxn B$ for the negative term.
    The reactions of \Cref{thm:majority} then properly decide whether $A>B$,
    which,
    since the above reactions converge to $A=2x_1+\frac{5}{4} x_3$ and $B=4+x_2/3$,
    is true if and only if the threshold predicate 
    $2x_1 - x_2/3 + \frac{5}{4} x_3 > 4$
    holds.
\end{proofsketch}
}
\begin{proof}
Intuitively, this construction translates the threshold problem to a majority problem. We translate the positive and negative contributions in the weighted sum to concentrations of species $A$ and $B$ respectively. We then use the reactions from the majority problem to detect if $A > B$ (with the threshold $h$ added appropriately to either $A$ or $B$).

Let $\phi:\mathbb{R}_{\ge 0}^k \to \{0,1\}$ be a threshold predicate.
Without loss of generality that we assume all weights 
are integers by clearing denominators,
so $\phi(\vx) = 1 \iff \sum_{i=1}^k w_i \vx(i) > h$
with each $w_i \in \Z$ and $h \in \R$.
We construct the CRD $\calD = (\Lambda,R,\Sigma,\yesVotes\noVotes,\vi)$ as follows. 
$\Sigma = \{ X_1,\dots,X_k \}$,
and $\Lambda$ is implicitly all species described below.
Let $\Upsilon_\mathrm{yes} = \{Y\}$ and $\Upsilon_\mathrm{no} = \{ N \}$ be the yes and no voter species respectively. Let the $\vi = \{ 1 Y, 1 C, |h| AB\}$,
where $AB = A$ if $h > 0$ and $AB = B$ otherwise. We then add the following reactions. For any $w_p > 0$ add reaction $X_p \rxn w_p A$ and for any $w_n < 0$ add reaction $X_n \rxn |w_n| B$. We also add the reactions from the CRD computing majority in \cref{thm:majority}.

Without loss of generality, we may insist that the threshold value $h$ is zero as the reactions $L \to hB$ and $L \to |h|A$ influence the concentration of $A$ and $B$ respectively in the ``same way'' as the reactions $X_i \to w_i(A/B)$ in the sense that the concentration of each $X_i$ species and $L$ have closed form solutions of decaying exponential functions. Let $\mathcal{P}$ be the set of all $p$ for which $w_p > 0$ and let $\mathcal{N}$ be the set of all $n$ for which $w_n < 0$. Let $a = \sum_{p \in \mathcal{P}} w_p \cdot x_p$ and $b = \sum_{n \in \mathcal{N}} |w_n| \cdot x_n$. We consider three cases.

\begin{description}
\item[$a>b$:] 
    It suffices to observe that
    $
    \lim_{t \to \infty} A(t) = a-b
    $
    and 
    $
    \lim_{t \to \infty} B(t) = 0
    $
    since the restricted CRN that only contains species that change the concentration of $A$ and $B$ is feed-forward.
    A similar argument to the one used to prove the $a > b$ case of \cref{thm:majority} then shows the result. 

\item[$a<b$:]
    Symmetric to the previous case.

\item[$a=b$:]
    Our goal is to show that the concentration of species $N$ converges to 1. 
    Equivalently, we will show that the concentration of the yes voter $Y$ converges to zero as the invariant $Y(t) + N(t) = 1$ is held for all $t$. 
    The following argument will be very similar to the one we used to prove the correctness of our majority CRD. 
    The CRD induces the following ODE for the concentration $Y(t)$:
    \begin{equation}
        Y'(t) = k_1A(t)N(t)-k_2B(t)Y(t)-k_4C(t)Y(t)
    \end{equation}
    Using the fact that $Y(t) + N(t) = 1$ we can rewrite this ODE as
    \begin{equation}
        Y'(t)+Y(t)[k_1A(t)+k_2B(t)+k_4C(t)] =k_1A(t).
    \end{equation}
    From the initial value problem $C'(t) = -C(t)^3$ with $C(0) = 1$, we obtain that a closed form for the concentration of $C$ is $C(t) = (2k_4t+1)^{-1/2}$. Hence $C(t)$ is strictly positive, so we can apply \cref{lem:linear-odes} to find that 
    \begin{align}
        Y(t) &\le \frac{k_1A(t)}{k_1A(t)+k_2B(t)+k_4C(t)}+K\exp\pars{-\int_0^t k_1A(s)+k_2B(s)+k_4C(s)ds}
    \end{align}
    With the observation that $B(t) \ge 0$ and $A(t) \ge 0$ for all $t \in \R_{\ge 0}$, and substituting in the closed form for $C(t)$, we obtain the bound
    \begin{align}
        Y(t) &\le \frac{k_1A(t)}{k_4C(t)}+K\exp\left(-(2k_4^2t+k_4))\right)
    \end{align}
    The exponential term clearly converges to zero, so to prove the result it suffices to show the quotient term converges to zero. We observe that the restricted CRN with only reactions that change the concentration of $A$ and $B$ is the feed-forward CRN:
    \begin{eqnarray*}
        \forall p \in \mathcal{P}: X_p &\rxn& w_p A\\
        \forall n \in \mathcal{N}: X_n &\rxn& |w_n| B\\
        A + B &\rxn& \emptyset
    \end{eqnarray*}
    This CRN stably computes the function $f(a,b) = \min(a-b,0)$, which by \cite{chen2023rate} Corollary 4.11 implies that it robustly computes $\min(a-b,0)$. This shows that $\lim_{t\to \infty}A(t) = a-b = 0$. Taking the limit of the quotient, we obtain $0/0$, a so-called indeterminate form. Hence, we may apply L'Hôpital's rule to obtain the following estimate.
    \begin{align*}
        \lim_{t \to \infty} \frac{k_1}{k_4}\frac{A(t)}{C(t)} 
    &= 
        \frac{k_1}{k_4}\lim_{t \to \infty} \frac{A(t)'}{C(t)'}
    \\ &=
        \frac{k_1}{k_4}\lim_{t \to \infty} \frac{\sum_{p \in \mathcal{P}}k_p  X_p-k_3A(t)B(t)}{(2k_4t+1)^{-3/2}}
    \\ &\le
        \frac{k_1}{k_4}\lim_{t \to \infty} \frac{\sum_{p \in \mathcal{P}}k_p  X_p}{(2k_4t+1)^{-3/2}}
    \end{align*}
    The equation for $A'(t)$ is from its mass action differential equation. Each input species $X_i$ can be found to have a closed form of $w_ix_ik_ie^{-k_p t}$, which shows that the quotient tends to zero as $t$ goes to infinity. Thus, $\lim_{t\to\infty} Y(t) = 0$ as desired.
\qedhere
\end{description}
\end{proof}

We will now show that Boolean combinations of threshold predicates are also robustly decidable. To show this result, we first prove a lemma that lets us assume CRDs that robustly compute predicates are of a convenient form. That is, we will now show that it is without loss of generality to assume that a CRD has exactly one yes and no voter, and furthermore both of their concentrations converge exactly as $t \to \infty$.

\begin{lemmarep}
    \label{lem:twoVoters}
    Let $\calD = (\Lambda,R,\Sigma,\yesVotes,\noVotes,\vi)$ robustly decide the predicate $\phi:\domR^k \to \{0,1\}$. 
    Then there is a CRD $\calD'=(\Lambda',R',\Sigma',\{Y\},\{N\},\vi')$ that robustly decides $\phi$ with exactly one yes voter species $Y$ and one no voter species $N$. 
    Furthermore, the concentration of these voters satisfy $Y(t) + N(t) = 1$ for all $t \in \R_{\ge 0}$.
\end{lemmarep}

\begin{proofsketch}
    A full proof is given in \Cref{apx:robust_predicate_computation}.
    For each original yes voter $V_\mathrm{y}$, we add a reaction $V_\mathrm{y} + N \rxn V_\mathrm{y} + Y$,
    and similarly 
    $V_\mathrm{n} + Y \rxn V_\mathrm{n} + N$,
    to influence the new voters in the correct direction.
\end{proofsketch}

\begin{proof}
We construct the CRD $\calD'$ as follows:
\begin{enumerate}
    \item Let $\Sigma' = \Sigma $, $\Lambda' = \Lambda \cup\{Y,N\}$.
    
    \item Keep all reactions from the original CRD, and for each yes voter $\Vy \in \yesVotes$, add the reaction 
    \begin{equation}
        \label{eq:one_voter_yes}
        \Vy + N \rxn^{k_{\Vy}} \Vy + Y.
    \end{equation}
    
    Additionally, for each no voter $\Vn \in \noVotes$ add the reaction
    \begin{equation}
        \label{eq:one_voter_no}
        \Vn + Y \rxn^{k_{\Vn}} \Vn + N.
    \end{equation}
    
    
    \item Start with initial context $\vi' = \vi+ \{1Y\}$.
\end{enumerate}

To show that $\calD'$ robustly decides $\phi$, we will show that for all $\vx \in \domR^k$ such that $\phi(\vx) = 1$, the concentration of species $Y$ approaches 1 as $t$ approaches infinity when the initial state is $\vx + \vi + \{1Y\}$. The case when $\phi(\vx) = 0$ is a similar argument. 

We make the worst-case assumption that the adversary chooses $\ky = \min\limits_{\Vy\in\yesVotes} k_{\Vy}$ to be the same (and smallest) rate constant for all reactions of the form \eqref{eq:one_voter_yes}, and similarly chooses the rate constant $\kn = \max\limits_{\Vn\in\noVotes} k_{\Vn}$ for all reactions of the form \eqref{eq:one_voter_no}.
This is a safe worst-case assumption, since choosing larger rate constants for reaction \eqref{eq:one_voter_yes} or smaller rate constants for \eqref{eq:one_voter_no} would only make the network converge faster to the correct values.
Furthermore, scaling all rate constants in the system by the same factor modifies only the timescale of the mass-action trajectory of the ODEs, while preserving its shape and eventual convergence. To simplify the system, we scale all rate constants by $1 / \ky$, ensuring that $\ky = 1$.

The CRD $\calD'$ induces the following ODE for the concentration of the species $N$
\[
N'(t)+N(t)(\kn\Vn(t)+\ky\Vy(t)) = \kn\Vn(t)
\]
We may then apply \cref{lem:linear-odes-zero} to obtain the bound 
\[
N(t) \le \frac{5\kn\Vn(t)}{\kn\Vn(t) + \ky\Vy(t)+(t^2+1)^{-1}} + K\exp\pars{-\int_0^t\kn\Vn(s)+\ky\Vy(s)ds}.
\]
By the correctness of $\calC$, the concentration $\Vy(t)$ remains strictly positive as $t \to \infty$. Hence, the integral $\int_0^t \kn\Vn(s)+\ky\Vy(s)~ds$ diverges to infinity as $t \to \infty$. This demonstrates that the exponential term tends to zero as $t$ grows large. To see the quotient term goes to zero, the correctness of $\calC$ also dictates that $\Vn(t) \to 0$ as $t \to \infty$. This shows that $N(t) \to 0$.
\end{proof}

This result shows that \cref{def:robustly-decide} is equivalent to a model in which we require exactly one yes voter and no voter.
Furthermore, this lemma allows us to insist that correct voting species do not just remain above zero, but in fact converge to a particular value. 

\begin{lemma}
\label{lem:stably-decidable-closed-Boolean}
    Let $\phi_1:\mathbb{R}_{\ge 0}^k \to \{0,1\}$ and $\phi_2:\mathbb{R}_{\ge 0}^k \to \{0,1\}$ be robustly decidable predicates.
    Then the following are also robustly decidable: $\overline{\phi_1}$, $\phi_1 \land \phi_2$, and $\phi_1 \lor \phi_2$.
\end{lemma}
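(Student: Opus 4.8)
The plan is to treat the three operations separately, reducing disjunction to the other two via De Morgan: $\phi_1 \lor \phi_2 = \overline{\overline{\phi_1}\land\overline{\phi_2}}$. Negation is immediate: if $\calD_1 = (\Lambda,R,\Sigma,\yesVotes,\noVotes,\vi)$ robustly decides $\phi_1$, then $\calD_1' = (\Lambda,R,\Sigma,\noVotes,\yesVotes,\vi)$, obtained by interchanging the yes and no voter sets, robustly decides $\overline{\phi_1}$, since complementing the predicate value is exactly what swaps the roles of ``correct'' and ``incorrect'' voters in \Cref{def:robustly-decide}.

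For conjunction, first apply \Cref{lem:twoVoters} to obtain CRDs $\calD_1,\calD_2$ robustly deciding $\phi_1,\phi_2$ with single voter pairs $(Y_1,N_1)$ and $(Y_2,N_2)$ satisfying $Y_j(t)+N_j(t)=1$ for all $t$. Take species-disjoint renamed copies of $\calD_1$ and $\calD_2$, reintroduce one shared copy of each input species $X_i$, and fork it with a reaction $X_i \rxn^{k_i} \hat X_i^{(1)}+\hat X_i^{(2)}$, where $\hat X_i^{(j)}$ is the input species of the copy of $\calD_j$. Add a downstream combining layer with fresh voters $Y$ (yes) and $N$ (no), and reactions
\[
    Y_1+Y_2+N \rxn Y_1+Y_2+Y, \qquad N_1+Y \rxn N_1+N, \qquad N_2+Y \rxn N_2+N ,
\]
with initial context $\{1N\}$ together with the renamed initial contexts of $\calD_1,\calD_2$; these reactions preserve $Y(t)+N(t)=1$, and intuitively: if both upstream CRDs vote yes the first reaction drives $N\to 0$, while if either votes no one of the last two drives $N\to 1$.

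The correctness argument has three layers. (i) The fork reactions form a feedforward upstream CRN, so $X_i(t)=x_i e^{-k_i t}\to 0$ and $\hat X_i^{(j)}(t)\to x_i$; each renamed copy of $\calD_j$ thus receives its correct inputs, but only in the limit. (ii) One then needs that each copy still converges to the correct vote, $Y_j(t)\to \phi_j(\vx)$ and $N_j(t)\to 1-\phi_j(\vx)$. (iii) Granting (ii), the combining layer is analyzed as in \Cref{lem:twoVoters}: the mass-action ODE for $N$, after substituting $Y=1-N$, becomes $N'(t)+p(t)N(t)=g(t)$ with $p(t)=k_0 Y_1(t)Y_2(t)+k_1 N_1(t)+k_2 N_2(t)$ and $g(t)=k_1 N_1(t)+k_2 N_2(t)$; one checks $p(t)>0$ for all $t$ (if $N_1=N_2=0$ then $Y_1=Y_2=1$, so $p\ge k_0$, and otherwise $p\ge k_1 N_1+k_2 N_2>0$), so \Cref{lem:linear-odes} applies, and when $\phi_1(\vx)=\phi_2(\vx)=1$ it gives $N\to 0$ since $g\to 0$ and $p\to k_0>0$. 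When some $\phi_j(\vx)=0$, the symmetric computation for $Y$ applies: its production coefficient $k_0 Y_1 Y_2$ tends to $0$ (as $Y_j\to 0$) and its consumption coefficient is bounded below by $k_j>0$ on a tail (since $N_j\to 1$), so \Cref{lem:linear-odes}, applied from that time, gives $Y\to 0$, hence $N\to 1$.

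The main obstacle is step (ii): $\calD_j$ is an arbitrary robustly-deciding CRD, and \Cref{def:robustly-decide} only guarantees correct convergence when the inputs are present at full concentration at time $0$, not when they are released gradually by the fork. I plan to handle this by establishing a feedforward composition principle for robust computation — prepending to a robustly-deciding CRD a feedforward layer that robustly computes the identity on each input species yields a CRD that still robustly decides the same predicate — and, if the black-box statement is awkward, by instead strengthening the inductive hypothesis so that this property is carried along the Boolean recursion. Either way the argument should go through via the same first-order-linear-ODE technique (\Cref{lem:linear-odes,lem:linear-odes-zero}) and the normal form of \Cref{lem:twoVoters} used throughout this section; the combining gadget of step (iii) is then routine.
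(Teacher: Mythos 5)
Your negation and combining-layer arguments are sound, and the overall architecture (normal form via \Cref{lem:twoVoters}, input forking, a downstream voter gadget analyzed with \Cref{lem:linear-odes}) matches the paper's. The route differs in two ways: you reduce $\lor$ to $\land$ and negation via De Morgan and use a single termolecular combining reaction $Y_1+Y_2+N \to Y_1+Y_2+Y$ plus two catalytic back-reactions, whereas the paper handles $\lor$ and $\land$ with one bimolecular gadget built from four vote-recording species $\vyy,\vyn,\vny,\vnn$ (choosing which of them are yes voters), and analyzes it by a worst-case reduction to a small reversible linear chain $\vyy \revrxn \vm \revrxn \vnn$ whose equilibrium is within $\varepsilon$ of the correct vote. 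Your ODE analysis of the combining layer is cleaner and more direct than the paper's $\varepsilon$--$\delta$ worst-case argument; your check that $p(t)=k_0Y_1Y_2+k_1N_1+k_2N_2>0$ via the invariant $Y_j+N_j=1$ is correct, and both cases ($g\to 0$ with $p$ bounded away from $0$ on a tail) go through under \Cref{lem:linear-odes}.

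The genuine weak point is the one you flag yourself: step (ii). As stated, the ``feedforward composition principle'' --- that prepending an identity-computing feedforward layer to an \emph{arbitrary} robustly-deciding CRD preserves correctness --- is an unproven claim about perturbed/asymptotically-released inputs for a black-box dynamical system, and it is not obviously true in that generality; \Cref{def:robustly-decide} only speaks about trajectories started from $\vx+\vi$ with the full input present at time $0$. You should be aware that the paper's own proof has exactly the same issue and simply invokes ``the correctness of $\calD_1$ and $\calD_2$'' after forking the inputs with $X_i\to X_i^1+X_i^2$, without comment. So you are not behind the paper here, but to make your proof complete you should take your second suggested route: carry the needed property through the Boolean induction as a strengthened hypothesis (e.g., that the CRDs produced at every stage still converge correctly when each input species is delivered as $x_i(1-e^{-k_it})$ rather than instantaneously), verifying it in the base case of \Cref{thm:threshold-robustly-decidable} --- where the closed forms make it checkable --- rather than asserting it for arbitrary robustly-deciding CRDs.
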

\begin{proof}
We will use the same construction as \cref{lem:stable-decidability-closure-properties}, restated here for convenience.

\noindent
\textbf{Complement:}
If $\phi_1$ is robustly decidable by $\calD_1 = (\Lambda_1,R_1,\Sigma,\yesVotes,\noVotes,\vi_1)$, we can robustly decide $\overline{\phi}_1$ with the CRD $\overline{\calD}_1 = (\Lambda_1,R_1,\Sigma,\noVotes,\yesVotes,\vi_1)$. That is, it is enough to switch yes voters and no voters to robustly compute $\overline{\phi_1}$. 

\noindent
\textbf{Union and Intersection:}
Suppose $\phi_1$ and $\phi_2$ are respectively robustly decided by the CRDs $\calD_1 = (\Lambda_1,R_1,\Sigma,\{Y^1\},\{N^1\},\vi_1)$ and $\calD_2 = (\Lambda_2,R_2,\Sigma,\{Y^2\},\{N^2\},\vi_2)$. By \cref{lem:twoVoters}, we may assume that each $\calD_i$ has exactly two voters ($\{Y^i,N^i\}$), and maintains the invariant that it the sum of the concentrations of these voters is 1 ($Y^i(t) + N^i(t) = 1$ for all $t \in \domR$).

We construct a CRD $\calD = (\Lambda,R,\Sigma,\yesVotes,\noVotes,\vi)$ with 
$\Lambda = \Lambda_1 \cup \Lambda_2 \cup \{\vyy,\vyn,\vny,\vnn\}$. The added species \vyy, \vyn, \vny, and \vnn\ represent all four possible outputs of two threshold predicates. 
To decide $\phi_1 \lor \phi_2$,
let $\Upsilon_\mathrm{yes} = \{\vyy, \vny,\vyn\}$ and $\Upsilon_\mathrm{no} = \{\vnn\}$ and to decide $\phi_1 \land \phi_2$, 
let $\Upsilon_\mathrm{yes} = \{\vyy\}$ and $\Upsilon_\mathrm{no} = \{\vny, \vyn, \vnn\}$. 
Let $\vi = \vi_1 + \vi_2 + \{\vyy\}$.

We add the following reactions: For each input species $X_i$, add the reaction $X_i \rxn X_i^1 + X_i^2$ where $X_i^1$ and $X_i^2$ are the input species of $\calD_1$ and $\calD_2$ respectively.

To ``record'' the votes of $\calD_1$ and $\calD_2$ in the voter species, we include the following reactions. 
Without loss of generality assume that the adversary chooses the same \emph{minimum} rate constant $s = 1$ for reactions in which a former yes voter is a catalyst, and the same maximal rate constant $f\geq s$ for reactions where a former no voter is a catalyst. 
That is, include reactions
\begin{align}
Y^1 + \vnn \rxn^f Y^1 + \vyn
\label{rxn-comb-1}
\\
\Sy^{1} + \vny \rxn^f Y^1 + \vyy
\label{rxn-comb-2}
\\
N^1 + \vyn \rxn^1 N^1 + \vnn
\label{rxn-comb-3}
\\
N^1 + \vyy \rxn^1 N^1 + \vny
\label{rxn-comb-4}
\end{align}
where $Y^1$ is the yes voter from $\calD_1$ and $N^1$ is the no voter. Similarly, include reactions:
\begin{align}
Y^2 + \vnn \rxn^f Y^2 + \vny
\label{rxn-comb-5}
\\
Y^2 + \vyn \rxn^f Y^2 + \vyy
\label{rxn-comb-6}
\\
N^2 + \vny \rxn^1 N^2 + \vnn
\label{rxn-comb-7}
\\
N^2 + \vyy \rxn^1 N^2 + \vyn
\label{rxn-comb-8}
\end{align}
where $Y^2$ and $N^2$ are the voting species from $\calD_2$.

For analysis, it suffices to show that when $\phi_1(\vx)=\phi_2(\vx)=0$ on input $\vx$, the concentration of no voter $\vnn(t)$ approaches 1 and the other concentration of the three voters approach 0. Since $\vnn + \vny + \vyn + \vyy = 1$, it this is equivalent to simply showing that $\vnn$ converges to 1.
To do this, we will demonstrate that for every $0< \varepsilon <1$, there exists $t_0 > 0$ such that for every $t>t_0$, $\vnn(t) \geq 1-\varepsilon$
(thus, for each $V \in \{\vnn,\vyn,\vny\}$,
$V(t) \leq \varepsilon$, since all four concentrations sum to 1).
Let $\delta = \varepsilon/(2f)$.
By the correctness of $\calD_1$ and $\calD_2$, there is a $t_0'$ such that, for all $t > t_0'$, 
$N^i(t) \geq1-\delta$, and $Y^i(t) \leq \delta$ for all $t > t_0'$. We now argue that at some later $t_0 > t_0'$, 
$\vnn(t) \geq 1-\varepsilon$ for all $t > t_0$.



Since the CRD only converges faster if the $N^i$'s are larger and the $Y^i$'s smaller, 
we make the worst case assumption that for all $t > t_0$,
$N^i(t) = 1 - \delta$ and $Y^i(t) = \delta.$
Under this assumption, where one reactant of each reaction does not change concentration over time,
we can analyze it as a unimolecular reaction with a different rate constant.
For example, since the rate of $Y^2 + \vyn \rxn^f Y^2 + \vyy$ is $\vyn(t) \cdot \delta \cdot f$,
we treat it as the unimolecular $\vyn \rxn^{\delta\cdot f} \vyy.$

After this change, the species $\vyn$ and $\vny$ become functionally equivalent in the sense that, the reactions $\vyn \rxn^{\delta \cdot f} \vyy$ and $\vny \rxn^{\delta \cdot f} \vyy$ can be replaced by the single reaction $\vm \rxn^{2 \cdot \delta \cdot f} \vyy,$
where $\vm$ is a fictional species whose concentration is the sum of the concentrations of $\vyn$ and $\vny$.
We can combine the following pairs of reactions in this way:
    \eqref{rxn-comb-1} and \eqref{rxn-comb-5},
    \eqref{rxn-comb-2} and \eqref{rxn-comb-6},
    \eqref{rxn-comb-3} and \eqref{rxn-comb-7},
    \eqref{rxn-comb-4} and \eqref{rxn-comb-8}.
After these changes to a ``worst case'' CRD whose concentration of no voters (resp. yes voters) lower bound (resp. upper bound) the concentration of species in the real CRD, the system becomes (after simplifying by removing the factor 2 from all rate constants, again without altering the trajectory other than its timescale)
$
\vnn 
\rxn^{\delta \cdot f} \vm,
\quad
\vm 
\rxn^{1-\delta } \vnn,
\quad
\vm 
\rxn^{\delta  \cdot f} \vyy,
\quad
\vyy
\rxn^{1-\delta } \vm,
$
written more succinctly as
$\vyy 
\revrxn^{1-\delta }_{\delta  \cdot f}
\vm
\revrxn^{1-\delta }_{\delta  \cdot f}
\vnn.$
Plugging in the definition $\delta = \varepsilon/(2f)$, these reactions become:
$
\vyy 
\revrxn^{1-\varepsilon/(2f)}_{\varepsilon/2}
\vm
\revrxn^{1-\varepsilon/(2f)}_{\varepsilon/2}
\vnn.
$
Since in the first normalization of rate constants so that the smallest was 1, and $f$ was the maximum, we have $f \geq 1$, 
so for $\varepsilon < 1$, we have $1-\varepsilon/(2f) \geq 1/2$.
Again making the worst-case assumption that the rate constants for the forward reactions above are at this minimal bound, we have that the following will converge no faster than the real network:
$
\vyy 
\revrxn^{1/2}_{\varepsilon/2}
\vm
\revrxn^{1/2}_{\varepsilon/2}
\vnn.
$
We can simplify the above by scaling the rate constants by 2 and get $\vyy 
\revrxn^{1}_{\varepsilon}
\vm
\revrxn^{1}_{\varepsilon}
\vnn$,
which has the ODEs
\begin{align*}
    \vyy' &= \varepsilon \vm - \vyy\\
    \vm' &= \vyy + \varepsilon \vnn -(\varepsilon+1) \vm\\
    \vnn'&= \vm - \varepsilon \vnn
\end{align*}
This is a system of first-order,
\todo{KC: We maybe want to reword this from ``this system converges no faster than the original'' to ``the concatenation of \vnn in this system lower bounds the concentration of \vnn in the original system'' since technically \vnn doesn't converge to 1 in this system}
linear differential equations and can be found to have solutions $\vyy(t),\vm(t)$ and $\vnn(t)$ satisfying
\[
\lim_{t \to \infty}\vyy(t) = \frac{\varepsilon^2}{\varepsilon^2+\varepsilon+1},
\quad \lim_{t \to \infty}\vm(t) = \frac{\varepsilon}{\varepsilon^2+\varepsilon+1},
\quad \lim_{t \to \infty}\vnn(t) = \frac{1}{\varepsilon^2+\varepsilon+1}
\]

One can confirm that $1 - \frac{1}{\varepsilon^2+\varepsilon+1} < \epsilon$, i.e., in this simplified system, \vnn\ converges to a value strictly within $\epsilon$ of its claimed convergent value of 1 in the original system, thus gets within, and stays within, $\epsilon$ of 1 after some finite time $t_0$, as we intended to show.
\end{proof}

The following is the first main result of this paper.

\begin{theorem}
\label{thm:multi-threshold-robustly-decidable}
Every multi-threshold predicate is robustly decidable by
a continuous CRD.
\end{theorem}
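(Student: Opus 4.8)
The plan is to prove this by structural induction on the Boolean formula defining the multi-threshold predicate, using \Cref{thm:threshold-robustly-decidable} for the base case and \Cref{lem:stably-decidable-closed-Boolean} for the inductive step. Recall that, by \Cref{def:threshold-predicate}, a multi-threshold predicate $\phi:\Rp^k \to \{0,1\}$ is a finite Boolean combination of threshold predicates $\psi_1,\dots,\psi_m$. Any such combination can be written as a formula built from the connectives $\neg$, $\land$, $\lor$ applied to the atoms $\psi_1,\dots,\psi_m$ (e.g.\ by first putting the formula in negation normal form, or simply because $\{\neg,\land,\lor\}$ is a complete basis for Boolean functions). So it suffices to show, by induction on the structure of this formula, that every subformula denotes a robustly decidable predicate on $\Rp^k$.

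For the base case, each atom $\psi_j$ is a threshold predicate, which is robustly decidable by \Cref{thm:threshold-robustly-decidable}. For the inductive step, suppose $\phi$ is obtained from robustly decidable predicates via one Boolean connective: if $\phi = \neg \phi_1$, if $\phi = \phi_1 \land \phi_2$, or if $\phi = \phi_1 \lor \phi_2$, then in each case $\phi$ is robustly decidable by \Cref{lem:stably-decidable-closed-Boolean} (applied with the inductively-obtained CRDs for $\phi_1$ and $\phi_2$). Since the formula is finite, finitely many applications of the inductive step reach $\phi$ itself, completing the induction; the CRD produced is the one at the root of this construction tree.

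One point worth making explicit in the writeup: the closure lemma is stated for two predicates with a common input alphabet $\Sigma$, and the union/intersection construction there routes each external input $X_i$ into the two sub-CRDs via a reaction $X_i \rxn X_i^1 + X_i^2$. When composing more than two threshold predicates, one applies this construction iteratively (treating an already-combined CRD as one of the two arguments), so at each node the sub-CRDs being merged each get their own private copies of the (already-duplicated) input species; nothing about the analysis in \Cref{lem:stably-decidable-closed-Boolean} changes, since it only used that the sub-CRDs robustly decide their predicates and (via \Cref{lem:twoVoters}) have a single yes/no voter pair summing to $1$ — properties preserved by the output of the construction.

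\textbf{Main obstacle.} There is essentially no new technical difficulty here: all the work was done in \Cref{thm:threshold-robustly-decidable} (reducing a threshold to majority and analyzing the mass-action ODEs, including the delicate $a=b$ boundary case) and in \Cref{lem:twoVoters,lem:stably-decidable-closed-Boolean}. The only thing to be careful about is the bookkeeping of the induction — ensuring the iterated composition is well-defined and that the normalization to a single voter pair (needed to invoke the closure lemma at internal nodes) is reapplied after each combination step, which it can be by \Cref{lem:twoVoters}.
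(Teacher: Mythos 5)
Your proposal is correct and follows exactly the paper's own argument: induction on the structure of the Boolean combination, with \Cref{thm:threshold-robustly-decidable} as the base case and \Cref{lem:stably-decidable-closed-Boolean} as the inductive step. The extra bookkeeping you note about iterated input-duplication and re-normalizing to a single voter pair via \Cref{lem:twoVoters} is a reasonable elaboration of details the paper leaves implicit.
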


\begin{proof}
Immediate from induction on the number of threshold predicates defining the multi-threshold predicate; 
\Cref{thm:threshold-robustly-decidable}
is the base case and
\Cref{lem:stably-decidable-closed-Boolean}
is the inductive case.
\end{proof}








\subsection{Piecewise affine functions are robustly computable}
\label{sec:robust_function_computation}



The next definition captures the idea that a species $S$ converges because reactions stop changing $S$, as opposed to reactions producing $S$ at the same rate other reactions consume $S$.

\begin{definition}
\label{def:approaches-static-equilibrium}
Consider CRN $\calC = (\Lambda, R)$ with initial state $\vx$.
Recall $\rho_t(\alpha)$ is the rate of $\alpha \in R$ at time $t$.
We say that from $\vx$, a species $S \in \Lambda$ 
\emph{approaches static steady state}, if for every reaction $\alpha = (\vr,\vp,k) \in R$ such that $\vr(S) \ne \vp(S)$, 
$\lim_{t\to\infty} \rho_t(\alpha) = 0$.
\end{definition}

Note that if $S$ approaches static steady state then $\lim_{t\to\infty} S(t)$ exists and is finite.
In other words, although the CRN may not converge to a single state
(perhaps some species oscillate or diverge to $\infty$),
not only does the value of $S(t)$ converge, 
a stronger condition holds:
the CRN converges to rate 0 of every reaction net producing or net consuming $S$.
This contrasts \emph{dynamic steady state},
e.g., for $S \revrxn^1_1 A$, where $S$ approaches a limit, but the reactions producing/consuming $S$ have positive rate even at steady state (where $S=A$).

\begin{lemma}
\label{lem:rational-floor-affine-stably-computable}
Every rational floor-affine function $f(\vx) = \max(0, h + \sum_{i=1}^k w_i \cdot \vx(i))$ is stably and robustly computable by a continuous feedforward CRC with output species $Y$,
such that $Y$ approaches static steady state 
and $Y$'s concentration never exceeds $h+\sum_{i, w_i > 0} w_i \cdot \vx(i)$.
\end{lemma}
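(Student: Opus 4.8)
The plan is to use the standard dual-rail subtraction CRC and put essentially all the work into analyzing its mass-action kinetics. After clearing denominators I may write $f(\vx) = \max\!\bigl(0,\, h + \sum_{i=1}^{k} w_i\vx(i)\bigr)$ with every $w_i \in \Z$; set $P = \{i : w_i > 0\}$, $N = \{i : w_i < 0\}$, and $d := h + \sum_{i=1}^k w_i \vx(i)$. I take input species $\Sigma = \{X_1,\dots,X_k\}$, output $Y$, one helper $Z$, initial context $\vi$ with $\max(0,h)$ copies of $Y$ and $\max(0,-h)$ copies of $Z$, and reactions $X_i \rxn^{k_i} w_i Y$ for $i \in P$, $X_i \rxn^{k_i} |w_i| Z$ for $i \in N$, and $Y + Z \rxn^{k_0} \emptyset$ (if one dislikes clearing denominators, use $q_iX_i \to p_iY$ with $w_i = p_i/q_i$; nothing below changes). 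Ordering $\Lambda$ as $X_1,\dots,X_k,Z,Y$ shows the CRC is feedforward (\Cref{def:feedforward}): the only reactions net-producing a species are the $X_i$-reactions, and each of those net-consumes some $X_i$, which precedes $Z$ and $Y$. For stable computation: from any reachable state, drive each remaining $X_i$ to $0$ (its reaction stays enabled until then), reaching a state supported on $\{Y,Z\}$ with $Y - Z = d$ (using $\max(0,h)-\max(0,-h) = h$); then run $Y+Z\to\emptyset$ until $\min(Y,Z) = 0$; the resulting static state has $Y = \max(0,d) = f(\vx)$. So the CRC stably computes $f$, and by \Cref{lem:feedforward-stably-compute-implies-robustly-compute} it robustly computes $f$ as well. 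The remaining quantitative claims (and, as a byproduct, a second proof of robust correctness) I would get from the mass-action ODEs as follows.

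Fix positive rate constants. Each $X_i$ is produced by no reaction and consumed by a single unimolecular one, so $X_i(t) = \vx(i)e^{-k_i t}\to 0$; write $P_Y(t) = \sum_{i\in P} w_i k_i X_i(t)\ge 0$ and $P_Z(t) = \sum_{i\in N}|w_i| k_i X_i(t)\ge 0$, both $O(e^{-\mu t})$ with $\mu = \min_i k_i$. The difference $D(t) := Y(t)-Z(t)$ is unchanged by $Y+Z\to\emptyset$, so $D'(t) = P_Y(t)-P_Z(t)$, and integrating with $D(0) = h$ gives $D(t) = d - \sum_{i=1}^k w_i\vx(i)e^{-k_it}\to d$. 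If $d > 0$, then $Y(t) \ge D(t) \ge d/2$ for all large $t$, so for such $t$ the ODE $Z'(t) = P_Z(t) - k_0 Y(t) Z(t)$ has $p(t) := k_0 Y(t) \ge k_0 d/2 > 0$ and nonnegative forcing $P_Z$; \Cref{lem:linear-odes}, with the time origin shifted past that point, gives $Z(t)\le P_Z(t)/p(t) + Ke^{-(k_0 d/2)t}\to 0$, hence $Y(t) = D(t)+Z(t)\to d$. The case $d < 0$ is symmetric ($Y\to 0$, $Z\to -d$).

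The case $d = 0$ will be the main obstacle: here $P_Y,P_Z$, the drift $D(t) = -\sum_i w_i\vx(i)e^{-k_it}$, and the annihilation rate all vanish, so I must rule out $Y$ and $Z$ settling at a common positive ``dynamic steady state'' value rather than going to $0$. Let $S(t) = Y(t)+Z(t)\ge 0$; since $Y,Z\ge 0$ and $Y-Z = D$, one has $YZ = (S^2-D^2)/4$, so $S$ satisfies exactly $S'(t) = Q(t) - \tfrac{k_0}{2}S(t)^2$, where $Q(t) := P_Y(t)+P_Z(t)+\tfrac{k_0}{2}D(t)^2\ge 0$ and $Q(t)\to 0$. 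Then $S\to 0$: first $\liminf_{t\to\infty}S(t) = 0$, since if $S(t)\ge\epsilon$ for all large $t$ then eventually $S'(t)\le Q(t)-\tfrac{k_0}{2}\epsilon^2<0$ and $S\to-\infty$, impossible; second, for any $L>0$ pick $T$ with $Q(t) < \tfrac{k_0}{2}(L/2)^2$ for $t>T$, so $S'(t) < 0$ whenever $S(t)\ge L/2$ and $t>T$, meaning that once $S$ drops below $L/2$ (which it does, as $\liminf S = 0$) it can never return to $L/2$; hence $\limsup S\le L/2$ for every $L$, i.e. $S\to 0$, so $Y,Z\to 0 = \max(0,d)$. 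In all three cases $Y(t)\to\max(0,d) = f(\vx)$.

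Finally, the reactions net-changing $Y$ are $X_i\to w_i Y$ ($i\in P$), with rate $k_iX_i(t)\to 0$, and $Y+Z\to\emptyset$, with rate $k_0Y(t)Z(t)\to 0$ because $YZ\to 0$ in every case (one factor tends to $0$, the other stays bounded); so $Y$ approaches static steady state (\Cref{def:approaches-static-equilibrium}). And since $Y$ is consumed only by $Y+Z\to\emptyset$ and produced at total rate $P_Y$, I get $Y'(t)\le P_Y(t)$ and hence $Y(t)\le Y(0)+\int_0^{\infty}P_Y = \max(0,h)+\sum_{i\in P}w_i\vx(i)$, which is the claimed upper bound. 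Everything except the $d=0$ step is routine bookkeeping; that step is where ``$Y$ converges'' has to be strengthened to ``$Y$ approaches a static steady state,'' and I expect it to require the Riccati-type estimate on $S$ sketched above (or an equivalent comparison argument).
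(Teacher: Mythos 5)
Your construction is the paper's construction (your $Z$ is the paper's $Y^-$, and the split of the initial context $h$ into $\max(0,h)$ copies of $Y$ versus $\max(0,-h)$ copies of $Z$ is the same), and your feedforwardness, stable-computation, and appeal to \Cref{lem:feedforward-stably-compute-implies-robustly-compute} arguments all match the paper's. Where you genuinely diverge is in justifying the remaining claims: the paper disposes of ``$Y$ approaches static steady state'' in one line by citing the fact that feedforward CRNs converge to static steady states (Lemma~4.8 of~\cite{chen2023rate}), and likewise reads the upper bound on $Y$ directly off the reactions, whereas you re-derive everything from the mass-action ODEs --- exponential decay of the $X_i$, the drift $D=Y-Z\to d$, \Cref{lem:linear-odes} for the sign-definite cases, and a Riccati-type comparison $S'=Q-\tfrac{k_0}{2}S^2$ with $Q\to 0$ for the delicate $d=0$ case to rule out a dynamic steady state. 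That analysis is correct (the identity $YZ=(S^2-D^2)/4$ and the two-step $\liminf$/$\limsup$ argument both check out) and has the virtue of being self-contained, at the cost of reproving a known structural fact; it would be the \emph{only} available route if the CRC were not feedforward. One small discrepancy: your bound $Y(t)\le\max(0,h)+\sum_{i,w_i>0}w_i\vx(i)$ is weaker than the stated $h+\sum_{i,w_i>0}w_i\vx(i)$ when $h<0$ (and is in fact what the construction actually guarantees there --- the paper's own justification only addresses $h>0$); this is immaterial downstream, where only boundedness of $Y_j^P$ is used in \Cref{thm:piecewise-affine-are-robustly-computable}.
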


\begin{proof}
Since all feedforward CRCs that stably compute a function also robustly compute it 
(\Cref{lem:feedforward-stably-compute-implies-robustly-compute}),
it suffices to define a feedforward CRC stably computing $f.$
Define the CRC $\calC = (\Lambda,R,\Sigma,Y,\vi)$,
where $\Sigma = \{X_1,\dots,X_k\}$,
$\Lambda = \Sigma \cup \{Y,Y^-\}$,
and $\vi$ and $R$ are defined below.
Let $f(\vx) = \max(0,g(\vx))$
be rational floor-affine,
with $g: \Rp^k \to \R$ a rational affine function $g(\vx) = h + \sum_{i=1}^k \frac{n_i}{d_i} \cdot \vx(i)$, 
for $h \in \R$, 
$d_1,\dots,d_k \in \Z$, and
$n_1,\dots,n_k \in \Z_{>0}$;
note we have written each rational $w_i$ as $\frac{n_i}{d_i}$.
Start with initial context $\vi = \{h Y\}$ if $h > 0$ and $\vi = \{|h| Y^-\}$ if $h \le 0$.
For every $i$ such that $d_i > 0$,
add reaction $n_i X_i \rxn d_i Y$ to $R$,
and for every $i$ such that $d_i < 0$,
add reaction $n_i X_i \rxn d_i Y^-$ to $R$.
Finally we add reaction $Y+Y^- \rxn \emptyset$. Except for the last reaction, all reactions are entirely independent—none share a reactant with another reaction. 
Since the last reaction does not produce any species, $\calC$ is feedforward. 

From any reachable state, execute reactions of the form 
$n_i X_i \rxn d_i Y$ and 
$n_i X_i \rxn d_i Y^-$ until all inputs are gone.
By inspection of the reactions,
this means $\sum_{i, n_i > 0} \frac{n_i}{d_i} \vx(i)$ amount of $Y$ has been produced in total by such reactions,
and $\sum_{i, n_i < 0} \frac{n_i}{d_i} \vx(i)$ amount of $Y^-$ has been produced in total by such reactions.
Then run reaction $Y + Y^- \to \emptyset$ until both reactants are gone.
If $g(\vx) > 0$ then $Y^-$ will be limiting and this will result in 
$
\sum_{i, n_i > 0} \frac{n_i}{d_i} \vx(i) - 
\sum_{i, n_i < 0} \frac{n_i}{d_i} \vx(i)
=
\sum_{i=1}^k \frac{n_i}{d_i} \vx (i)
= g(\vx) = f(\vx),
$
and if $g(\vx) \leq 0$ then $Y$ will go to 0, so $\calC$ stably computes $f(\vx) = \max(0, g(\vx)).$

Since $\calC$ is feedforward, its steady state is static~\cite[Lemma 4.8]{chen2023rate}.
Finally, the fact that $Y$ cannot exceed 
$h + \sum_{i, w_i > 0} w_i \cdot \vx(i)$
follows from the fact that we start with $h Y$ (if $h > 0$)
and the only reactions producing $Y$ are
$n_i X_i \rxn d_i Y$ for $i$ such that $\frac{n_i}{d_i} = w_i > 0$.
\end{proof}

The following is the second main result of this paper.
\begin{theorem}
\label{thm:piecewise-affine-are-robustly-computable}
Every threshold-piecewise rational floor-affine function is robustly computable by a continuous CRC.
\end{theorem}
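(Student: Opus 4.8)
The plan is to build a ``multiplexer'': compute every affine component and every selecting predicate of $f$ in parallel with independent modules, then use the predicate outputs to route the value of the unique selected component into the output species $Y$. Write $f$ through its affine components $f_1,\dots,f_l$ and multi-threshold predicates $\phi_1,\dots,\phi_l$ as in \Cref{def:piecewise-affine}. First I would duplicate the inputs with reactions $X_j \rxn X_j^{f_1}+\dots+X_j^{f_l}+X_j^{\phi_1}+\dots+X_j^{\phi_l}$ so each module has its own input copy. By \Cref{lem:rational-floor-affine-stably-computable}, a feedforward CRC computes each $f_i$ into a species $G_i$ that approaches static steady state (\Cref{def:approaches-static-equilibrium}) at the value $f_i(\vx)$ and stays bounded; by \Cref{thm:multi-threshold-robustly-decidable} together with \Cref{lem:twoVoters}, each $\phi_i$ is robustly decided by a CRD with a conjugate voter pair $P_i,Q_i$ satisfying $P_i(t)+Q_i(t)=1$ for all $t$, $P_i \to \phi_i(\vx)$, and $Q_i \to 1-\phi_i(\vx)$. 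Nothing downstream influences these modules, so their trajectories are the ``upstream'' data for the analysis --- exactly the setting \Cref{lem:linear-odes,lem:linear-odes-zero} were built for --- and the feedforward modules add no difficulty, since feedforward stable computation is robust (\Cref{lem:feedforward-stably-compute-implies-robustly-compute}).

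The selection module would have the output $Y$ and, for each $i$, a ``ledger'' species $\overline{G}_i$, together with the reversible, predicate-gated transfer $G_i + P_i \rxn Y + \overline{G}_i + P_i$ and $Y + \overline{G}_i + Q_i \rxn G_i + Q_i$. Intuitively, for the unique $i^{*}$ with $\phi_{i^{*}}(\vx)=1$ the forward reaction of module $i^{*}$ has the persistent catalyst $P_{i^{*}}\to 1$ while its recall reaction has the vanishing catalyst $Q_{i^{*}}\to 0$, so all of $G_{i^{*}}$ is pumped irreversibly into $Y$; for $i\ne i^{*}$ the roles of $P_i$ and $Q_i$ are swapped, so module $i$ only transfers a finite transient ``leak'' into $Y$, and the ledger $\overline{G}_i$ records exactly how much, so the recall reaction removes precisely that much back out of $Y$. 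To keep the transfer from racing with the internal $Y$/$Y^{-}$ cancellation inside the CRC for $f_i$, I would run that computation overshoot-free: transfer the positive part $h_i^{+}+\sum_{w_{ij}>0} w_{ij}\vx(j)$ and the negative part $h_i^{-}+\sum_{w_{ij}<0}|w_{ij}|\vx(j)$ separately, into $Y$ and a companion $Y^{-}$, with one downstream cancellation $Y+Y^{-}\rxn\emptyset$ re-applying the floor, each monotone intermediate getting its own ledger.

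For correctness I would write the coupled ODEs for $Y$, the $G_i$'s, and the $\overline{G}_i$'s, treating $G_i,P_i,Q_i$ (whose limits are known) as given functions of time. For $i=i^{*}$ the ODE for $G_{i^{*}}$ has the form $G_{i^{*}}' = (\text{production}\to 0) - kP_{i^{*}}\,G_{i^{*}} + (\text{recall}\to 0)$ with $kP_{i^{*}}\to k>0$, so \Cref{lem:linear-odes} gives $G_{i^{*}}\to 0$, and the amount delivered to $Y$ converges to $f_{i^{*}}(\vx)$. For $i\ne i^{*}$ the forward rate, proportional to $G_i P_i$, tends to $0$, so the total leak $\ell_i=\int_0^{\infty}(\text{forward rate})\,dt$ is finite, and $\overline{G}_i' = (\text{forward rate}) - (\text{recall rate})$ with recall rate proportional to $Y\,\overline{G}_i\,Q_i$; since $Q_i\to 1$ and, when $f(\vx)>0$, $Y$ is bounded below by a positive constant, \Cref{lem:linear-odes} forces $\overline{G}_i\to 0$, i.e.\ the entire leak is recalled and module $i$ contributes $0$ to $Y$ in the limit. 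Summing over $i$ gives $Y\to f_{i^{*}}(\vx)=f(\vx)$. The corner case $f(\vx)=0$ (so $G_{i^{*}}\to 0$ and $Y$ is fed only by leaks whose outstanding ledgers sum to the current value of $Y$) I would treat separately, where the bookkeeping yields an estimate of the form $Y'\le \epsilon(t)-cY^{2}$ with $\epsilon\to 0$, forcing $Y\to 0$; \Cref{lem:linear-odes-zero} handles the sub-cases in which a relevant catalyst is only eventually positive.

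The hard part will be exactly this interaction between the non-feedforward selection module and the non-feedforward, possibly slowly-converging predicate CRDs. A naive catalytic gate $G_i+P_i\rxn Y+P_i$ would, for $i\ne i^{*}$, irreversibly leak a positive amount into $Y$ for as long as $P_i$ is positive --- which is forever in the weak sense that $P_i\to 0$ only in the limit and $\int_0^{\infty}P_i$ may diverge --- and that leak is unrecoverable, because $Y$ is shared and carries no record of its provenance. The ledger species and the reversibility of the transfer are precisely what make every leak recoverable, and the bulk of the work is the ODE bookkeeping, via \Cref{lem:linear-odes,lem:linear-odes-zero}, showing that every leak is eventually undone, including in the $f(\vx)=0$ corner case and in the presence of the floor re-application.
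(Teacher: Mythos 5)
Your high-level strategy is the same as the paper's: compute all affine components and all selecting predicates in parallel on duplicated inputs, normalize each predicate to a conjugate voter pair via \Cref{lem:twoVoters}, gate the transfer of each component's value into a shared output by those voters, keep a record of every transfer so that transfers by losing modules can be undone, and split the output into a positive and a negative rail with a single downstream cancellation so that the floor and the recall both work. You also correctly identify the two real obstacles (the sub-CRC must be allowed to consume its own output without the selection layer interfering, and a naive catalytic gate leaks irrecoverably), and your ledger species $\overline{G}_i$ plays exactly the role of the paper's ``activated'' species $\hat Y_j^P,\hat Y_j^C$.

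The one substantive divergence is your recall reaction $Y + \overline{G}_i + Q_i \rxn G_i + Q_i$, which consumes the shared output $Y$, so its rate depends on $Y$. The paper deliberately avoids this: its recall $F_j + \hat Y_j^P \rxn F_j + Y_j^P + Y^C$ \emph{produces} $Y^C$ instead of consuming $Y^P$, which is equivalent under dual rail but ensures that no reaction rate ever depends on the global output rails; the entire selection layer is then analyzable module by module with \Cref{lem:linear-odes,lem:linear-odes-zero}, and the global output is recovered purely from a conservation law. In your design the recall rate is proportional to $Y\,\overline{G}_i\,Q_i$, and the argument that $\overline{G}_i \to 0$ requires $Y$ to be bounded below by a positive constant --- but that lower bound is essentially what you are trying to prove ($Y \to f(\vx)$), and I do not see how to break the circularity: before the losing modules' ledgers are recalled, $Y = \sum_i \overline{G}_i - C$ admits no obvious positive lower bound, and in the $f(\vx)=0$ case the recall rate genuinely vanishes. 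Your proposed estimate $Y' \le \epsilon(t) - cY^2$ for that corner case is not derived and is not obviously available (the $-cY^2$ term requires $Y^- \gtrsim Y$, which again is part of what must be shown). So the construction is sound in spirit, but to close the proof you should either carry out this coupled analysis in full or, more simply, adopt the paper's trick of replacing ``consume one unit of the positive rail'' by ``produce one unit of the negative rail'' in the recall reaction, which removes the feedback entirely and reduces the correctness proof to per-module applications of \Cref{lem:linear-odes-zero} plus the conservation law.
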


\begin{proof}
    This construction is similar to Lemma 4.4 in~\cite{CheDotSolNaCo},
    though that proof was different (and simpler),
    applying to stable computation in the discrete CRN model.
    For a threshold-piecewise rational floor-affine function $f:\Rp^k \to \Rp$,
    recall the floor-affine functions $f_1,\dots,f_l: \Rp^k \to \Rp$
    and the multi-threshold predicates $\phi_1,\dots,\phi_\ell: \Rp^k \to \{0,1\}$
    from \Cref{def:piecewise-affine}.
    We equivalently think of each $\phi_j$ as a set $R_j = \phi_j^{-1}(1)$ that makes $\phi_j$ true,
    i.e., 
    $f$ partitions the $\Rp^k$ into $\ell$ disjoint regions $R_1,\dots,R_\ell$.

    \newcommand{\D}{{\mathcal{D}}}
    \newcommand{\C}{{\mathcal{C}}}
    Let $j \in \{1,\dots,\ell\}$.
    By \Cref{lem:stably-decidable-closed-Boolean}
    there is a CRD 
    $\calD_j = (\Lambda_j^\D, R_j^\D, \Sigma_j^\D, \{T_j\}, \{F_j\}, \vi_j^\D)$,
    robustly deciding $\phi_j$,
    where we may assume a single yes voter $T_j$ and single no voter $F_j$ for each by \Cref{lem:twoVoters}.
    By \Cref{lem:rational-floor-affine-stably-computable}
    there is a CRC 
    $\calC_j = (\Lambda_j^\C, R_j^\C, \Sigma_j^\C, Y_j, \vi_j^\C)$ 
    robustly computing $f_j$.
    We construct the CRC $\mathcal{C}=(\Lambda, R, \Sigma, Y^P, \vi)$ to robustly compute $f$,
    where $\Sigma = \{X_1,\ldots,X_k\}$.
    We transform the input species from $\Sigma$ to those of each $\Sigma_i^\D, \Sigma_i^\C$ 
    by adding reactions of the form,
    for each $1 \leq i \leq k$,
    $X_i \rxn X_i^{1,\D} + X_i^{1,\C} + \dots + X_i^{\ell,\D} + X_i^{\ell,\C}$ 
    (similar to the construction in the proof of \cref{lem:stably-decidable-closed-Boolean}).
    
    Ideally we would convert the correct ``local'' output $Y_j$ to the ``global'' output $Y$, but $\calC_j$ may need to consume $Y_j$; 
    in fact it provably must do so if computing a non-monotone function such as $x_1 - x_2.$
    To avoid interfering with $\calC_j$'s computation,
    we instead modify its reactions to have additional products that will be used in the conversion.
    Since we do not change any reactants, and since the new products are new species not appearing in $\calC_j$,
    these new products will preserve the kinetics of the CRC $\calC_j$,
    while enabling the constructed CRC to ``read'' the output of $\calC_j$.\footnote{
        We use a trick known as \emph{dual-rail encoding}~\cite{CheDotSolNaCo,chen2013programmable};
        we use it more like \cite{CheDotSolNaCo} as a handy ``intermediate'' proof technique, not as in \cite{chen2023rate} in which inputs and outputs of CRCs are encoded in dual-rail.
    }

    For each reaction net producing $p$ copies of $Y_j$ in $\calC_j$,
    for example 
    $A+Y_j \rxn B+3Y_j$ that net produces 2 $Y_j$,
    we modify the reaction to also produce $p$ copies of a new species $Y_j^P$:
    $A+Y_j \rxn B+3Y_j + 2Y_j^P$.
    For each reaction net consuming $c$ copies of $Y_j$ in $\calC_j$,
    for example 
    $B+4Y_j \rxn Y_j$ that net consumes 3 $Y_j$,
    we modify the reaction to also produce $c$ copies of a new species $Y_j^C$:
    $B+4Y_j \rxn Y_j + 3Y_j^C$.
    This maintains that $Y_j(t) = Y_j^P(t) - Y_j^C(t)$ for all $t$.
    $Y_j^P$ ``records'' the total amount of $Y_j$ that has ever been produced by any reaction,
    and $Y_j^C$ the total amount of $Y_j$ that has ever been consumed,
    so that their difference is the net amount of $Y_j$ produced.
    If $\vi_j^\C(Y_j) > 0$,
    we also modify $\vi$ so that $\vi(Y_j^P) = \vi_j(Y_j)$,
    i.e., if $\calC_j$ starts with some $Y_j$ already ``produced'', we want to start with the same amount of $Y_j^P$ to maintain 
    $Y_j(t) = Y_j^P(t) - Y_j^C(t)$.
    (captured by the multiset term $\{\vi_j^\C(Y_j)\ Y_j^P\}$ below)
    
    Let 
    $
    \Lambda = \{Y^P, Y^C \} \cup \Sigma \cup \bigcup_{j=1}^\ell \Lambda_j^\D \cup \Lambda_j^\C \cup \{Y_j^P,Y_j^C\}, $
    and 
    $\vi = \sum_{j=1}^\ell \vi_j^\calD + \vi_j^\calC + \{\vi_j^\C(Y_j)\ Y_j^P\}.$
    
    $R$ includes all reactions from each $\calC_j$ and $\calD_j$,
    after modifying some as explained above,
    and the following reactions for each $1 \leq j \leq \ell$:

    \begin{eqnarray}
        T_j + Y_j^P &\rxn^{k_{j_1}}& T_j + \hat{Y}_j^P + Y^P
        \label{rxn:Tj_activate_YjP}
    \\
        T_j + Y_j^C &\rxn^{k_{j_2}}& T_j + \hat{Y}_j^C + Y^C
        \label{rxn:Tj_activate_YjC}
    \\
        F_j + \hat{Y}_j^P &\rxn^{k_{j_3}}& F_j^P + Y_j^P  + Y^C
        \label{rxn:Fj_deactivate_YjP}
    \\
        F_j + \hat{Y}_j^C &\rxn^{k_{j_4}}& F_j^C + Y_j^C  + Y^P
        \label{rxn:Fj_deactivate_YjC}
    \\
        Y^P + Y^C &\rxn^{k}& \emptyset
        \label{rxn:Yp-Yc-to-nothing}
    \end{eqnarray}
    
    We think of the outputs $Y_i^P, Y_i^C$ produced by $\calC_i$ as ``inactive'',
    with ``active'' versions $\hat Y_i^P, \hat Y_i^C$.
    Then $T_j$ activates them and $F_j$ deactivates them,
    while maintaining that as much $Y^P,Y^C$ is produced as the amount of $\hat Y_j^P, \hat Y_j^C$ that has been net activated by the voters $T_j,F_j$.
    Furthermore, to ease analysis of the ODEs,
    we ensure that $Y^P$ and $Y^C$ do not influence the reaction rates.
    For, say, reaction \eqref{rxn:Fj_deactivate_YjP} to reverse the effect of reaction \eqref{rxn:Tj_activate_YjP} straightforwardly,
    it would consume $Y^P$ as a reactant.
    However, with dual rail it is equivalent to produce $Y^C$, simplifying the expression for the rate of \eqref{rxn:Fj_deactivate_YjP}.

    By inspection of reactions
    \eqref{rxn:Tj_activate_YjP},
    \eqref{rxn:Tj_activate_YjC}, 
    \eqref{rxn:Fj_deactivate_YjP}, \eqref{rxn:Fj_deactivate_YjC},
    \eqref{rxn:Yp-Yc-to-nothing},
    as well as the modified reactions of $\calC_j$,
    one can verify the conservation law that for all $t$,
    \begin{equation}
    \label{eq:conservation_law1}
        \sum_{j=1}^\ell \hat Y_j^P(t) 
        -
        \sum_{j=1}^\ell \hat Y_j^C(t) 
        = Y^P(t) - Y^C(t).
    \end{equation}
    since each reaction changes each side of \eqref{eq:conservation_law1} by the same amount.
    Similarly, the following conservation law holds for each $1 \leq j \leq \ell$:
    \begin{equation}
    \label{eq:conservation_law2}
          Y_j^P(t) - Y_j^C(t) 
        + \hat Y_j^P(t) - \hat Y_j^C(t)
        = Y_j(t).
    \end{equation}
    This completes the description of the construction of $\calC$.

    We now prove that $\calC$ robustly computes $f$.
    Let $\vx \in \Rp^k$.
    Assume without loss of generality that $\phi_1(\vx) = 1$
    (thus $\phi_j(\vx)=0$ for all $j>1$),
    i.e.,
    that $f(\vx) = f_1(\vx)$,
    so that our goal is to converge to activating all 
    $\hat Y_1^P, \hat Y_1^C$
    and deactivating all
    $\hat Y_j^P, \hat Y_j^C$ for $j>1.$
    Since each $\calD_j$ robustly decides $\phi_j$,
    $\lim_{t\to\infty} T_1 = 1$,
    $\lim_{t\to\infty} F_1 = 0$,
    and for all $j>1$,
    $\lim_{t\to\infty} T_j = 0$,
    $\lim_{t\to\infty} F_j = 1$.


    By the conservation law shown above \eqref{eq:conservation_law1}, to demonstrate that $\lim_{t \to \infty} Y^P(t) = f_1(\vx)$,
    it suffices to show that $\hat{Y}_1^P(t)-\hat{Y}_1^C(t)$ approaches $f_1(\vx)$ while $\hat{Y}_j^P(t)$ and $\hat{Y}_j^C(t)$ approach 0 for all $j > 1$.
    We first will show that for all $j>1$, $\hat{Y}_j^P(t)$ and $\hat{Y}_j^C(t)$ approach 0. 
    For each $j$, the ODE for $\hat{Y}_j^P$ is given by
    $
    \frac{d}{dt}\hat{Y}_1^P = k_{j_1}T_jY_j^P-k_{j_3}F_j\hat{Y}_j^P.
    $
    We apply \cref{lem:linear-odes-zero} to obtain that $\hat{Y}_1^P$ is bounded as
    \[
    \hat{Y}_1^P \le \frac{5k_{j_1}T_j(t)Y_j^P(t)}{k_{j_3}F_j(t)+(t^2+1)^{-1}}+Kk_{j_3}\exp\pars{-\int_0^tF_j(s)ds}.
    \]
    By the correctness of $\calD_j$, for all $j > 1$ the concentrations $T_j(t)$ and $F_j(t)$ approach $0$ and $1$ respectively as $t$ approaches infinity. 
    Further, the function $Y_j^P(t)$ is bounded above by the construction in \cref{lem:rational-floor-affine-stably-computable}. 
    Hence, the quotient term converges to zero as $t$ approaches infinity. We also observe that $F_j(t)$ approaching 1 implies that the integral $\int_0^tF_j(s)ds$ diverges as $t \to \infty$, 
    which shows that the exponential term also converges to 0. 
    This demonstrates that $\lim_{t \to \infty} \hat{Y}_j^P(t) = 0$, and a similar argument with the ODE for $\hat{Y}^C_j(t)$ shows that $\lim_{t \to \infty}\hat{Y}^C_j(t) = 0$ as well. 
    To show that $\hat{Y}_1^P(t)-\hat{Y}_1^C(t)$ approaches $f_1(\vx)$, 
    we appeal to the second conservation law derived \eqref{eq:conservation_law2}. Since $\calC_1$ robustly computes $f_1$, the limit $\lim_{t \to \infty}Y_1(t) = f_1(\vx)$. 
    Hence, to show the result we can show simply show that $Y_j^P(t)$ and $Y_j^C(t)$ approach zero. 
    Indeed, the ODE for $Y_j^P(t)$ is given by
    $
    \frac{d}{dt}Y_1^P(t) = k_{3}F_1(t)\hat{Y}_1^P(t)+P_1(\Lambda_1,t)-k_1T_1(t)Y_1^P(t)
    $
    where $P_1(\Lambda_1,t)$ is a function determined by the production of $Y_1^P$ in the CRC $\calC_1$. We apply \cref{lem:linear-odes-zero} to the ODE
    with $p(t) = T_1(t)$ and $g(t) = F_1(t)\hat{Y}_1^P(t) + P(\Lambda_1,t)$ to obtain the bound
    $
    Y_1^P(t) \le \frac{k_{3}F_1(t)\hat{Y}_1^P(t)+P_1(\Lambda_1,t)}{k_1T_1(t)+(t^2+1)^{-1}}+K\exp\pars{-\int_0^tk_1T_1(s)ds}.
    $
    The construction given in \cref{lem:rational-floor-affine-stably-computable} has $Y_1$ approach a static steady state, so the rate of production $P_1(\Lambda_1,t)$ tends to zero as $t \to \infty$. Furthermore, and duel to the previous case, the correctness of $\calD_1$ shows that $F_1(t) \to 0$ and $T_1(t) \to 1$, so $Y_1^P(t) \to 0$. Similar arguments also demonstrate that $Y_1^C(t) \to 0$ as well. To complete the result, we note that by inspection of the reactions, at any time $t$ we have $Y^P(t) \geq Y^C(t)$,
    so reaction \eqref{rxn:Yp-Yc-to-nothing} causes $Y^C(t) \to 0$. Putting it all together, we have shown the desired limit:
    \begin{align*}
        \lim_{t \to \infty}Y^P(t) &= \lim_{t \to \infty}   \pars{    \sum_{j=1}^\ell \hat Y_j^P(t) 
        -
        \sum_{j=1}^\ell \hat Y_j^C(t) + Y^C(t)}\\
        &= \lim_{t \to \infty} \pars{\hat{Y}_1^P(t)-\hat{Y}_1^C(t)} + \lim_{t\to\infty}\pars{\sum_{j=2}^\ell \hat Y_j^P(t) 
        -
        \sum_{j=2}^\ell \hat{Y}_j^C(t)} + \lim_{t\to \infty}Y^C(t)\\
        &= f_1(\vx)+0+0 = f(\vx).
        \qedhere
    \end{align*}
\end{proof}

Note that the discontinuous nature of threshold predicates enables the robust computation of functions that are discontinuous even within the strictly positive orthant;
e.g., \Cref{fig:function}.

\begin{toappendix}
    \section{Examples}
\label{sec:examples}


This section shows plots of examples of CRNs that robustly compute some predicates and functions. 
All plots were generated using the Python package gpac~\cite{gpac}.

\subsection{Majority}
Recall the reactions of majority CRD.
See \Cref{fig:majority_lessA,fig:majority_equalAB} for plots of this CRD for varying rate constants and initial values of $A,B$.
\begin{align}
    A + N &\rxn^{k_1} A + Y
    \tag{\ref{rxn:maj-an}}
    \\
    B + Y &\rxn^{k_2} B + N
    \tag{\ref{rxn:maj-by}}
    \\
    A + B &\rxn^{k_3} \emptyset
    \tag{\ref{rxn:maj-ab}}
    \\
    C + Y &\rxn^{k_4} C + N
    \tag{\ref{rxn:maj-cy}}
    \\
    3C &\rxn^{k_5} \emptyset
    \tag{\ref{rxn:maj-3c}}
\end{align}

\begin{figure}[htbp]
\centering
\includegraphics[width=0.5\textwidth]{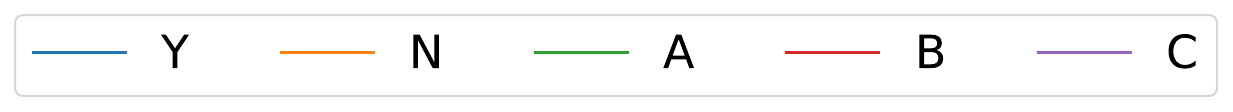}
\\
\begin{subfigure}[b]{0.49\textwidth}
    \centering
    \includegraphics[width=\textwidth]{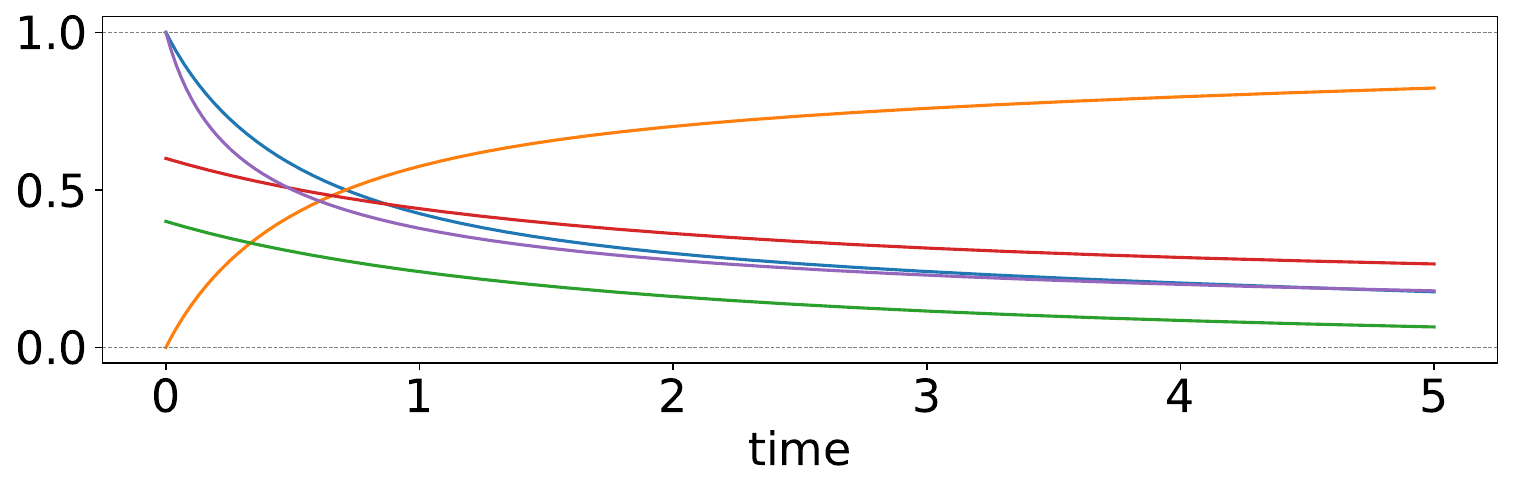}
    \caption{``Short'' time, equal rate constants.}
    \label{fig:majority_lessA_short_equal-rates}
\end{subfigure}
\hfill
\begin{subfigure}[b]{0.49\textwidth}
    \centering
    \includegraphics[width=\textwidth]{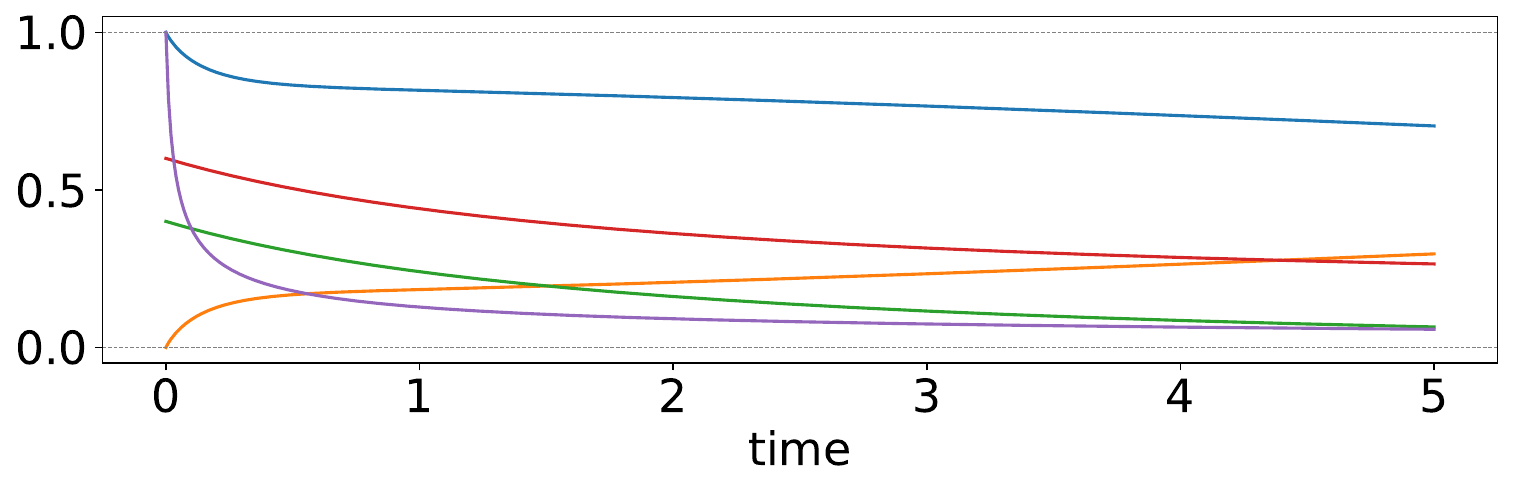}
    \caption{``Short'' time, unequal rate constants.}
    \label{fig:majority_lessA_short_unequal-rates}
\end{subfigure}

\vspace{0.5cm}

\begin{subfigure}[b]{0.49\textwidth}
    \centering
    \includegraphics[width=\textwidth]{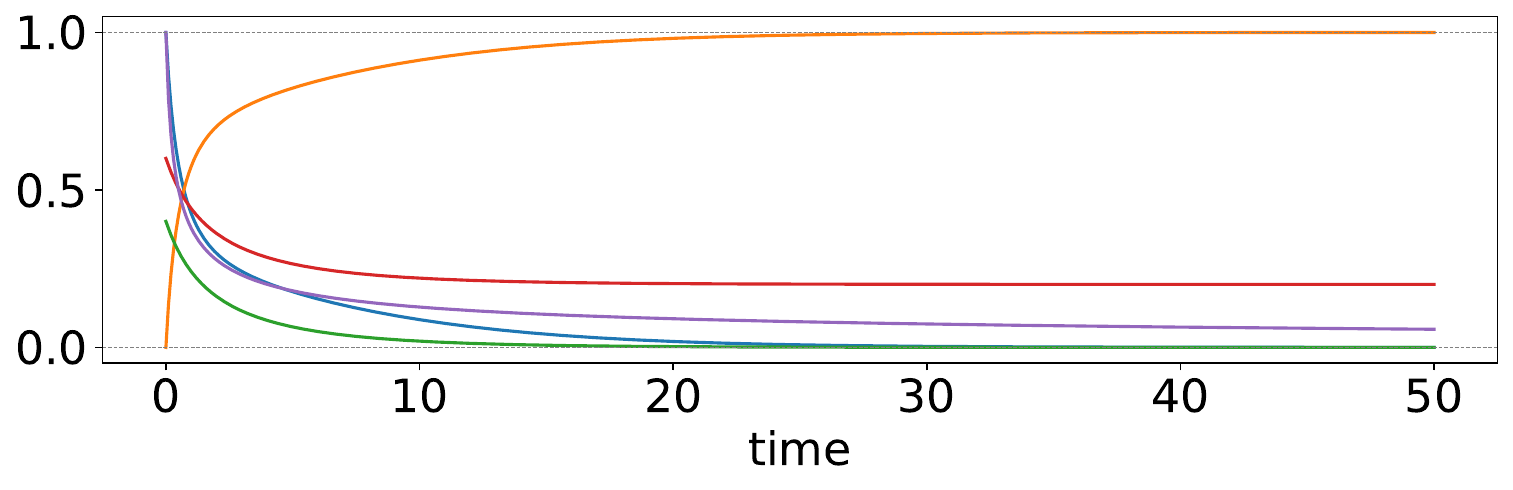}
    \caption{``Long'' time, equal rate constants.}
    \label{fig:majority_lessA_long_equal-rates}
\end{subfigure}
\hfill
\begin{subfigure}[b]{0.49\textwidth}
    \centering
    \includegraphics[width=\textwidth]{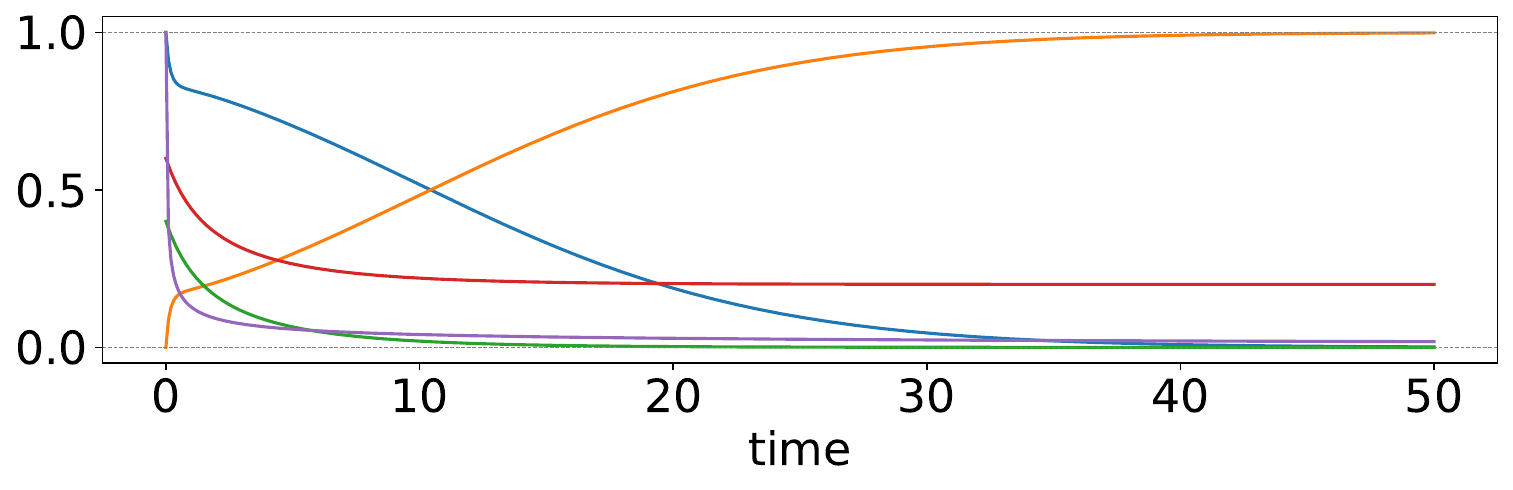}
    \caption{``Long'' time, unequal rate constants.}
    \label{fig:majority_lessA_long_unequal-rates}
\end{subfigure}
\caption{
    The majority CRN of \Cref{thm:majority}, starting with $A=0.4, B=0.6, C=1, Y=1$, shown with different rate constants, simulated for different times to see convergence. 
    Since $A < B$, the correct answer is no, so we should see $N$ converge to 1 and $Y$ to 0. 
    On the left all rate constants are equal to 1. 
    On the right, ``adversarial'' rate constants of reactions 
    \eqref{rxn:maj-an} and \eqref{rxn:maj-3c} are set to 10;
    since the correct answer is no, these reactions ``fight'' correct convergence.
    We set their rate constants larger to demonstrate that the CRD still converges to the correct answer, 
    though not by time 10 
    (\Cref{fig:majority_lessA_short_equal-rates,fig:majority_lessA_short_unequal-rates}).
    By time 100 both have converged
    (\Cref{fig:majority_lessA_long_equal-rates,fig:majority_lessA_long_unequal-rates}).
}
\label{fig:majority_lessA}
\end{figure}

\begin{figure}[htbp]
\centering
\includegraphics[width=0.4\textwidth]{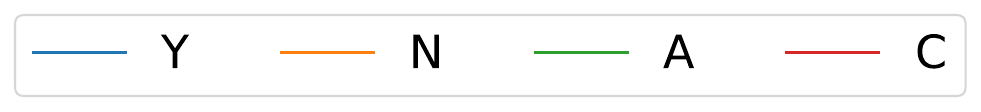}
\\
\begin{subfigure}[b]{0.49\textwidth}
    \centering
    \includegraphics[width=\textwidth]{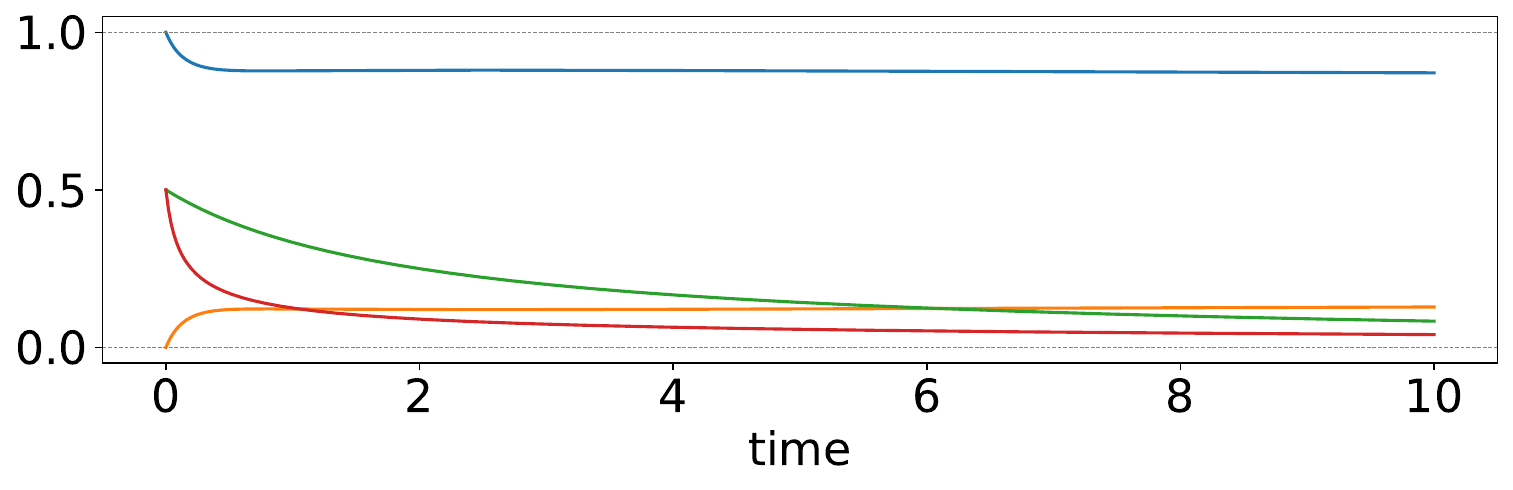}
    \caption{``Short'' time, linear $y$-scale.}
    \label{fig:majority_equalAB_short_unequal-rates}
\end{subfigure}
\hfill
\begin{subfigure}[b]{0.49\textwidth}
    \centering
    \includegraphics[width=\textwidth]{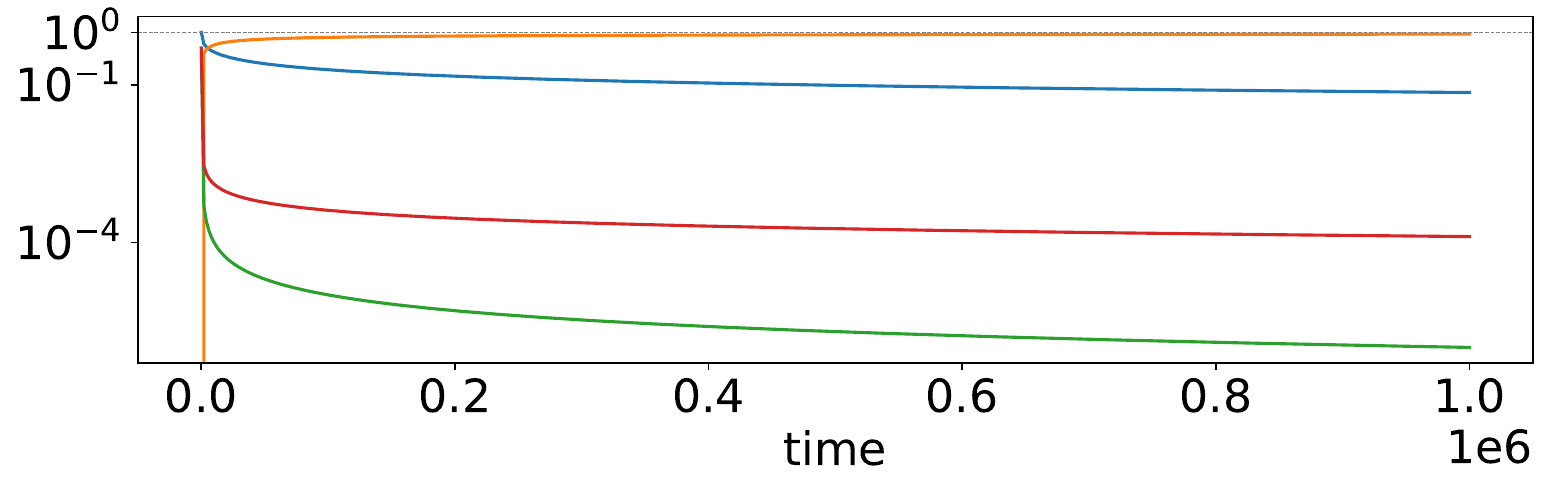}
    \caption{``Medium'' time, log $y$-scale.}
    \label{fig:majority_equalAB_medium_unequal-rates}
\end{subfigure}

\vspace{0.5cm}

\begin{subfigure}[b]{0.49\textwidth}
    \centering
    \includegraphics[width=\textwidth]{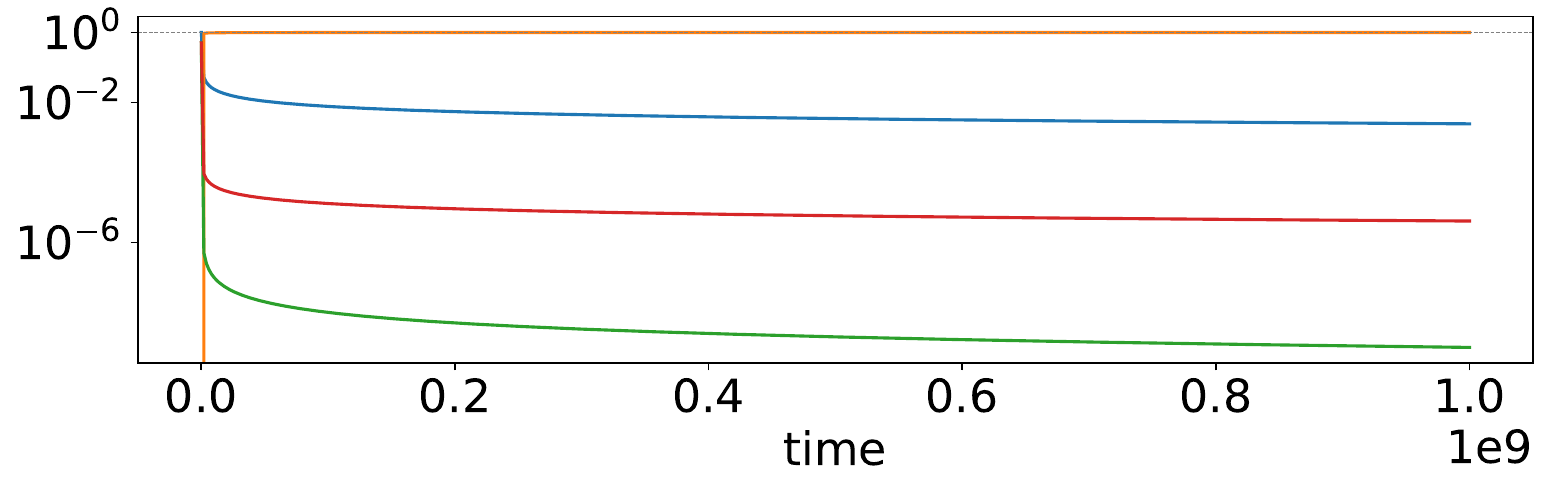}
    \caption{``Long'' time, log $y$-scale.}
    \label{fig:majority_equalAB_long_unequal-rates}
\end{subfigure}
\hfill
\begin{subfigure}[b]{0.49\textwidth}
    \centering
    \includegraphics[width=\textwidth]{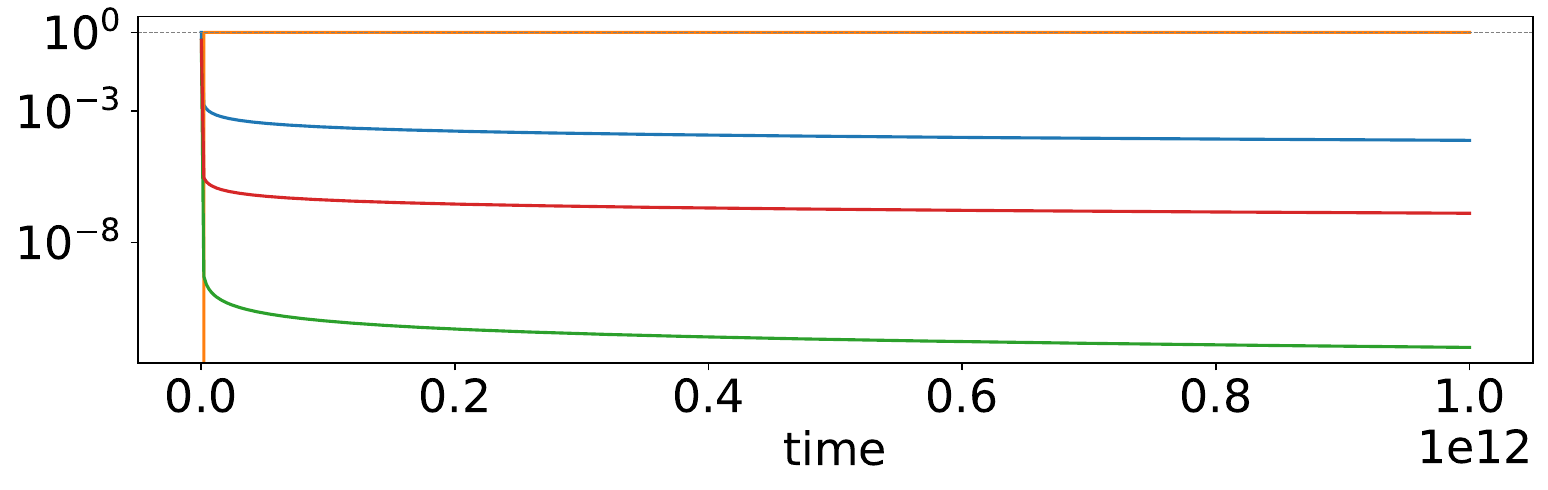}
    \caption{``Very long'' time, log $y$-scale.}
    \label{fig:majority_equalAB_very-long_unequal-rates}
\end{subfigure}
\caption{
    The majority CRN of \Cref{thm:majority}, similar to \Cref{fig:majority_lessA}, but starting with equal $A,B$: $A=B=0.5, C=0.5, Y=1$. 
    Since $A \not > B$, 
    $N$ should converge to 1 and $Y$ to 0.
    We show only ``adversarial'' rate constants as in
    \Cref{fig:majority_lessA_short_unequal-rates,fig:majority_lessA_long_unequal-rates}.
    Also $A(t)=B(t)$ for all $t$, so we omit a plot of $B$.
    This case demonstrates the need for $C$ to ``tie-break'': 
    both $A$ and $B$ converge to 0,
    so we rely on reaction \eqref{rxn:maj-cy} to convert all $Y$ to $N$.
    That in turn requires $C$ to converge to 0 asymptotically more slowly than $A$ to outcompete reaction \eqref{rxn:maj-an}.
    We focus on convergence to 0 by setting a logarithmic $y$-scale for \Cref{fig:majority_equalAB_medium_unequal-rates,fig:majority_equalAB_long_unequal-rates,fig:majority_equalAB_very-long_unequal-rates}.
    We start $C(0)=0.5$ instead of 1 to emphasize the asymptotics: 
    $C$ initially goes lower than $A$
    (\Cref{fig:majority_equalAB_short_unequal-rates}), 
    but eventually $A$ catches up and is much lower than $C$
    (\Cref{fig:majority_equalAB_medium_unequal-rates,fig:majority_equalAB_long_unequal-rates,fig:majority_equalAB_very-long_unequal-rates}).
    All rate constants are ``adversarial'' as in \Cref{fig:majority_lessA_short_unequal-rates,fig:majority_lessA_long_unequal-rates},
    10 for reactions \eqref{rxn:maj-an} and \eqref{rxn:maj-3c} and 1 for the others.
    Note that convergence takes much longer when $A=B$.
    Since $Y(t)+N(t)=1$ for all $t$, we can verify that $N$ approaches 1 because $Y$ approaches 0
    (decreasing final values of $Y$ in 
    \Cref{fig:majority_equalAB_short_unequal-rates,fig:majority_equalAB_medium_unequal-rates,fig:majority_equalAB_long_unequal-rates,fig:majority_equalAB_very-long_unequal-rates}), 
    although $Y$ converges to 0 even more slowly than $A$ or $C$.
}
\label{fig:majority_equalAB}
\end{figure}

\subsection{Piecewise affine function}
We give an example of a threshold-piecewise rational floor-affine function $f$ that is defined as:

\[
f = \begin{cases}
    x_1 - x_2 & \text{ if $x_1>x_2$,}
    \\
    \min(x_1, x_2) & \text{ if $x_2>x_1$.}
    \end{cases}
\]

CRC $\calC_1$ with input species $X_1^1,X_2^1$ (copies of input species $X_1,X_2$ to the main CRC) and output Species $Y_1$ with the following reactions computes $f_1 = x_1 - x_2$
\[
X_1^1 \rxn^{k_1} Y_1
\]
\[
X_2^1 + Y_1 \rxn^{k_2} \emptyset
\]

Similarly CRC $\calC_2$ with input species $X_1^2,X_2^2$ and output species $Y_2$ with the following reaction computes $f_2 = \min(x_1, x_2)$
\[
X_1^2 + X_2^2 \rxn^{k_3} Y_2
\]

We apply the construction from \cref{thm:majority,thm:piecewise-affine-are-robustly-computable} on $\calC_1,\calC_2$.
See \cref{fig:function} for plots of the resulting CRC that computes $f$.

















\begin{figure}[htbp]
\centering
\includegraphics[width=0.6\textwidth]{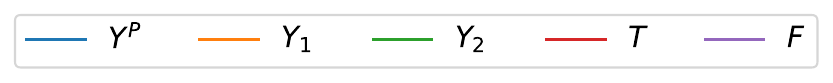}
\\
\begin{subfigure}[b]{0.49\textwidth}
    \centering
    \includegraphics[width=\textwidth]{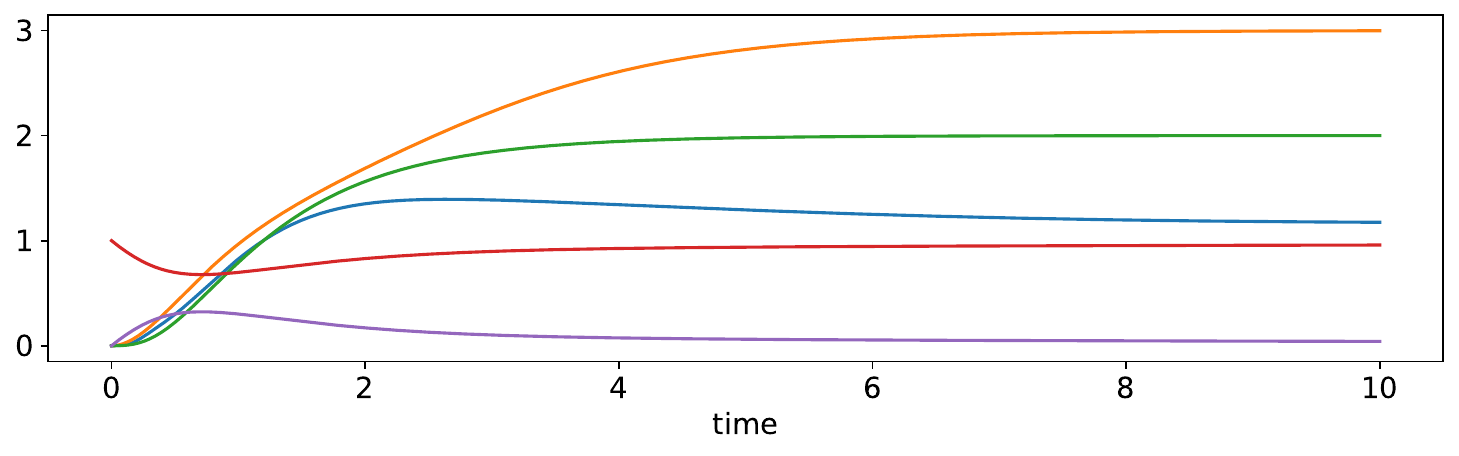}
    \caption{``Short'' time, underlying functions $f_1,f_2$ (output $f_j(\vx)=Y_j$ represented as $Y_j^P - Y_j^C + \hat Y_j^P - \hat Y_J^C$) and predicate have converged.}
    \label{fig:function_short}
\end{subfigure}
\hfill
\begin{subfigure}[b]{0.49\textwidth}
    \centering
    \includegraphics[width=\textwidth]{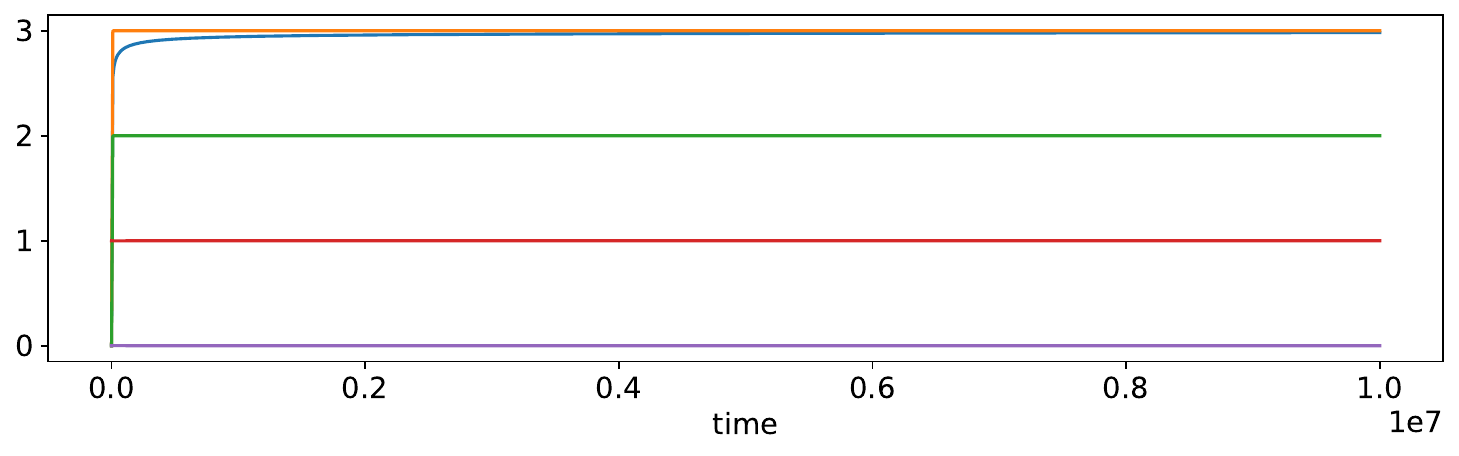}
    \caption{``Long'' time.
    Piecewise affine function (output species $Y^P$) 
    $f(x_1,x_2)$ ($= f_1(x_1,x_2)$ since $x_1 > x_2$)
    has converged.}
    \label{fig:function_long}
\end{subfigure}
\caption{
    Example illustrating construction of 
    \Cref{thm:piecewise-affine-are-robustly-computable},
    where $f_1(x_1,x_2) = x_1-x_2$ and $f_2(x_1,x_2) = \min(x_1,x_2)$,
    and $\phi_1(x_1,x_2) = 1 \iff x_1 > x_2 \iff \overline{\phi}_1(x_1,x_2)$,
    and $f(x_1,x_2) = f_1(x_1,x_2)$ if $\phi_1(x_1,x_2)=1$ (so $\phi_1(x_1,x_2)=0$)
    and  $f(x_1,x_2) = f_2(x_1,x_2)$ otherwise,
    starting with concentrations
    $X_1=5, X_2=2, C=1, T=1$.
    Since there are only two predicates here, which are negations of each other, for simplicity we simply have a CRD for only $\phi_1$ and let its yes and no voters $T,F$ play the reversed roles that the voters for $\phi_2$ would.
    Since $x_1 > x_2$,
    the correct output $f(x_1,x_2)$ is $f_1(x_1,x_2) = x_1-x_2 = 5-2=3.$
    \Cref{fig:function_short} shows that after a fairly short time,
    $f_1,f_2,\phi$ have converged.
    In particular, the proof of \Cref{thm:piecewise-affine-are-robustly-computable} shows that for $j\in\{1,2\}$, $Y_j^P - Y_j^C + \hat{Y}_j^P - \hat{Y}_j^C$ converges to $f_j$,
    so we plot those values to represent the output of the two CRCs computing $f_1$ and $f_2$ respectively, 
    i.e., what each of their outputs would be if simulated without the reactions of \Cref{thm:piecewise-affine-are-robustly-computable} 
    that ``activate'' those outputs converting them to the global output $Y$.
    Convergence of $\phi$ is shown by $T$ converging to 1 and $F$ to 0.
}
\label{fig:function}
\end{figure}

\end{toappendix}

\begin{toappendix}
    \section{Stable predicate computation by continuous CRNs}
\label{sec:stable_predicate_computation}
In this section, we show that the class of predicates stably decidable by continuous CRNs is exactly the detection predicates.
Intuitively, $\phi : \R_{\ge 0}^k \to \{0,1\}$ is a \emph{detection predicate} if it is a Boolean combination of questions in the form ``is initial concentration of species $S$ positive or not?''. We formalize this notion as follows.

\begin{definition}
\label{def:detection-predicate}
    A predicate $\phi:\Rp^k \to \{0,1\}$ is a \emph{simple detection predicate} if there is a $1 \leq i \leq k$ such that the extension of $\phi$ (the set $\phi^{-1}(1)$ of all input vectors that make $\phi$ true) is of the form 
    $
    \{\vx \in \Rp^k \mid \vx(i) > 0\}.
    $
    A \emph{detection predicate} is one expressible as a finite combination of ANDs, ORs, and NOTs of simple detection predicates.
\end{definition}

In other words, a simple detection predicate is defined by a hyperplane (with rational slopes),
with output 1 on one side of the hyperplane and output 0 on the other side and on the hyperplane itself.
A detection predicate $\phi$ is defined by a finite number of hyperplanes that partition $\Rp^k$ into a finite number of regions,
and $\phi$ is constant within each region.

The following is main result of this section.
Each direction is proven separately in \Cref{sec:stable-positive-result,sec:stable-negative-result} via 
\Cref{lem:stable-computation-computes-all-detection-predicates,lem:stable-computation-limited-to-detection}.

\begin{theorem}
\label{thm:stable-detection-predicates}
    $\phi: \Rp^k \to \{0,1\}$ is stably decidable by a continuous CRN if and only if $\phi$ is a detection predicate.
\end{theorem}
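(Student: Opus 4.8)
The plan is to recast the statement and then handle the two implications separately. First I would observe that, by \Cref{def:detection-predicate}, a predicate $\phi:\Rp^k\to\{0,1\}$ is a detection predicate exactly when $\phi(\vx)$ depends only on the support $[\vx]$: each simple detection predicate depends only on the support, hence so does every Boolean combination, and conversely if $\phi(\vx)=\psi([\vx])$ for some Boolean-valued $\psi$ on subsets of $\{1,\dots,k\}$, then writing $\psi$ in disjunctive normal form over the ``$\vx(i)>0$'' indicators exhibits $\phi$ as such a combination. Equivalently, $\phi$ is a detection predicate iff it is constant on each open orthant $O_S=\{\vx\in\Rp^k: [\vx]=S\}$. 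So the theorem becomes: $\phi$ is stably decidable iff $\phi$ is constant on every $O_S$.

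For the ``if'' direction (this is \Cref{lem:stable-computation-computes-all-detection-predicates}), it suffices by Boolean closure to stably decide a single simple detection predicate $\delta_i$ (``$\vx(i)>0$?''), and then invoke closure of stable decidability under complement (swap $\yesVotes$ and $\noVotes$) and under $\wedge,\vee$ (the product construction of \Cref{lem:stable-decidability-closure-properties}). For $\delta_i$ I would use the CRD with input species $X_1,\dots,X_k$, yes voter $Y$, no voter $N$, initial context $\{1\,N\}$, and the one reaction $X_i+N\rxn X_i+Y$: if $\vx(i)>0$ then from any reachable state one can drain $N$ entirely via this reaction (which never consumes $X_i$), reaching a static state with $Y$ present and $N$ absent, so a yes-stable state is reachable from every reachable state; if $\vx(i)=0$ the reaction is never applicable, so the only reachable state is the static initial state, which is no-stable.

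The ``only if'' direction (\Cref{lem:stable-computation-limited-to-detection}) is the crux. Suppose a continuous CRN stably decides $\phi$; I would show each set $\phi^{-1}(b)$ is relatively open inside every orthant $O_S$. Since $O_S$ is convex, hence connected, a partition of $O_S$ into two relatively open pieces forces one to be empty, so $\phi$ is constant on $O_S$ and we are done. The engine is a robustness property of continuous-CRN reachability imported from the reachability analysis of \cite{chen2023rate}: a small, support-preserving perturbation of the source of a reachability $\vu\to\vv$ is matched by a reachability to some $\vv'$ with $[\vv']=[\vv]$ and $\vv'$ close to $\vv$. Granting this, fix $\vx_0$ with $\phi(\vx_0)=b$ and suppose, toward a contradiction, that there are points arbitrarily close to $\vx_0$ in $O_{[\vx_0]}$ on which $\phi=\bar b$; pick such an $\vx_1$ within the robustness radius. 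From $\vx_1+\vi$ one reaches a $\bar b$-stable state $\vq$ (by the definition of stable decision); robustness applied to $\vx_1+\vi\to\vq$, moving the source to the nearby $\vx_0+\vi$ (same support), yields $\vq'$ with $\vx_0+\vi\to\vq'$ and $[\vq']=[\vq]$, so $\vq'$ has a $\bar b$-voter present and no $b$-voter. Since $\phi(\vx_0)=b$, from $\vq'$ one reaches a $b$-stable state $\vr$; robustness once more, moving the source $\vq'$ back to the nearby $\vq$ (same support), gives $\vr'$ with $\vq\to\vr'$ and $[\vr']=[\vr]$, hence $\vr'$ has a $b$-voter present. But $\vq$ is $\bar b$-stable, so every state reachable from $\vq$ has output $\bar b$ and no $b$-voter present, contradicting the existence of $\vr'$.

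The main obstacle is making the robustness ingredient airtight in exactly the support-tracking form used above: not merely ``reachability is preserved up to closeness,'' but ``up to a perturbation of the target that preserves its support,'' since it is the bookkeeping of which voters are present that drives the contradiction. This requires the straight-line/segment-path machinery of \cite{chen2023rate}, together with care about the quantitative ranges in the two successive applications of robustness (ensuring the second perturbation stays within range); phrasing robustness as openness of the support-restricted reachability relation, or arguing by a subsequence and using that there are only finitely many supports, should resolve this. The remaining ingredients --- the reformulation, the $\delta_i$ construction, and the connectedness argument --- are routine.
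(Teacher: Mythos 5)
Your forward direction (detection predicates are stably decidable) matches the paper's: simple detection predicates plus Boolean closure via \Cref{lem:stable-decidability-closure-properties}; your one-reaction CRD $X_i + N \rxn X_i + Y$ with initial context $\{1N\}$ is a correct minor variant of the construction in \Cref{lem:stable-computation-computes-all-detection-predicates}.

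The converse is where the proposal breaks down. The ``robustness property'' you want to import---that a small support-preserving perturbation of the source of a reachability $\vu \segto \vv$ can be matched by reaching some $\vv'$ with $[\vv'] = [\vv]$ near $\vv$---is false for continuous CRNs, and it fails precisely in the situations your argument needs it. Take the single reaction $A + B \rxn C$. From $\vu = \{1A, 1B\}$ one reaches $\vv = \{1C\}$, with support $\{C\}$. Perturb the source to $\vu' = \{(1+\epsilon)A, 1B\}$, which has the same support $\{A,B\}$; since $A - B$ is conserved, every state reachable from $\vu'$ satisfies $A \ge \epsilon > 0$, so no reachable state has $A$ absent, let alone support $\{C\}$. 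In other words, which species can be driven to zero is \emph{not} a function of the support of the initial state---this is exactly the phenomenon that makes stable predicate computation weak (and why majority is not stably decidable)---so no amount of care with radii, subsequences, or ``openness of the support-restricted reachability relation'' will rescue the lemma. The quantitative circularity you flag yourself (the radius for the second application depends on $\vq'$, which depends on the choice of $\vx_1$) would be an additional unresolved problem even if the lemma held.

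For contrast, the paper's proof of \Cref{lem:stable-computation-limited-to-detection} avoids reasoning about perturbed reachability entirely: it augments the CRD into a CRC stably computing the \emph{sum characteristic function} $\sigma_\phi(\vx) = \|\vx\|_1$ if $\phi(\vx)=1$ and $0$ otherwise (tagging each unit of input with a unit of $Y$ or $\hat Y$ and letting the voters interconvert them), then invokes the theorem of~\cite{chen2023rate} that every stably computable function is positive-continuous, and finishes with a connectedness argument (\Cref{lem:non-detection-pred-implies-non-positive-continous-char-func}): if $\phi$ were non-constant on some domain $D_U$, the image under $\sigma_\phi$ of a segment in $D_U$ would have $0$ as an isolated point, contradicting continuity on $D_U$. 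Your instinct to exploit convexity and connectedness of the orthants is the right one, but the engine has to be the positive-continuity theorem, not a reachability-perturbation lemma.
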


\subsection{Stable (rate-independent) computation}
\label{sec:prelim:stable}

These definitions are taken from~\cite{chen2023rate};
see Section 2.4 of that paper for justification that the notion of segment-reachability in particular
(\Cref{defn:reachable-lines})
is reasonable.

\begin{definition}\label{defn-reachable-line}
state $\vd$ is \emph{straight-line reachable (aka $1$-segment reachable)} from state $\vc$, written $\vc \slto \vd$, if $(\exists \vu \in \Rp^R)\ \vc + \vM \vu = \vd$
and $\vu(\alpha) > 0$ only if reaction $\alpha$ is applicable at $\vc$.
In this case write $\vc \slto_\vu \vd$.
\end{definition}

\noindent Intuitively, by a single segment we mean running the reactions applicable at $\vc$ at a constant (possibly 0) rate to get from $\vc$ to $\vd$.
In the definition, $\vu(\alpha)$ represents the flux, or total amount executed, of reaction $\alpha \in R$.

\begin{definition}
\label{defn:reachable-lines}
Let $k \in \N$.
state $\vd$ is \emph{$k$-segment reachable} from state $\vc$, written $\vc \segto^k \vd$, 
if $(\exists \vb_0, \dots, \vb_{k})\ \vc  = \vb_0 \slto \vb_1 \slto \vb_2 \slto \dots  \slto \vb_{k}$,
with $\vb_k = \vd$.
\end{definition}

\begin{definition}
\label{defn:reachable-segment}
state $\vd$ is \emph{segment-reachable} 
(or simply \emph{reachable}) 
from state $\vc$, written $\vc \segto \vd$, if $(\exists k\in\N)\ \vc \segto^k \vd$.
\end{definition}

\noindent Often \cref{defn:reachable-segment} is used implicitly, when we make statements such as, ``Run reaction 1 until $X$ is gone, then run reaction 2 until $Y$ is gone'', which implicitly defines two straight lines in concentration space.

We now formalize what it means for a CRN to ``rate-independently'' compute (stably decide) a predicate $\phi$.


We define the \emph{global output partial function} $\Phi : \N^\Lambda \dashrightarrow \{0,1\}$ as $\Phi(\vx) = 1$. If the only voter species with positive concentration in the state $\vx$ are yes voters, then $\Phi(\vx) = 1$. Conversely, if the only voter species with positive concentration are no voters, $\Phi(\vx) = 0$. Lastly, if neither of these conditions is met, the output function $\Phi(\vx)$ is undefined.
We say a state $\vo$ is \emph{stable} if, for all $\vc$ such that $\vo \segto \vc$: $\Phi(\vo) = \Phi(\vc)$.

\begin{definition}[stably decide]
\label{def:stably-decide}
    Let $\phi : \R^k_{\ge0} \to \{0,1\}$ be a predicate. We say a CRD $\calD$ \emph{stably decides} $\phi$ if, for all $\vx \in \R^k_{\ge0}$, and all $\vc$ such that $\vx \segto \vc$, there exists a stable state $\vo$ such that $\vc \segto \vo$ and $\Phi(\vo) = \phi(\vx)$. We say a set $A$ is \emph{stably decidable} if its indicator function $\chi_A$ is stably decidable.
\end{definition}

\begin{definition}[stably compute]
\label{def:stably-compute}
    Let $f:\domR^k \to \R$ be a function. 
    We say a CRC $\calC = (\Lambda,R,\Sigma,\{Y\},\sigma)$ \emph{stably computes} $f$ if for all $\vx \in \R^k_{\ge0}$ and all $\vc$ such that $\vx \segto \vc$, there exists a stable state $\vo$ such that $\vc \segto \vo$ and $\vo(Y) = f(\vx)$.
\end{definition}

\begin{theorem}
\label{thm:stable-computable-iff-piecewise-linear-positive-continuous}
    A function $f: \Rp^k \to \R$ is stably computable by a continuous CRC if and only if it is positive-continuous and piecewise rational linear.
\end{theorem}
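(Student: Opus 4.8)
The plan is to prove the equivalence in two directions; this is exactly the characterization established in~\cite{chen2023rate}, which I would ultimately cite, but the proof goes as follows.

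\emph{Sufficiency.} Suppose $f$ is positive-continuous and piecewise rational linear. First I would unpack positive-continuity: the only discontinuities of $f$ occur where some input coordinate passes from $0$ to positive, so $f$ restricted to each coordinate face $F_Z = \{\vx \in \Rp^k : \vx(i) = 0 \iff i \in Z\}$ agrees with a single rational linear function $g_Z$, and the family $\{g_Z\}$ is compatible in precisely the way needed for the glued function to remain positive-continuous. Then I would build a CRC that (i) uses the simple detection predicates ``is $X_i > 0$?'', which are stably decidable by~\Cref{thm:stable-detection-predicates}, to determine the support pattern $Z$ of the input; (ii) on each branch computes $g_Z$ using reactions $n_i X_i \rxn d_i Y$, together with a dual-rail species for negative coefficients and a cancellation reaction; and (iii) multiplexes the branches, analogously to (but simpler than) the construction in~\Cref{thm:piecewise-affine-are-robustly-computable}, here driven by detection predicates rather than threshold predicates. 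Positive-continuity is what guarantees that an input with a tiny strictly positive coordinate still yields an output consistent with the selected linear piece, so the multiplexed output is correct everywhere and the resulting state is stable.

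\emph{Necessity.} This is the harder direction. The plan is: (1) fix a support pattern $Z$ and characterize, for $\vx \in F_Z$, the set of states reachable from $\vx$ --- and in particular the set of attainable stable output values --- as a rational polyhedral object cut out by the stoichiometric subspace together with the reaction-applicability (support) constraints; (2) show that over $F_Z$ this object varies affinely with $\vx$, since both the conservation constraints and the applicability constraints are linear in $\vx$, so that the unique stable output $f(\vx)$ is rational affine on $F_Z$, indeed rational linear up to the contribution of the initial context; (3) derive continuity of $f$ on $F_Z$ from a perturbation property of segment-reachability, namely that perturbing a strictly positive coordinate perturbs the reachable set only slightly and hence cannot make $f$ jump; and (4) conclude that the only discontinuities of $f$ occur across the faces $F_Z$, i.e., when a coordinate passes through $0$, which is exactly positive-continuity. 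Rationality throughout comes from the integer stoichiometry of the reactions.

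The hard part will be steps (1)--(3) of the necessity direction: showing that segment-reachability in the continuous, rate-free model is well-behaved enough --- a suitable continuity together with a ``superadditivity'' of reachable cones --- to pin $f$ down as a rational, piecewise-linear object and to exclude discontinuities within a face. This is the technical core of~\cite{chen2023rate}, so rather than reproduce it I would invoke that result directly.
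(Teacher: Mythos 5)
The paper does not prove this theorem at all: it is imported verbatim from~\cite{chen2023rate} (stated in the appendix as background, with no proof environment), so your closing move of ``invoke that result directly'' is in fact exactly what the paper does, and at that level your proposal is consistent with the paper.

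However, your sketch of the sufficiency direction contains a genuine error that would sink the argument if you tried to carry it out rather than cite it. You claim that positive-continuity forces $f$ restricted to each coordinate face $F_Z = \{\vx : \vx(i)=0 \iff i \in Z\}$ to agree with a \emph{single} rational linear function $g_Z$. That is false: positive-continuity only says $f$ is continuous on each $F_Z$; it can still be genuinely piecewise there. For example $f(x_1,x_2)=\min(x_1,x_2)$ is continuous (hence positive-continuous) and piecewise rational linear, but is not a single linear function on the strictly positive orthant. Consequently your step (ii) --- ``on each branch compute $g_Z$ with reactions $n_i X_i \rxn d_i Y$ plus dual-rail cancellation'' --- does not cover the actual class of functions in the theorem. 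Handling the within-face piecewise structure is the main technical content of the positive direction in~\cite{chen2023rate}: one cannot branch on which linear piece applies using stably decidable predicates (by \Cref{thm:stable-detection-predicates} those are only detection predicates, which cannot distinguish, say, $x_1 > x_2$ from $x_1 < x_2$), so the construction there instead exploits a max/min lattice representation of continuous piecewise linear functions and CRN gadgets that stably compute $\max$ and $\min$. Your necessity sketch is closer in spirit to the actual argument (reachability as a polyhedral, affinely-varying object, plus a perturbation argument for continuity within a face), and you correctly flag it as the technical core to be cited; the only caveat there is that making steps (1)--(3) precise for segment-reachability is substantial work, which is precisely why both you and the paper defer to~\cite{chen2023rate}.
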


A linear function's graph defines a $k$-dimensional hyperplane in $\Rp^{k+1}$ that passes through the origin (i.e., linear but not affine).
If a piecewise rational linear function is positive-continuous,
we switch from one linear component $f_i$ to another $f_j$ only where their hyperplanes intersect.
Thus the question ``\emph{is linear component $f_i$ the correct linear component to apply to compute $f$ on input $\vx$?}'' is itself a  \emph{multi-threshold predicates} as defined in \Cref{def:threshold-predicate}, but with constant $h=0$.

\subsection{Positive result: All detection predicates are stably decidable by continuous CRNs}
\label{sec:stable-positive-result}

To show that continuous CRNs can stability decide all detection predicates, we will utilize the fact that detection predicates are Boolean combinations of simple detection predicates. We first show that each simple detection predicate is decidable, then show that stable decidability is closed under the basic Boolean operations. 

\begin{lemma}
\label{lem:stable-computation-computes-all-detection-predicates}
    Every simple detection predicate is stably decidable by a continuous CRN. 
\end{lemma}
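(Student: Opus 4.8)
The plan is to use the fact that a simple detection predicate $\phi$ with $\phi^{-1}(1) = \{\vx \in \Rp^k : \vx(i) > 0\}$ really only asks whether the $i$th input species is initially present. So I would build the CRD $\calD = (\Lambda, R, \Sigma, \{Y\}, \{N\}, \vi)$ with $\Lambda = \{X_1,\dots,X_k,Y,N\}$, $\Sigma = \{X_1,\dots,X_k\}$, the single catalytic reaction $X_i + N \rxn X_i + Y$, and initial context $\vi = \{1\,N\}$ (one unit of the no-voter, no yes-voter present initially). The intuition is that $X_i$ acts as a catalyst that, whenever it is present, can flip all of the no-voter $N$ into the yes-voter $Y$, and when $X_i$ is absent the reaction is inert, so the no-voter remains.

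Next I would verify that $\calD$ stably decides $\phi$ according to \Cref{def:stably-decide}. First note the conservation law $Y(t) + N(t) = 1$ in every reachable state, and that the concentration of $X_i$ never changes (it is a catalyst of the only reaction, and appears in no other reaction). Case $\vx(i) > 0$: from the initial state, or from any state $\vc$ reachable from it, $X_i$ is still positive, so the reaction is applicable whenever $N(\vc) > 0$; running it along a single straight-line segment with total flux $N(\vc)$ reaches a state $\vo$ with $N = 0$, $Y = 1$ (and if $N(\vc)=0$ already then $\vc$ itself is this state). No reaction is applicable at $\vo$ since its reactant multiset requires $N>0$, so $\vo$ is static, hence stable, and its output is $1 = \phi(\vx)$. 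Case $\vx(i) = 0$: the lone reaction has $X_i$ as a reactant with concentration $0$, so it is never applicable; the initial state $\vx + \{1\,N\}$ is static, is the only state reachable from itself, has output $0 = \phi(\vx)$, and is therefore vacuously stable.

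I expect essentially no obstacle in this base case; the only points requiring care are the reachability/stability bookkeeping (in particular, that from \emph{every} reachable state one can still drive $N$ to $0$, which is exactly where it matters that $X_i$ is catalytic and hence never depleted), and checking that the output partial function is defined and correct at the relevant states. The genuinely substantive work of the section lies in the companion closure lemma showing stable decidability is preserved under conjunction, disjunction, and negation, from which the general detection predicate case then follows by structural induction.
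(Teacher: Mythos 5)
Your proof is correct and takes essentially the same approach as the paper's: a single catalytic scheme in which the presence of $X_i$ drives the elimination of the ``no'' signal, and its absence leaves the initial no-vote untouched. The only (cosmetic) difference is that you introduce dedicated voter species $Y,N$ with the invariant $Y+N=1$, whereas the paper reuses the input species themselves as voters via the reactions $X_i + X_j \to X_i$; your variant is arguably a bit cleaner since it does not need a spare input species in the initial context to guarantee a voter is present when all inputs are zero.
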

\begin{proof}
    To decide the simple detection predicate $\phi_i : \Rp^k \to \{0,1\}$ defined by $\phi_i(\vx) = 1$ if $\vx(i) > 0$ and $\phi(\vx) = 0$ otherwise, we let $\Lambda = \Sigma =\{X_1,\ldots,X_k\}$ be the set of species, and let the set of yes voter $\yesVotes = \{X_i\}$ and the set of no voters be each other species. For some $k \ne i$, the CRN will start with the initial context $\{1 X_k$\}.
    The reactions in the CRN are $X_i + X_j \to X_i$ for all $j \ne i$. 
    This stably decides $\phi_i$, as if any $X_i$ is present in a initial state then all species are eventually converted into copies of $X_i$. If $X_i$ is not present, then no reactions will be applicable, so none of the present no voters are converted.
\end{proof}

To see that all detection predicates can be computed, it suffices to show that stably-decidable predicates are closed under Boolean operations, which is shown in \Cref{lem:stable-decidability-closure-properties} below.
This is done by recalling a general method from the discrete model of CRNs~\cite[Lemma 6]{AngluinADFP2006}, which also works in the real-valued CRN model.

\begin{lemma}
\label{lem:stable-decidability-closure-properties}
    Let $\calD_1 = (\Lambda_1,R_1,\Sigma_1,\yesVotes^1,\noVotes^1,\sigma_1)$ and $\calD_2 = (\Lambda_2,R_2,\Sigma_2,\yesVotes^2,\noVotes^2,\sigma_2)$ be CRDs stably deciding predicates $\phi: \Rp^k \to \{0,1\}$ and $\psi: \Rp^k \to \{0,1\}$ respectively. Then there are CRDs stably deciding the predicates $\phi \lor \psi$, $\phi\land \psi$, and $\overline{\phi}$.
\end{lemma}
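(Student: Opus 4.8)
The plan is to adapt the parallel-composition construction used for population protocols (\cite[Lemma 6]{AngluinADFP2006}), replacing the discrete reachability relation with segment-reachability (\Cref{defn:reachable-segment}). For the complement $\overline{\phi}$, take $\calD_1$ unchanged and exchange the roles of its yes and no voters; since $\Phi$ is defined exactly when only one type of voter is present and its value flips under the swap, the resulting CRD stably decides $\overline{\phi}$. For $\phi\lor\psi$ and $\phi\land\psi$, I would build a CRD $\calD$ that runs disjoint copies of $\calD_1$ and $\calD_2$ at once: its species are $\Lambda_1\cup\Lambda_2$ together with shared input species $\Sigma=\{X_1,\dots,X_k\}$ and four new \emph{record} species $\vyy,\vyn,\vny,\vnn$, one per pair of answers; the initial context is $\sigma_1+\sigma_2+\{1\,\vnn\}$ (any single record species works). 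First, for each $i$ add the splitting reaction $X_i\rxn X_i^1+X_i^2$ feeding the private input copies of $\calD_1$ and $\calD_2$. Second, for each yes voter $V$ of $\calD_1$ add $V+\vnn\rxn V+\vyn$ and $V+\vny\rxn V+\vyy$, for each no voter $V$ of $\calD_1$ add $V+\vyn\rxn V+\vnn$ and $V+\vyy\rxn V+\vny$, and symmetrically with the voters of $\calD_2$ acting on the second coordinate. Finally set $\yesVotes=\{\vyy,\vyn,\vny\}$, $\noVotes=\{\vnn\}$ for $\phi\lor\psi$, and $\yesVotes=\{\vyy\}$, $\noVotes=\{\vyn,\vny,\vnn\}$ for $\phi\land\psi$. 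Crucially the voters of $\calD_1,\calD_2$ appear only as \emph{catalysts} in the record reactions, so those reactions neither create nor destroy $\Lambda_1$- or $\Lambda_2$-species.

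To prove $\calD$ stably decides the combination, fix $\vx$ and a reachable state $\vc$ with $\vx+\vi\segto\vc$. Since no reaction produces a $\Sigma$-species and the splitting reactions consume only $\Sigma$-species, the splitting reactions can be permuted to the front of any segment-reachability witness; so from $\vc$ we may first drive every $X_i$ to $0$. After that, the projection of the state onto $\Lambda_1$ (resp. $\Lambda_2$) is segment-reachable in $\calD_1$ (resp. $\calD_2$) from $\vx+\sigma_1$ (resp. $\vx+\sigma_2$), because $R_2$ and the record reactions act as the identity on $\Lambda_1$, and vice versa. Using that $\calD_1$ stably decides $\phi$, drive its $\Lambda_1$-projection to a stable correct state, leaving only voters of type $\phi(\vx)$; do the same for $\calD_2$. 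These moves lift back to $\calD$ without disturbing the other sub-CRD. Now only the record reactions whose catalyst is a correct-type voter of $\calD_1$ (and of $\calD_2$) are ever applicable, and a short case check shows they have a unique sink among $\{\vyy,\vyn,\vny,\vnn\}$: the record species encoding the pair $(\phi(\vx),\psi(\vx))$. Running them to completion moves all record-species mass into that sink, giving a state $\vo$ whose global output is correct for both $\lor$ and $\land$.

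It then remains to check that $\vo$ is \emph{stable}. Every reaction applicable at a state reachable from $\vo$ is either an $R_1$/$R_2$ reaction — which, by stability of the two sub-states, keeps each $\calD_i$ exhibiting only correct-type voters forever — or a record reaction; but under correct-type catalysts the record reactions only convert record species toward the sink, and the sink is the only record species present, so none of them is applicable. Hence the record-species mass is frozen and $\Phi$ is constant on everything reachable from $\vo$. I expect this last verification to be the main obstacle: one has to use the precise definition of a stable state (that each $\calD_i$'s voter-presence pattern persists on all subsequently reachable states) to rule out \emph{any} interleaving of $R_1$, $R_2$, and record reactions ever pushing record-mass back out of the sink; the remainder is routine bookkeeping about reordering and projecting segment-reachability witnesses.
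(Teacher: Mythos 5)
Your proposal is correct and follows essentially the same route as the paper's proof: complement by swapping voter roles, and conjunction/disjunction by running disjoint copies of $\calD_1,\calD_2$ on split inputs with four catalytically-updated record species $\vyy,\vyn,\vny,\vnn$ whose sink encodes the pair of answers. The only differences are cosmetic (which record species seeds the initial context, and your slightly more explicit treatment of the final stability check), so no further comparison is needed.
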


\begin{proof}

Given a CRD $\calD_1 = (\Lambda_1,R_1,\Sigma_1,\yesVotes^1,\noVotes^1,\sigma_1)$ that stably decides the predicate $\phi$, we can decide the predicate $\overline{\phi}$ by ``flipping'' all of our yes and no voters. That is, $\overline{\phi}$ is decided by the CRN $\calD_1 = (\Lambda_1,R_1,\Sigma_1,\noVotes^1,\yesVotes^1,\sigma_1)$. To decide $\phi \lor \psi$, we will run both $\calD_1$ and $\calD_2$ in parallel on the same input, and have their respective voters influence the concentration of a global voter species. That is, we construct the CRD $\calD$ as follows:
\begin{enumerate}
    \item Initialize the set of species as $\Lambda_1 \cup\Lambda_2$, the set of reactions as $R_1 \cup R_2$ and the set of input species as $\Sigma_1 \cup \Sigma_2$.
    \item For each input species $A \in \Sigma_1 \cup\Sigma_2$, replace each instance of $A$ in $\calC_1$ with species $A_1$ and each instance of $A$ in $\calC_2$ with the species $A_2$. Add the reaction $A \to A_1 + A_2$. Conceptually, this is copying the global input to the inputs of both $\calD_1$ and $\calD_2$.
    \item Add the species $\vnn$, $\vyn$, $\vny$ and $\vyy$. These species ``record'' the current vote of $\calD_1$ and $\calD_2$: The first subscript represents the current vote of $\calD_1$, and the second represents the current vote of $\calD_2$. For each yes voter $Y_1 \in \yesVotes^1$ and no voter $N_1 \in \noVotes^1$, add the reactions 
    \begin{align}
        \label{eq:yesvoters1}
    V_{\text{n}x} + Y_1 &\rxn V_{\text{y}x}+Y_1\\
    \label{eq:novoters1}
    V_{\text{y}x} + N_1 &\rxn V_{\text{n}x}+N_1
    \end{align}
    for each ${x} \in \{\text{y},\text{n}\}$. Similarly for each yes voter $Y_2$ and no voter $N_2$ in $\calD_2$ add the reactions
    \begin{align}
    \label{eq:yesvoters2}
    V_{x\text{n}} + Y_2 &\rxn V_{x\text{y}}+Y\\
    \label{eq:novoters2}
    V_{x\text{y}} + N_2 &\rxn V_{x\text{n}}+N_2
    \end{align}
    for each ${x} \in \{\text{y},\text{n}\}$. 
\item Let the yes voters be $\vyn$, $\vny$ and $\vyy$ and let the initial context be $\sigma_1 + \sigma_2 + \{1\vyy\}$
\end{enumerate}

To see that $\calD$ stably decides $\phi \lor \psi$, let $\vx$ be the input. For concreteness, say $\phi(\vx) = 1$ and $\psi(\vx) = 0$. The other cases are identical. Let $\vc$ be a state reachable from $\vx$ along with the initial context. We must show that a correct output state is reachable from $\vc$. First, if any of the global input species $A$ is present then run reaction $A \rxn A_1 + A_2$ until $A$ is gone. This gives a state $\vc'$ which we can partition as $\vc' = \vc_1 + \vc_2 + \vd$ where for each $i \in \{1,2\}$, $\vc_i$ contains species from $\calD_i$ (including the copied input species $A_i$) and $\vd$ contains the added voting species. We note that the reactions \eqref{eq:yesvoters1}, \eqref{eq:novoters1}, \eqref{eq:yesvoters2} and \eqref{eq:novoters2} do not change the concentration of species from $\calD_1$ or $\calD_2$, and reactions from $\calD_1$ do not affect species in $\calC_2$ (and vice versa). Therefore, $\vc_1$ is a state reachable in $\calD_1$ from the initial state $\vx + \sigma_1$. Since $\calC_1$ stably computes $\phi$, there is a stable state\footnote{Which is reachable in finitely many line segments by Theorem 2.14. in \cite{chen2023rate}} $\vo_1$ such that $\vc_1 \segto \vo_1$. Repeating this argument with $\vc_2$, we obtain stable $\vo_2$ such that $\vc_2 \segto \vo_2$. By additivity, $\vc_1 + \vc_2 + \vd \segto \vo_1 + \vo_2 + \vd$. This is a state where the only former voting species present are $Y_1$ and $N_2$. Therefore, we may apply the reaction \eqref{eq:yesvoters1} until all $\vnn$ species are gone. This stably decides $\phi \lor \psi$. To decide $\phi \land \psi$, we use the exact same CRN, except the only yes voter is $\vyy$. A similar argument shows the correctness of this CRN as well.



\end{proof}

\subsection{Negative result: All stably decidable predicates are detection predicates}
\label{sec:stable-negative-result}




In this section we show that, unlike the case of computing functions $f:\R^k \to \R$ with numerical output (the focus of~\cite{chen2023rate}),
CRNs stably computing \emph{predicates} are much more severely limited in computational power.
We prove this using the fact, proven in~\cite{chen2023rate}, that such functions $f$ must be
\emph{positive-continuous}, meaning that discontinuities can only occur when some input coordinate $\vx(i)$ goes from 0 to positive.
(Note the conceptual similarity to detection predicates, which can change their output only when some input coordinate $\vx(i)$ goes from 0 to positive.)
We connect this to predicates by showing that any CRN $\calC$ stably computing a predicate $\phi$ can be augmented to stably compute a function $\sigma_\phi$ that is continuous only if the predicate stably computed by $\calC$ is a detection predicate, proving that since such functions $\sigma_\phi$ must be continuous, then $\phi$ must be a detection predicate.

We recall the definition of a positive-continuous function from \cite{chen2023rate}.
Intuitively, a positive-continuous function is only allowed to have discontinuities whenever some input goes from 0 to positive.
For example, the function $f(x_1,x_2) = x_1$ if $x_2 > 0$ and $f(x_1,x_2)=0$ otherwise is positive continuous though not continuous.
\begin{definition}
\label{defn:positive-continious}
    A function $f : \Rp^k \to \Rp$ is \emph{positive-continuous}, for all $U \subseteq\{1,\ldots,k\}$, $f$ is continuous on the domain
    \[
    D_U = \left\{\vx \in \R^k_{\ge0} \mid \vx(i) > 0 \iff i \in U\right\}.
    \]
\end{definition}
To understand the definition, it helps to understand first what the domains $D_U$ look like. In the case $k = 2$. The domains $D_U$ are the origin, the positive $x$ and $y$ axes, and the set $\{(x,y) \mid x,y > 0\}$. If $U = \{1\}$, then $D_U$ is the positive $x$ axis as the vectors $\vx \in D_U$ must satisfy $\vx(1) >0$ and $\vx(2) = 0$. A positive continuous function is one which must be continuous inside each domain, but is allowed discontinuities as it moves across boundaries. If useful, a positive continuous function can be understood as a collection of $2^k$ continuous functions, each defined on a different piece of the positive orthant.
\todo{KC: Add picture illustrating domains in $\R^2$}

Given a predicate $\phi : \R_{\ge 0}^k \to \{0,1\}$, define the \emph{sum characteristic function} of $\phi$,  
$\sigma_\phi: \R_{\ge 0}^k \to \R_{\ge 0}$ for all $\vx \in \Rp^k$ by
\[
\sigma_\phi(\vx) = \begin{cases}
    \sum_{i=1}^k \vx(i) = \| \vx \|_1 & \text{if $\phi(\vx) = 1$}\\
    0 & \text{if $\phi(\vx) = 0$}
\end{cases}
\]
 
\begin{lemma}
\label{lem:non-detection-pred-implies-non-positive-continous-char-func}
    Let $\phi : \R_{\ge 0}^k \to \{0,1\}$.
    If $\phi$'s sum characteristic function $\sigma_\phi$ is positive-continuous, then $\phi$ is a detection predicate.
\end{lemma}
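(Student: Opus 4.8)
The plan is to prove the equivalent statement that positive-continuity of $\sigma_\phi$ forces $\phi$ to be constant on each stratum $D_U = \{\vx \in \Rp^k : \vx(i) > 0 \iff i \in U\}$, and then to observe that any such $\phi$ is automatically a detection predicate. For the latter step I would note that the $2^k$ sets $D_U$ partition $\Rp^k$, that each indicator $\chi_{D_U}(\vx) = \bigwedge_{i \in U}[\vx(i) > 0] \wedge \bigwedge_{i \notin U} \lnot[\vx(i) > 0]$ is a finite Boolean combination of simple detection predicates, and hence that $\phi = \bigvee\{\chi_{D_U} : \phi \equiv 1 \text{ on } D_U\}$ exhibits $\phi$ as a detection predicate whenever $\phi$ is constant on every $D_U$.

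To establish the constancy claim, fix $U \subseteq \{1,\dots,k\}$. If $U = \emptyset$ then $D_U = \{\vzero\}$ is a single point, so assume $U \neq \emptyset$; then $\| \vx \|_1 > 0$ for every $\vx \in D_U$. Suppose for contradiction that $\phi$ takes both values on $D_U$, say $\phi(\va) = 1$ and $\phi(\vb) = 0$ with $\va, \vb \in D_U$. Since $D_U$ is convex (a convex combination of two vectors with support exactly $U$ has support exactly $U$), the segment $\gamma(s) = (1-s)\vb + s\va$ lies in $D_U$ for all $s \in [0,1]$, and because $\sigma_\phi$ is positive-continuous, hence continuous on $D_U$, the function $h := \sigma_\phi \circ \gamma$ is continuous on $[0,1]$.

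The key step is to examine $s^* = \sup\{s \in [0,1] : h(s) = 0\}$. This set contains $0$ (as $h(0) = \sigma_\phi(\vb) = 0$) and is closed by continuity of $h$, so $h(s^*) = 0$; since $\| \gamma(s^*) \|_1 > 0$, this forces $\phi(\gamma(s^*)) = 0$. Moreover $s^* < 1$ because $h(1) = \sigma_\phi(\va) = \| \va \|_1 > 0$. For every $s \in (s^*, 1]$ we have $h(s) > 0$, whence $\phi(\gamma(s)) = 1$ and $h(s) = \| \gamma(s) \|_1$; letting $s \downarrow s^*$, the right-hand side tends to $\| \gamma(s^*) \|_1 > 0$ while continuity of $h$ gives $h(s) \to h(s^*) = 0$, a contradiction. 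Hence $\phi$ is constant on $D_U$, and since $U$ was arbitrary, $\phi$ is a detection predicate.

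I do not expect a substantial obstacle: the argument is elementary, and the only points requiring (routine) care are the convexity of $D_U$ and the bookkeeping that translates ``constant on every $D_U$'' into the Boolean-combination definition of a detection predicate.
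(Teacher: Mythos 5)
Your proposal is correct and takes essentially the same approach as the paper: both reduce the claim to showing $\phi$ is constant on each stratum $D_U$, join two differently-valued points of $D_U$ by a segment (using convexity), and derive a contradiction from the fact that $\sigma_\phi$ must jump from $0$ to values bounded away from $0$ along that segment. The paper closes the argument via connectedness of the continuous image (the point $0$ is isolated in $\sigma_\phi(\ell)$) where you use a one-sided limit at $s^*$, but this is a cosmetic difference.
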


\begin{proof}
We show the contrapositive, that if $\phi$ is not a detection predicate, then the induced $\sigma_\phi$ is not positive-continuous. 
Suppose $\phi$ is not a detection predicate. 
Then, for some $U \subseteq \{1,\ldots,k\}$ the region $D_U$ contains points $\vx,\vy \in D_U$ such that $\phi(\vx) \ne \phi(\vy)$, as all detection predicates cannot change their output within a domain $D_U$.
Note that this implies $D_U \ne \{\vec{0}\}$,
i.e., $U \neq \emptyset$. 
Let $\ell$ denote the straight line connecting $\vx$ and $\vy$ in $\R^k$. Such an $\ell$ is completely contained in $D_U$, as $D_U$ is a convex set. 
Consider the image of $\ell$ under $\sigma_\phi$, denoted $\sigma_\phi(\ell) \subseteq \R_{\geq 0}$.
Since at least one point along $\ell$ does not satisfy $\phi$, 
we have $0 \in \sigma_\phi(\ell)$. 
Furthermore, since $\vx,\vy \neq \vec{0}$, there is $\varepsilon > 0$ such that $\| \vx \|_1 > \varepsilon$ and $\| \vy \|_1 > \varepsilon$.
Since $\| \cdot \|_1$ is a linear function, this implies that for all points $\vz \in \ell$, $\| \vz \|_1 > \varepsilon$.
Thus, all non-zero elements of $\sigma_\phi(\ell)$ are greater than $\varepsilon$.
This implies $0$ is an isolated point of $\sigma_\phi(\ell)$. 
Since $\sigma_\phi(\ell)$ contains an isolated point, it is not a connected set.  
As a continuous function must preserve connectedness, this implies that  $\sigma_\phi$ is not continuous on $D_U$, i.e. it is not positive continuous.
\end{proof}

\begin{lemma}
\label{lem:stable-computation-limited-to-detection}
    If $\phi: \Rp^k \to \{0,1\}$ is stably decidable by a CRD,
    then $\phi$ is a detection predicate.
\end{lemma}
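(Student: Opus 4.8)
The plan is to reduce to the function characterization \Cref{thm:stable-computable-iff-piecewise-linear-positive-continuous} via the sum characteristic function $\sigma_\phi$: I will show that a CRD stably deciding $\phi$ can be augmented into a CRC stably computing $\sigma_\phi$. Then \Cref{thm:stable-computable-iff-piecewise-linear-positive-continuous} forces $\sigma_\phi$ to be positive-continuous, and \Cref{lem:non-detection-pred-implies-non-positive-continous-char-func} immediately yields that $\phi$ is a detection predicate.

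For the construction, let $\calD = (\Lambda, R, \Sigma, \yesVotes, \noVotes, \vi)$ stably decide $\phi$. I would build a CRC $\calC$ that runs a relabeled copy of $\calD$ on private input species $\hat X_i$, together with a ``sum register'' species $Z$ and output species $Y$. Add reactions $X_i \rxn \hat X_i + Z$ (so the total $Z$ eventually produced is $\|\vx\|_1$, while the copy of $\calD$ receives exactly the input $\vx$ on its species $\hat X_i$), and, for every yes voter $V \in \yesVotes$ and no voter $W \in \noVotes$, the $\Lambda$-catalytic transfer reactions $\hat V + Z \rxn \hat V + Y$ and $\hat W + Y \rxn \hat W + Z$. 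The input species of $\calC$ are $X_1,\dots,X_k$, the output species is $Y$, and the initial context is $\vi$ (with the new species absent). Since the transfer reactions conserve $Y+Z$ and $\calD$'s reactions touch neither $Y$ nor $Z$, once all $X_i$ are consumed we have $Y(t)+Z(t) = \|\vx\|_1$ thereafter.

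To verify $\calC$ stably computes $\sigma_\phi$, take any state $\vc$ reachable from $\vx + \vi$. First run every reaction $X_i \rxn \hat X_i + Z$ to exhaustion; since $X_i$ is a reactant of no other reaction of $\calC$, this disables nothing, and the resulting $\Lambda$-projection is a state reachable in $\calD$ from $\vx + \vi$ (intuitively $X_i$ is just a delayed-release copy of $\hat X_i$; formally one front-loads all input-copying segments of the segment-reachability trajectory, which only increases the available $\hat X_i$ at every point and hence keeps every subsequent $\calD$-segment applicable). Because $\calD$ stably decides $\phi$ and the transfer reactions are catalytic in $\Lambda$, one can then drive this projection to a state that is stable in $\calD$ with only the correct voters present, without disturbing $Y$, $Z$, or the $X_i$. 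If $\phi(\vx)=1$, then some $\hat V$ with $V \in \yesVotes$ is present and no $\hat W$ with $W \in \noVotes$ appears in any state reachable from here; running $\hat V + Z \rxn \hat V + Y$ until $Z=0$ reaches $Y = \|\vx\|_1 = \sigma_\phi(\vx)$. This state is stable: the only reaction that could decrease $Y$ is $\hat W + Y \rxn \hat W + Z$, which never becomes applicable, so $Z$ is never re-produced. The case $\phi(\vx)=0$ is symmetric, draining $Y$ to $0$ via $\hat W + Y \rxn \hat W + Z$.

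The one genuinely delicate point is the claim that exhausting the reactions $X_i \rxn \hat X_i + Z$ leaves the $\Lambda$-projection inside the set of states $\calD$-reachable from $\vx + \vi$ — i.e., that gradual release of the input neither enlarges nor shrinks $\calD$'s reachable set. This requires the small reordering/monotonicity argument sketched above in the segment-reachability model; everything else parallels the Boolean-closure construction of \Cref{lem:stable-decidability-closure-properties}.
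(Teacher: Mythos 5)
Your proposal is correct and follows essentially the same route as the paper's proof: augment the CRD into a CRC stably computing the sum characteristic function $\sigma_\phi$, then combine the positive-continuity of stably computable functions with \Cref{lem:non-detection-pred-implies-non-positive-continous-char-func}. The only differences are cosmetic (the paper accumulates the sum directly in the output species $Y$ and uses an inactive twin $\hat Y$ rather than a separate register $Z$), and you are in fact more explicit than the paper about the one delicate point, namely why the $\Lambda$-projection after releasing the remaining input remains reachable in $\calD$.
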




\begin{proof}
    Let $\calD$ be a CRD stably deciding $\phi$.
    We show how to convert $\calD$ into a CRC $\calC$ that stably computes the sum characteristic function $\sigma_\phi$.
    Since all functions stably computable by a CRC are positive-continuous~\cite{chen2023rate}, \Cref{lem:non-detection-pred-implies-non-positive-continous-char-func} implies that $\phi$ must be a detection predicate.

    For each input species $X_i$,
    add the reaction $X_i \rxn X_i' + Y$, where $X_i'$ is the $i$'th input species to $\calD$.
    For each yes-voter $T$ in $\calD$, add the reaction
    $T + \hat Y \rxn T + Y$,
    and for each no-voter $F$,
    add the reaction $F + Y \rxn F + \hat Y$.

    Let $\vc$ be a state reached from the initial state $\vx$. If any of the input species $X_i$ are present in $\vc$, we apply the reaction $X_i \rxn X_i' + Y$ until all these species are gone from the state. Once this is done,
    the concentrations of $Y$ and $\hat{Y}$ satisfy $[Y] + [\hat Y] = \sum_{i=1}^k \vx_i.$ Denote this reached state by $\vc'$.
    Since $\calD$ stably decides $\phi$, there is a stable state $\vo$ reachable from $\vc'$ in which all species present are yes-voters if $\phi(\vx) = 1$ and no-voters otherwise. 
    In the former case,
    we may apply the reaction $T + \hat Y \rxn T + Y$ to convert all $\hat Y$ to $Y$. As no voters are present in the state, it is stable. Furthermore, the concentration of $\hat{Y}$ is exactly $\left\|\vx\right\|_1$.
    If $\phi(\vx) = 0$, then only no voters are present so running the reaction $F + Y \rxn F + \hat Y$ eventually will remove all $Y$, stabilizing on the correct output of $0$.
\end{proof}




\end{toappendix}

\section{Conclusion}
\label{sec:conclusion}

Motivated by the limitations of stable predicate computation, we investigated the robust computation of predicates and numerical functions.
While we established positive results on what can be robustly computed,
namely multi-threshold predicates
(\Cref{def:threshold-predicate})
and robustly piecewise floor-affine functions
(\Cref{def:piecewise-affine}),
the limitations of robust computation remains an open question.
We conjecture the positive results are tight, i.e., \emph{exactly} the multi-threshold predicates and robustly piecewise floor-affine functions are robustly computable.

We have assumed the presence of initial context, for example to help include a positive amount of some voter species in all of our constructions,
which gave the nice invariant that the sum of voter concentrations is preserved.
This choice seems as though it is not strictly necessary, since our construction could instead generate voter species.
We conjecture that a leaderless model, without initial context,
would only restrict threshold predicates to a constant threshold of 0 and constrain functions to piecewise rational floor-\emph{linear} functions, rather than affine.

We assumed that the output species $Y$ 
\todo{DD: oops, we should have removed this paragraph because we didn't need that assumption after all after Kim's fix}
approaches static equilibrium in the definition of stable function computation,
but we conjecture that this necessarily holds for any CRC that robustly computes a function.
This is known (and much easier to see) to hold for stable computation,
since stable computation requires the much stronger condition that we reach a ``stable'' state from which \emph{no possible} reactions could change $Y$, so clearly all reactions changing $Y$ must be disabled in a stable state.

Another possible extension of this work is employing the so-called \emph{dual-rail} representation~\cite{chen2023rate} to accommodate negative values as inputs.
This means that a (possibly negative) value $x$ is represented as the difference between two nonnegative species concentrations $X^+ - X^-$.
This is known to slightly reduce the class of stably computable functions $f:\R^k \to \R$;
such a function is stably computable using the dual-rail representation (for both inputs and output)
if and only if it is piecewise rational linear and \emph{continuous}.

In the construction given in 
\cref{thm:piecewise-affine-are-robustly-computable}, we employed a technique to modify upstream CRCs, ensuring controlled composition even when certain reactions within these CRCs consumed the output species. This structured approach to dependency management could naturally extend to the broader tricky problem of CRN composition~\cite{chalk2019composable,severson2021composable,hashemi2020composable}.

\bibliographystyle{abbrv}
\bibliography{ref}

\end{document}